\newcommand{\ba}{\begin{eqnarray}}
\newcommand{\ea}{\end{eqnarray}}
\newcommand{\ban}{\begin{eqnarray*}}
\newcommand{\ean}{\end{eqnarray*}}
\newcommand{\braket}[2]{\mbox{$ \langle #1 | #2 \rangle $}}
\newcommand{\moy}[1]{\langle #1 \rangle}
\newcommand{\sandwich}[3]{\mbox{$ \langle #1 | #2 | #3 \rangle $}}
\newcommand{\demi}{\frac{1}{2}}
\newcommand{\compl}{\mathbb{C}}
\newcommand{\real}{\mathbb{R}}
\newcommand{\nat}{\mathbb{N}}
\newcommand{\one}{\leavevmode\hbox{\small1\normalsize\kern-.33em1}}
\newcommand{\Tr}{\mathrm{Tr}}
\newtheorem{theorem}{Theorem}[section]
\newtheorem{lemma}{Lemma}[section]
\newtheorem{proposition}{Proposition}[section]
\newtheorem{corollary}{Corollary}[section]
\newcommand{\bra}[1]{{\left\langle{#1}\right\vert}}
\newcommand{\ket}[1]{{\left\vert{#1}\right\rangle}}
\newcommand{\qw}[1][-1]{\ar @{-} [0,#1]}
\newcommand{\qwx}[1][-1]{\ar @{-} [#1,0]}
\newcommand{\gate}[1]{*+<.6em>{#1} \POS ="i","i"+UR;"i"+UL **\dir{-};"i"+DL **\dir{-};"i"+DR **\dir{-};"i"+UR **\dir{-},"i" \qw}
\newcommand{\control}{*!<0em,.025em>-=-<.2em>{\bullet}}
\newcommand{\ctrl}[1]{\control \qwx[#1] \qw}
\newcommand{\lstick}[1]{*!R!<.5em,0em>=<0em>{#1}}
\newcommand{\Qcircuit}{\xymatrix @*=<0em>}
\begin{document}
\pagerange{1}{60}   
\title{The device-independent outlook on quantum physics}
\author{V.~Scarani\email{physv@nus.edu.sg}$^*$}
{$^*$ Centre for Quantum Technologies \& Department of Physics, National University of Singapore, Singapore}
\abstract{This text is an introduction to an operational outlook on Bell inequalities, which has been very fruitful in the past few years. It has lead to the recognition that Bell tests have their own place in applied quantum technologies, because they quantify non-classicality in a device-independent way, that is, without any need to describe the degrees of freedom under study and the measurements that are performed. At the more fundamental level, the same device-independent outlook has allowed the falsification of several other alternative models that could hope to reproduce the observed statistics while keeping some classical features that quantum theory denies; and it has shed new light on the long-standing quest for deriving quantum theory from physical principles.}

\pacs{03.65.Ud, 03.65.-w, 03.67.-a}

\tableofcontents 
\section{Preamble}

\subsection{A parable...}

A physicist turned bureaucrat was sent to audit the quantum laboratories of a university. He entered the first lab and found a student intent in his experiment.\\
-- ``Good morning. I was told that you do exciting physics here. What are you working on?"\\
-- ``This is a Stern-Gerlach experiment. I send a beam of atoms through this inhomogeneous magnetic field and measure the spin, a purely quantum effect."\\
-- ``Nice! This was advanced material in my undergraduate days. You needed that cumbersome Dirac notation. Can you remind me what one looks for in such an experiment?"\\
-- ``The theory says that, if the spin points in direction $\hat{n}$ and you orient the magnet in a direction $\hat{a}$, the probability of finding the atom in the upper beam is $P(+\hat{a}|\hat{n})=\demi(1+\hat{a}\cdot\hat{n})$. I am checking these predictions and it works really very well."\\
The student showed his skills and those of his machine by running the experiment. The match with the prediction was indeed remarkable. The bureaucrat was impressed and asked:\\
-- ``Can you remind me how we conclude that this is a quantum effect?"\\
-- ``Well... you see, the result of the measurement is discrete..."\\
-- ``But so it would be if I would toss a coin and observe head or tail."\\
-- ``Right, but the spin measurement is random. Take your example: the coin looks random because we don't control all the parameters; but here, the outcome is \textit{really} random. We can't predict individual events."\\
The bureaucrat paused to think, and after a while said:\\
-- ``I am not convinced. May I try something? --- No, I won't touch your delicate apparatus. I plan even to go out of this room. I shall simulate a source of spins pointing in the $\hat{z}$ direction. You stay inside, write the measurement direction on a piece of paper and pass it to me below the door. I'll try to reply as the spins would."\\
The student watched the bureaucrat close the door. Then took a piece of paper, wrote $\hat{a}=\frac{1}{2}\hat{z}+\frac{\sqrt{3}}{2}\hat{x}$ on it and slipped it under the door. Soon afterwards, a small paper square flew in: on it was written \textit{up}; then another, and another, and another... up, down, down, up, down, up, up, ... building up the expected statistics $P(up)=75\%$ and $P(down)=25\%$. After a while, the bureaucrat came back in.\\
-- ``So, am I a quantum source? Really random?" he said with a wry smile.\\
The student was flummoxed. He had been told that the Stern-Gerlach magnet produced real randomness. It came back to his mind how Feynman and Schwinger, in their textbooks, use precisely that experiment to build up quantum theory itself. Seeing his idols in danger of being brought down, he became mildly aggressive:\\
-- ``Your tricky game is unfair. It just proves that quantum theory is predictive, that we know what to expect..."\\ The bureaucrat did not wait for the end and went away, stunned by his smartness. The student was left to ponder the end of his own sentence. Indeed: we know what to expect! If this information exists (or can be easily computed by a bureaucrat), why have all been told to believe that the source of atoms does not possess that information in advance?\\

After a well deserved lunch, the bureaucrat went back to his task and knocked at the door of the second lab. A girl emerged from obscurity and politely asked what was the matter. The bureaucrat introduced his auditing role and went straight to the point:\\
-- ``Are you also doing some fundamental experiment here?"\\
-- ``Yes, sir. I am observing the violation of Bell inequalities by two entangled photons."\\
-- ``And you are surely convinced that this is really quantum, are you not? Show me the statistics you are getting and I shall show you something funny."\\
-- ``Ah, sir, you are certainly the one who gave trouble to my friend this morning. He told me what you did, it was intriguing indeed. But you can't do the same here, it would be unfair."\\
-- ``Unfair? Why? Because you want to keep your blind faith in what you were taught? This is science, not -"\\
-- ``That is not the reason, sir. You see, here we have two photons, and each is measured independently at a different location. Each photon cannot possibly know which measurement we perform on its twin, because we choose them at the last moment and the information does not have time to reach the other location."\\
-- ``And so..."\\
-- ``And so, if your simulation has to be fair, I can't give the information about both measurements to you alone. You have to come with a friend. Each of you will have to go to a different room, and without your mobile telephones. Then, I give one measurement to you and another to your friend."\\
-- ``This way, I simulate one photon and my friend the other, right? Fair enough. Now, who -"\\
A voice came from the corridor:\\
-- ``Hey, what a surprise? What are you doing here?"\\
It was a former classmate of the bureaucrat, who had persevered in the academic world and had specialized in mathematical physics. He immediately volunteered to play the friend in the simulation. They sat down together, came up with a simple strategy and underwent the test --- but, when they checked their simulation with the student, they found they had failed to reproduce the expected statistics for some of the measurements. Intrigued by the challenge, they went to the cafeteria and prepared a more elaborated common strategy while sipping a good coffee. The bureaucrat asked:\\
-- ``By the way, what is this violation of Bell inequalities the girl was speaking about? There is a great fuzz about it in some blogs that I follow."\\
-- ``Oh, it's just an obvious consequence of describing two opposite spins $\demi$ in a $C^*$-algebra. Finite-dimensional, non-relativistic quantum mechanics -- as boring as it gets."\\
They went back to the lab and tried again the simulation. It went pretty well, most of the statistics were indeed correct; but not all, and the simulation of some pairs of measurements was wide off the mark. However, the time had come at which bureaucrats are asked to stop working. He took leave from the student and from his friend and walked out in the fresh evening. He had had a very pleasant day out of office: he would recommend the funding for both projects to be renewed.

\subsection{... and its meaning}

The bureaucrat and his friend could have tried much longer: they would \textit{not} have succeeded in simulating the statistics observed in the experiment with entangled particles. \textit{This statement is the operational meaning of the violation of Bell inequalities}. This text will illustrate how \textit{exciting physics arises from this clean operational approach to Bell inequalities}.

It may be superfluous to pass explicit judgment on the characters of the story, insofar as they were playing each their natural role. Let me do it anyway, for the sake of those younger readers who may not be acquainted with the academic world:
\begin{itemize}
\item The mathematical physicist got it wrong. The violation of Bell inequalities is not a straightforward exercise in finite-dimensional quantum mechanics. It is a criterion independent of quantum physics. Finite-dimensional quantum mechanics provides excellent agreement with the observation, which should lead to appreciation rather than dismissal.
\item The bloggers won't be able to separate light and darkness from the primordial chaos in which they are immersed. To be fair, there may be something exciting at the philosophical level about the violation of Bell inequalities. However, we won't need to look outside physics to appreciate the power of Bell.
\item The students are in the process of getting it right: they just have to undergo the critical phase transition when they stop believing blindly in their supervisors and switch their own brains on (not to antagonise the supervisor, but to appreciate the supervisor's wisdom while building up an independent wisdom of their own).
\item The bureaucrat got it absolutely right in deciding to continue funding research.
\end{itemize}

\subsection{A user's guide to this text}

This text is an editing of the lecture notes of a module I am teaching within CQT's graduate programme. It is \textit{not} a review paper: I recently co-authored one such paper \cite{ourreview} and the reader is encouraged to refer to it for a comprehensive overview of the research in this field. Here, I have done a clear selection of material, focused on simple examples and kept the references to a minimum.

The readers are supposed to be familiar with quantum formalism. If this were not the case, they can refer to any of the excellent books and courses which have gone a great didactical length to introduce those notions carefully. For instance, \cite{preskill} are in open access.

\section{Bell inequalities as an operational notion}
\label{secbell}

\subsection{Introductory matters}

\subsubsection{Bell experiments}

This text deals with the description of a \textit{very specific class of experiments} sketched in Fig.~\ref{fig5bellexp}. Two parties Alice and Bob are at distinct locations. Each has a measurement device, which shall be treated as a \textit{black box} with an input (say, a knob, to choose the measurement setting) and an output (to record the result). In each run of the experiment, each party sets the knob at a randomly chosen position and receives an outcome. After repeating the procedure several times, Alice and Bob come together (or exchange information via communication) and compute the joint statistics of their observations\footnote{In this text, Alice and Bob are always the verifiers that operate the black boxes: their role is the one described here, namely to choose measurement settings, record outcomes and compute statistics. In some papers, Alice and Bob are rather those who receive the settings from a referee and are supposed to produce outcomes according to the desired statistics (think of an experimentalist in each box, who has control over the internal mechanisms). In this paper, only in paragraph \ref{sshowmuch} will it be convenient to give names to the simulators, and I shall use Anthony and Beatrix for that.}.

\begin{figure}[ht]
\begin{center}
\includegraphics[scale=0.60]{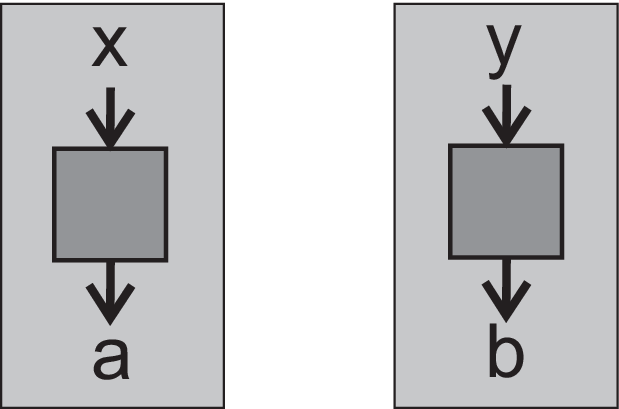}
\caption{Bipartite Bell experiment. Notice that we do not need to specify a ``channel" between the locations: the boxes may have been pre-loaded with shared information (classical or quantum), but in the Bell test Alice and Bob act in a completely independent way.}
\label{fig5bellexp}
\end{center}
\end{figure}

Such experiments have been repeatedly realized in the last few decades but were first proposed as \textit{Gedankenexperimente}. The famous 1935 paper by Einstein, Podolski and Rosen \cite{epr} is obviously a precursor, but the setup they presented would not allow to have two spatially separated measurement stations (see Appendix \ref{aepr}) which, as we shall see, is crucial in our present understanding. This became possible only when David Bohm in 1951 rephrased the EPR argument using entanglement of internal degrees of freedom. It is essentially on Bohm's description that most of the subsequent work relies, including certainly the work of John Bell that eventually put the discussion on the right grounds \cite{bell64}. Nowadays, this class of experiment is referred to as EPR (or EPR-like) experiments, Bell-Bohm experiments, or simply Bell experiments; I shall use the latter.

Let us fix the notation: the possible measurement setting for Alice will be denoted $x\in{\cal X}=\{1,...,M_A\}$, her outcome $a\in{\cal A}=\{1,...,m_A\}$. Similarly, Bob's setting is $y\in{\cal Y}=\{1,...,M_B\}$ and his outcome is $b\in{\cal B}=\{1,...,m_B\}$. Two points are worth stressing:
\begin{itemize}
\item Alice and Bob are not requested to know what each input (each position of the knob) corresponds to inside the device. For all they know \textit{a priori}, it is even possible that all the positions of the knob correspond to the same physical operation inside the box; of course, if this is the case, \textit{a posteriori} they will observe that their joint statistics do not change with the input.
\item The fact that the outcomes are discrete does not imply that quantum degrees of freedom, let alone finite-dimensional ones, are being measured: the boxes may be measuring the frequency of light beams with devices that group the results according to the traditional seven colors of the rainbow.
\end{itemize}
For simplicity of the presentation, let us temporarily assume that \textit{the behavior of the black boxes is the same in each run} (I'll show how to remove this assumption in paragraph \ref{ssgill}). Alice and Bob are sampling from the family of $M_A\times M_B$ probability distributions\footnote{All along this text, I shall use the notation of conditional probabilities $P(a,b|x,y)$, to be read ``the statistics of the outcomes, given the settings". The distribution of settings $x,y$ will never be used: for the purposes of this text, one could just as well treat the settings as parameters labeling probability distribution and write $P_{x,y}(a,b)$ instead.}
\ba\label{genbip}
{\cal P}_{\cal X,\cal Y}&=&\big\{P(a,b|x,y)\,,\,a\in{\cal A},b\in{\cal B}\big\}_{x\in{\cal X},y\in{\cal Y}}\,.
\ea
The statistics ${\cal P}_{\cal X,\cal Y}$ will be called the \textit{observed statistics}: strictly speaking, any experiment involves a finite number of samples and what is really observed is an estimate, whose accuracy can be quantified with the usual statistical techniques.

\subsubsection{Describing the observed statistics}

Without loss of generality, we can write
\ba
P(a,b|x,y)&=&\int d\lambda\,\rho(\lambda|x,y)\,P(a,b|x,y,\lambda)\label{myfavorite}
\ea where $\rho(\lambda|x,y)\geq 0$, $\int d\lambda\,\rho(\lambda|x,y)=1$ and where all the $P(a,b|x,y,\lambda)$ are valid probability distributions. Here, $\lambda$ can be called ``one's favorite explanation": the mathematical description of the process one wants to invoke to explain the observed statistics.

For instance, if \textit{quantum theory} is one's favorite explanation, one will look for a state $\rho$, for $M_A$ POVMs ${\cal M}^x=\{E_a^x|a\in{\cal{A}}\}$ and for $M_B$ POVMs ${\cal M}^y=\{E_b^y|b\in{\cal{B}}\}$, such that $\rho(\lambda|x,y)=\delta(\lambda-\rho)$ and $P(a,b|x,y,\lambda)=\textrm{Tr}(\lambda\,E_a^x\otimes\,E_b^y)$. Thus, one obtains the familiar expression from quantum theory
\ba
P_Q(a,b|x,y)&=& \textrm{Tr}(\rho\,E_a^x\,\otimes\,E_b^y)\,.\label{qprob}
\ea
Another important class of explanations are \textit{deterministic explanations}, in which the outcomes are uniquely determined by the inputs:
\ba
P(a,b|x,y,\lambda)\stackrel{D}{=}\delta_{(a,b)=F(x,y,\lambda)}\,=\,\delta_{a=f(x,y,\lambda)}\delta_{b=g(x,y,\lambda)}\,.\label{expdet}
\ea
The second expression says that, if the pair $(a,b)$ is uniquely determined from the input, then $a$ is uniquely determined and $b$ is uniquely determined.

\subsubsection{Last synopsis before the real start}

In the remainder of this text, unless stated otherwise, I shall assume that \textit{quantum theory gives accurate predictions} or, in other words, that all the \textit{observed} statistics can be obtained with quantum theory and written in the form (\ref{qprob}). The goal is to put quantum theory to the test by not taking it as our favorite \textit{explanation} and trying alternative ones. As we are going to see, all the alternatives that have been tried fail at some point. This brings two scientific benefits: a falsification of the apparently reasonable ideas that lead to formulate the alternative explanations; and a strengthening (if not a confirmation) of quantum theory itself\footnote{While the first benefit is undeniable, the second is debatable: in absolute terms, quantum physics is first and foremost strengthened by the unparalleled scope of its successful predictions. However, there is a benefit in addressing the core tenets of quantum theory in a simple and direct way, instead of just being crushed under the sheer mass of its achievements --- and, in my experience, this is certainly the way to follow in outreach beyond the physics community.}.

Concretely, there is general agreement that the explanation of Bell experiments fully compatible with classical prejudices is given by the so-called \textit{local-variable (LV) models}. Most of this text is devoted to presenting this classical explanation, its failure to reproduce observed statistics (captured by the observation of the violation of Bell inequalities) and the consequences that this fact entails. Once a fully classical explanation is ruled out, one can try to save at least some features of our classical intuition. Remarkably, even those such models that have been proposed fail to reproduce all the predictions of quantum theory (more in Section \ref{secrobust}).

\subsection{Pre-established agreement a.k.a. local variables (LV) a.k.a. shared randomness}

Correlations between distant parties are commonplace in our classical world. For instance, all the agents of a bank delegated to various stock markets start simultaneously selling or buying some bonds, depending on an input which could be the result of a political election. There is no miracle in this simultaneity: either they all received a message from the central office or the senior agent; or, even more probably, they had agreed in advance on how to behave. This example illustrates the only \textit{two classical mechanisms that explain correlations between distant parties: communication (a.k.a. signaling) and pre-established agreement}. Let us leave signaling aside for the moment and describe pre-established agreement.

\subsubsection{Formal characterization of local variables}
\label{deflv}

The idea of pre-established agreement is that the behavior of the parties must be planned in advance for every possible pair of inputs, and in such a way that each party can produce the outcome upon receiving his/her input and not the other party's. Explicitly:
\begin{enumerate}
\item In each run, $\lambda$ may contain information about everything but the values of $x$ and $y$ to be used in that run. Formally, \ba \rho(\lambda|x,y)&=&\rho(\lambda)\,,\label{measind}\ea a condition usually called \textit{measurement independence}.
\item Given $\lambda$, the process that generates $a$ may be stochastic but should not depend on $y$; and the process that generates $b$ may be stochastic but should not depend on $x$. Formally,
\ba
P(a,b|x,y,\lambda)&=&P(a|x,\lambda)P(b|y,\lambda)\,.\label{lhvdef}
\ea
\end{enumerate}

Putting everything together, we obtain
\ba
P_{LV}(a,b|x,y)&=&\int d\lambda\,\rho(\lambda)\,P(a|x,\lambda)P(b|y,\lambda)\,.\label{lvprob}
\ea I have just used the notation ``LV" to mean \textit{local variables}, the expression used in physics to refer to pre-established agreement, which I shall use for its compactness though I find it mildly confusing\footnote{The traditional expression, still widely used, is \textit{local hidden variables}, which carries its unfortunate weight of mysticism. Notably, the adjective ``hidden" is misleading: as we saw, $\lambda$ may describe all kind of information, public or hidden, as long as it does not correlate with $x$ and $y$. The adjective ``local", it is meant to convey two messages: (i) the pre-established agreement could have been worked out when all the agents were at the same location (in physics, when the signals to be measured were created in the source); and (ii) later, each agent acts according only to the information available at its own location. ``Locality" is therefore relevant, but must be understood in a precise sense.}. In computer science, pre-established agreement is usually called \textit{shared randomness}.

Since the notion of LV is central to this course, it is important to appreciate its features, at the risk of being mildly redundant:
\begin{itemize}
\item At first sight, it may seem counter-intuitive that correlations are created by averaging over the $\lambda$'s, because naively one is accustomed to think that statistical distributions wash correlations out. Of course, it depends on \textit{what} one averages over: averaging over the outcomes definitely washes correlations out; averaging over the instructions $\lambda$ is the origin of correlations in scenarios of pre-established agreement.
\item The fact that the statistics of $a$ ($b$) should not depend on $y$ ($x$) is called \textit{no-signaling condition}. Indeed, if Alice could guess some information about $y$ by observing $a$, Bob, who can choose $y$, could send her a message through that channel. When it comes to pre-established agreement,
\ba P(a|x,y,\lambda)=P(a|x,\lambda)&,&P(b|x,y,\lambda)=P(b|y,\lambda)\label{defns}\ea holds for each $\lambda$; but for the observed $P(a,b|x,y)$ to be no-signaling, it is also necessary that \eqref{measind} holds
\item There has been a lot of debate on how best to present the pre-established agreement condition. I tried to keep the presentation to its minimal aspects\footnote{One approach, initiated by Jarrett in 1988, splits \eqref{lhvdef} in two conditions called ``outcome independence" and ``parameter independence", the latter being basically no-signaling (see e.g. \cite{Hall2011}). I adopted this approach in the first version of this course, but I have come to think that it is artificially complicated. Another approach maintains that one must present the two measurement processes in a spacelike separated configuration, where $a$ depends only on what is in its past light cone, and $x$ is chosen as to be only in $a$'s past light cone but not in $b$'s (see e.g. \cite{norsen}). This presentation is indeed striking, but I don't agree that spacelike separation is part of the core assumptions (more in \ref{bellspace}). For further inspiring reading, one can refer to \cite{gisinbellprize,spekkens,gill2012}.}.
\end{itemize}

\subsubsection{Deterministic local variables}
\label{ssdet}

In the example, we have naturally assumed that the agents know exactly what to do when they get to know the winner of the elections. No such restriction was imposed on the mathematics: $P(a|x,\lambda)$ and $P(b|y,\lambda)$ have only been required to be valid probability distributions. \textit{Deterministic local variables} are a special case of LV in which, for any input, the outcome is fully determined by $\lambda$:
\ba
P(a|x,\lambda)\stackrel{DL}{=}\delta_{a=f(x,\lambda)}&,& P(b|y,\lambda)\stackrel{DL}{=}\delta_{b=g(y,\lambda)}\,.
\ea An equivalent way of characterizing deterministic local variables consists in just giving the list of outcomes for all possible inputs:
\ba
\lambda&\stackrel{DL}{\equiv}& \{a_1,a_2,...,a_{M_A};b_1,b_2,...,b_{M_B}\}\,\in{\cal A}^{|{\cal X}|}\times {\cal B}^{|{\cal Y}|}\,,\label{lambdalv}
\ea the link with the previous notation being $a_x=f(x,\lambda)$ and $b_y=g(y,\lambda)$. From this notation, it is obvious that the number of deterministic local points is $m_A^{M_A}\,m_B^{M_B}$. 

The importance of deterministic LV is provided by the following observation, first proved in \cite{fine}:
\begin{proposition}\label{prop21}
A family of probability distributions ${\cal P}_{\cal X,\cal Y}$ can be explained with pre-established agreement if and only if it can be explained with deterministic local variables.
\end{proposition}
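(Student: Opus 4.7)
The statement is an ``if and only if'', and one direction is immediate: a deterministic LV model is by definition an LV model (take $P(a|x,\lambda)=\delta_{a=f(x,\lambda)}$ in \eqref{lvprob}), so existence of a deterministic LV decomposition trivially implies existence of an LV decomposition. All the content lies in the converse: starting from an arbitrary LV decomposition \eqref{lvprob} of ${\cal P}_{\cal X,\cal Y}$, manufacture a deterministic one.

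My plan is to absorb the single-party randomness into an enlarged hidden variable, using the fact that once $\lambda$ is fixed the distributions $P(a|x,\lambda)$ and $P(b|y,\lambda)$ factor across the parties. Concretely, for each fixed $\lambda$ I would introduce two new ``strategy'' variables: a function $f:{\cal X}\to{\cal A}$ for Alice and $g:{\cal Y}\to{\cal B}$ for Bob. I would sample $f$ by drawing, independently for every setting $x$, an outcome $f(x)\in{\cal A}$ according to the distribution $P(\cdot|x,\lambda)$, and likewise for $g$. This defines
\[
q_A(f|\lambda)\,=\,\prod_{x\in{\cal X}}P(f(x)|x,\lambda)\,,\qquad q_B(g|\lambda)\,=\,\prod_{y\in{\cal Y}}P(g(y)|y,\lambda)\,,
\]
each a valid probability distribution on the finite sets ${\cal A}^{|{\cal X}|}$ and ${\cal B}^{|{\cal Y}|}$ of strategies. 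The key elementary identity I would verify is that marginalising over all coordinates except $x$ reproduces the original conditional: $\sum_{f:\,f(x)=a}q_A(f|\lambda)=P(a|x,\lambda)$, and similarly for $B$.

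Substituting this into \eqref{lvprob} and interchanging the finite sums with the integral over $\lambda$, I obtain
\[
P(a,b|x,y)\,=\,\sum_{f,g}\tilde\rho(f,g)\,\delta_{a=f(x)}\,\delta_{b=g(y)}\qquad\textrm{with}\qquad \tilde\rho(f,g)\,=\,\int d\lambda\,\rho(\lambda)\,q_A(f|\lambda)\,q_B(g|\lambda)\,.
\]
The new hidden variable is $\lambda'=(f,g)$, distributed according to $\tilde\rho$, which is non-negative and normalised by construction; and the response functions are the deterministic $a=f(x)$, $b=g(y)$, matching \eqref{lambdalv}. This is by definition a deterministic LV decomposition, completing the converse.

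The only step that requires a moment of care is the factorisation $P(a|x,\lambda)P(b|y,\lambda)$ surviving the reshuffling: it is what allows the strategies for Alice and Bob to be drawn independently, so that $\tilde\rho$ inherits a clean product form in $(f,g)$ conditional on $\lambda$. If one started from a decomposition that was only no-signalling rather than LV, this step would fail, and indeed the proposition would fail — which is a useful sanity check on where the argument genuinely uses \eqref{lhvdef}. A minor technical remark: I am tacitly assuming the sets ${\cal A},{\cal B},{\cal X},{\cal Y}$ are finite, as stated at the outset of the section, so the sums over $f$ and $g$ have finitely many terms and no measurability issues arise.
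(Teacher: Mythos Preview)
Your proof is correct. Both directions are handled properly, and the key marginalisation identity $\sum_{f:f(x)=a}q_A(f\mid\lambda)=P(a\mid x,\lambda)$ is exactly what makes the construction work; crucially, your distribution $q_A(f\mid\lambda)$ does not depend on the setting $x$ actually used, so measurement independence survives (this is the trap the paper warns about in its footnote, and you avoid it).

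The route, however, differs from the paper's. The paper determinises each party's local randomness by the inverse-CDF trick: it introduces a single continuous auxiliary $\mu_A\in[0,1]$ (uniform, independent of everything else) and outputs $a$ according to where $\mu_A$ falls in the cumulative distribution $\Sigma(a)=\sum_{\alpha\le a}P(\alpha\mid x,\lambda)$; the enlarged hidden variable is $(\lambda,\mu_A,\mu_B)$. You instead enumerate the finite strategy spaces ${\cal A}^{|{\cal X}|}$ and ${\cal B}^{|{\cal Y}|}$ directly, put the product measure on them, and integrate out $\lambda$ at once. Your construction lands immediately on the finite convex decomposition over the $m_A^{M_A}m_B^{M_B}$ deterministic points that the paper only extracts as a subsequent remark after \eqref{lvdet}, so in the finite setting it is arguably the more economical argument. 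The paper's inverse-CDF approach, on the other hand, does not need the input and output alphabets to be finite and makes transparent the connection to standard sampling techniques; your closing caveat about finiteness is therefore not a throwaway remark but precisely the point where the two arguments diverge in scope.
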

\begin{proof} The ``if" direction is obvious. For the reverse, for any fixed $\lambda$, we are going to construct a deterministic model that gives the same statistics as the initial stochastic model\footnote{An interesting mistake (because I committed it myself!) is to believe that the proof is trivial, because it would just be the fact that any probability distribution can be seen as the result of ignorance. This reasoning would lead to the decomposition $P(a|x,\lambda)=\sum_{\mu}q_\mu(x,\lambda)\, \delta_{a=f(x,\mu,\lambda)}$ where $\mu$ labels one of the deterministic points. But $q_\mu(x,\lambda)$ depends on $x$ a priori: if we insert this expression, and the analog one for B, into \eqref{lvprob}, we get a dependence on $x$ and $y$ in the distribution $\rho'(\lambda,\mu_A,\mu_B)$. This is why the proof is not entirely trivial.}. Let's label ${\cal A}=\{1,2,...,m_A\}$ for convenience: the cumulative distribution $\Sigma(a)=\sum_{\alpha\leq a}P(\alpha|x,\lambda)$ can obviously be computed in the LV model. Let us then add a new local parameter $\mu_A$ drawn at random between 0 and 1, then output $a$ according to the following deterministic rule:
\ba
P_d(a|x,\lambda,\mu)&=&\left\{\begin{array}{ll} 1&\textrm{if } \Sigma(a-1)\leq \mu_A < \Sigma(a)\,,\\ 0 &\textrm{otherwise.}
\end{array}\right.\label{pdetproof}
\ea
If $\mu$ is drawn with uniform distribution, the original stochastic distribution is recovered:
\ban
\int_0^1 d\mu\,P_d(a|x,\lambda,\mu)&=& \int_{\Sigma(a-1)}^{\Sigma(a)} d\mu \,=\,P(a|x,\lambda)\,.
\ean
Therefore \eqref{lvprob} can be rewritten as
\ba
P_{LV}(a,b|x,y)&=&\int d\lambda \rho(\lambda)\int_0^1 d\mu_A \int_0^1 d\mu_B\,P_d(a|x,\lambda,\mu_A)P_d(b|y,\lambda,\mu_B)\,,\label{lvdet}
\ea
which is the desired \textit{convex sum} of deterministic LV for the enlarged variable $\lambda'\equiv(\lambda,\mu_A,\mu_B)$ with distribution $\rho'(\lambda')d\lambda'=\rho(\lambda)d\lambda d\mu_A d\mu_B$. \end{proof}

It follows immediately that the finite set of $\lambda$'s defined by the $m_A^{M_A}\,m_B^{M_B}$ deterministic local point is sufficient to describe any LV statistics. Indeed, each $P_d(a|x,\lambda,\mu)$ given in \eqref{pdetproof} is one of the $m_A^{M_A}$ deterministic points for A; and similarly for $B$. By grouping the terms according to the local deterministic points, \eqref{lvdet} becomes
\ba
P_{LV}(a,b|x,y)&=&\sum_{j=1}^{m_A^{M_A}} \sum_{k=1}^{m_B^{M_B}} \rho_{jk}\,\delta_{a=f_j(x)}\,\delta_{b=g_k(y)}
\ea with $\lambda\equiv(j,k)$ and $\sum_{j,k} \rho_{jk}=1$.

Therefore LV statistics can always be explained by a deterministic model. Of course, this does not mean that such an explanation must necessarily be adopted: your favorite explanation, as well as the ``real" phenomenon, may not involve determinism. For instance, as we shall see soon, measurement on separable quantum states leads to LV statistics, but this does not make quantum theory deterministic (if that is your favorite explanation), nor forces us to believe that the physical phenomenon ``out there" is deterministic.

Finally, Proposition \ref{prop21} has an interesting translation using the notation \eqref{lambdalv}:
\begin{corollary}
${\cal P}_{\cal X,\cal Y}$ can be explained with pre-established agreement if and only if there exist a \textit{joint probability distribution} $\mathbf{P}(a_1,...a_{M_A};b_1,...,b_{M_B})$, such that each $P(a,b|x,y)$ can be obtained as the two-party marginal
\ba
P_{LV}(a,b|x,y)&=&\sum_{\{a_j|j\neq x\}}\sum_{\{b_k|k\neq y\}}\mathbf{P}(a_1,...a_{M_A};b_1,...,b_{M_B})\,.\label{lvmarg}
\ea
\end{corollary}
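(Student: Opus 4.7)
The plan is to exhibit the claimed joint distribution and its corresponding LV model by directly exploiting Proposition \ref{prop21}, since the latter already reduces any LV model to a convex combination of the $m_A^{M_A}\,m_B^{M_B}$ deterministic local points, and these points are themselves parametrized exactly by tuples $(a_1,\dots,a_{M_A};b_1,\dots,b_{M_B})$ as in \eqref{lambdalv}.

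For the ``only if'' direction, I would start from an LV decomposition of ${\cal P}_{\cal X,\cal Y}$, invoke Proposition \ref{prop21} to rewrite it in the final form obtained just above the corollary, namely $P_{LV}(a,b|x,y)=\sum_{j,k}\rho_{jk}\,\delta_{a=f_j(x)}\,\delta_{b=g_k(y)}$, and then define
\ban
\mathbf{P}(a_1,\dots,a_{M_A};b_1,\dots,b_{M_B})&=&\sum_{j,k}\rho_{jk}\,\prod_{x}\delta_{a_x=f_j(x)}\prod_{y}\delta_{b_y=g_k(y)}\,.
\ean
This is manifestly a probability distribution (nonnegative, normalized because $\sum_{j,k}\rho_{jk}=1$ and the Kronecker deltas fix the tuple uniquely for each $(j,k)$), and summing it over all $a_{x'}$ with $x'\neq x$ and all $b_{y'}$ with $y'\neq y$ leaves only the ``$x$-th'' and ``$y$-th'' deltas, reproducing \eqref{lvmarg}.

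For the ``if'' direction, I would do the opposite: given $\mathbf{P}$, define the hidden variable $\lambda=(a_1,\dots,a_{M_A};b_1,\dots,b_{M_B})$ with prior $\rho(\lambda)=\mathbf{P}(\lambda)$, which is independent of $x,y$ and so automatically satisfies measurement independence \eqref{measind}. Then take the deterministic response functions $f(x,\lambda)=a_x$ and $g(y,\lambda)=b_y$, so that $P(a|x,\lambda)=\delta_{a=a_x}$ and $P(b|y,\lambda)=\delta_{b=b_y}$ factorize as in \eqref{lhvdef}. Inserting into \eqref{lvprob} and using the marginalization assumption on $\mathbf{P}$ yields exactly $P(a,b|x,y)$, so the model is a (deterministic) LV model.

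There is essentially no hard step here; the only thing that requires care is making the correct identification between ``a distribution over deterministic strategies'' and ``a joint distribution over counterfactual outcomes of all possible measurements.'' In particular, one must resist the temptation to think of $\mathbf{P}$ as a distribution over physically realized outcomes (only one pair $(a_x,b_y)$ is ever revealed per run), and instead view it as a distribution over the LV instruction tables of \eqref{lambdalv}. Once that conceptual point is made explicit, both implications are essentially a relabeling, and the corollary follows with no further work beyond Proposition \ref{prop21}.
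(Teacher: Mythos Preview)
Your proposal is correct and matches the paper's approach exactly: the paper does not give a separate proof of the corollary but presents it as a direct ``translation'' of Proposition~\ref{prop21} via the identification \eqref{lambdalv} of deterministic local variables with tuples $(a_1,\dots,a_{M_A};b_1,\dots,b_{M_B})$, which is precisely what you spell out in both directions. Your explicit construction of $\mathbf{P}$ as the pushforward of $\rho_{jk}$ and the converse identification $\lambda=(a_1,\dots,b_{M_B})$, $\rho(\lambda)=\mathbf{P}(\lambda)$ are the intended relabelings.
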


\subsection{The power of LV}
\label{powerlv}

At this stage, physicists usually hurry forward and prove that quantum statistics cannot be explained with LV. This is of course my goal as well, but too much haste in taking this step may convey the impression that the LV models are silly and artificial, and that it is only normal that quantum correlations don't fit in that class. I want to devote one paragraph to explain what LV models could do for you, beyond explaining coordinated behavior in stock market bidding.

\begin{itemize}
\item We have described the Bell experiment using two parties because \textit{quantum statistics involving one party can always be reproduced with LV}; which means, according to what we just proved, that they can even be reproduced by a \textit{deterministic} explanation. The reason is that $\lambda$ may contain the description of the quantum state $\rho$: one can compute on paper the probability distribution for any measurement, then just draw the outcomes according to that distribution using pseudo-randomness. As the opening parable illustrates, this statement is obvious once one thinks freely about it, but is nevertheless puzzling for someone (the student) who has been (de)formed in thinking that single-party phenomena already unveil the intrinsic randomness of the quantum. To be sure, one can discover many quantum effects in single-party measurements, but not in a black-box scenario: further assumptions are needed\footnote{The \textit{knowledge of the physical degree of freedom} usually provides the additional assumption. For instance, the blackbody radiation or the Stern-Gerlach experiment can be proved to be non-classical once one knows that the electromagnetic field, respectively a magnetic moment, is being measured. Similarly, there are often very good reasons to \textit{describe the system as composite}, even if all the components contribute to the same signal: the prototypical examples come from condensed matter physics, where one wants to describe conductivity, magentization etc as arising not from an unspecified piece of matter, but from an arrangement of many atoms. Tests like ``contextuality" \`a la Kochen-Specker, or sequential measurements \`a la Feynman or Leggett-Garg, need a minimal amount of assumptions to prove that the outcomes do not come from a pre-established agreement. Indeed, no detailed knowledge of the system and the measurement is needed, but one must assume that \textit{the measurement device does not write in, nor reads from, other degrees of freedom than the relevant one}.}.

Though anecdotical, it is instructive to present a pretty compact LV model that reproduces exactly the quantum statistics of a \textit{single qubit}. If the qubit is in the state $\frac{1}{2}(1+\vec{m}\cdot\vec{\sigma})$, the quantum prediction for measurement along direction $\vec{a}$ is
\ba
P(a|\vec{a})=\frac{1}{2}(1+a\vec{m}\cdot\vec{a})\,,&\textrm{that is}& \moy{a}_{\vec{a}}=\vec{m}\cdot\vec{a}
\ea for $a\in\{-1,+1\}$. In the LV model, the system is represented by $\vec{m}$ and by a unit vector $\vec{\lambda}$ uniformly distributed\footnote{Nothing in this model requires $\vec{\lambda}$ to be ``drawn at random" in each run: the sequence of $\vec{\lambda}$ may be pre-registered.} on the sphere, i.e. $\rho(\vec{\lambda})d\vec{\lambda}=\frac{1}{4\pi}\sin\theta d\theta d\varphi$. For each $\vec{\lambda}$, the outcome is deterministically computed to be
\ba
a(\vec{\lambda})&=&\textrm{sign}\big[(\vec{m}-\vec{\lambda})\cdot\vec{a}\big]
\ea which is either $+1$ or $-1$ as it should. It is then easy to prove\footnote{Without loss of generality, one can choose $\vec{a}=\vec{z}\equiv (\theta=0,\varphi)$, since nothing else in the problem specifies the choice of spherical coordinates.} that
\ba
\moy{a}_{\vec{a}}&=&\int d\vec{\lambda}\,\rho(\vec{\lambda})\, a(\vec{\lambda})\,=\,\vec{m}\cdot\vec{a}\,.
\ea
Another nice illustrative example is the calculation of the \textit{double slit experiment with Bohmian trajectories}, which reproduces the quantum interferences while being also able to say through which slit the particle has passed. To conclude on a personal note: if quantum physics would consist only of single-party measurements, I would not see any compelling reason to believe in its intrinsic randomness.

\item LV can also simulate several bipartite scenarios. One of the most obvious ${\cal P}_{\cal X,\cal Y}$ that can be described with LV is exactly the one that is presented in popular lore as an astonishing feat of quantum entanglement: the fact that, when they share a singlet state, \textit{Alice and Bob observe always opposite results when they measure in the same direction}, i.e.
\ba
\moy{ab}_{\vec{a}=\vec{b}}=-1\,.\label{anticorr}
\ea Indeed, any list
\ba
\lambda&=&\big\{..., a_{\vec{u}},a_{\vec{v}},...; ..., b_{\vec{u}},b_{\vec{v}},...\big\}
\ea deterministically obtains (\ref{anticorr}) as soon as $a_{\vec{u}}=-b_{\vec{u}}$ for all possible directions $\vec{u}$.

If now we want to create correlations, we need to average over various $\lambda$. For the sake of illustration, let us restrict ourselves to two possible measurements on each side, so $\lambda=\{a_{\vec{u}},a_{\vec{v}};b_{\vec{u}},b_{\vec{v}}\}$. In order to satisfy (\ref{anticorr}) we can have
\ba
\begin{array}{lcl}
\lambda_1&=&(+,+;-,-)\\
\lambda_2&=&(+,-;-,+)\\
\lambda_3&=&(-,+;+,-)\\
\lambda_4&=&(-,-;+,+)
\end{array}
\ea each drawn with probability $\mathbf{P}(\lambda_k)\equiv q_k$. Moreover, any choice such that $q_1=q_4$ and $q_2=q_3=\demi-q_1$ reproduces $\moy{a}_{\vec{u}}=\moy{a}_{\vec{v}}=0$. By choosing $q_1=\frac{1}{4}(1+\vec{u}\cdot\vec{v})$, we can even reproduce the quantum predictions for the case where the two measurements are not the same, i.e. $\moy{ab}_{\vec{u},\vec{v}}=\moy{ab}_{\vec{v},\vec{u}}=-\vec{u}\cdot\vec{v}$.

This result is more striking than its simplicity suggests. Consider the case $\vec{u}\perp\vec{v}$, for instance $\vec{u}=\vec{x}$ and $\vec{v}=\vec{y}$: the statistics we have just reproduced with LV are the expected correlations for an error-free run of the BB84 protocol for quantum key distribution\footnote{In BB84, one expects \textit{correlations} rather than anticorrelations; but it is trivial to obtain the former from the latter by classical post-processing: for instance, one can ask Bob to flip systematically his bit.}! This means that the security proofs of that protocol must be based on addditional assumptions, other than the mere observation of those statistics, and indeed upon closer inspection it assumes that qubits are being measured (this observation was crucial in the emergence of the device-independent outlook, see Appendix \ref{sspathdi}).

Ultimately, everything will fall into place: we are going to prove that the whole set of quantum predictions $\moy{ab}_{\vec{a},\vec{b}}=-\vec{a}\cdot\vec{b}$ for \textit{any} pair of directions \textit{cannot} be reproduced with LV. Before that, please bear with one more example of the power of LV --- and above all, never again invoke perfect anticorrelations as an evidence for entanglement.

\item Our second example of bipartite statistics that can be simulated with LV are those in which \textit{one of the parties performs only one measurement}. Once again, when one thinks about it, it is clear that this should be the case. Indeed, the set of \textit{possible} measurements for both boxes can be part of the information in $\lambda$. If Alice's set contains a single element, $|{\cal X}|=1$, then Alice's \textit{actual} measurement in each individual run is also known. In this case, $\lambda$ can determine Alice's outcome and distribute Bob's outcome accordingly for all his possible measurements. Therefore, in order to rule out an explanation by pre-established agreement, one must consider families ${\cal P}_{\cal X,\cal Y}$ such that both $|{\cal X}|>1$ and $|{\cal Y}|>1$. 

\end{itemize}

\subsection{Bell inequalities and their violation}

\subsubsection{Bell inequalities as facets of the local polytope}

The starting point for our study of Bell inequalities is the following observation: for any fixed scenario $({\cal X},{\cal A};{\cal Y},{\cal B})$, \textit{the set ${\cal L}$ of all the families of probability distributions that can be obtained with LV is convex}. In other words, if ${\cal P}_{1}\in{\cal L}$ and ${\cal P}_{2}\in{\cal L}$, then $q{\cal P}_{1}+(1-q){\cal P}_{2}\in{\cal L}$ for all $q\in [0,1]$. This is clear from the interpretation: $\lambda$ can contain the information of whether ${\cal P}_{1}$ or ${\cal P}_{2}$ is realized in each run. Presently we need to describe the convex set ${\cal L}$ in more detail.

A convex set is fully determined by specifying all its extremal points, i.e. those points that cannot be written as convex combinations of other points. We know from Proposition \ref{prop21} that any ${\cal P}\in{\cal L}$ can be written as a convex sum of deterministic LV, and it is easy to convince oneself that \textit{each deterministic local point is an extremal point of ${\cal L}$}. Moreover, there are finitely many such points, precisely $m_A^{M_A}\,m_B^{M_B}$, as we noted above. A convex set with finitely many extremal points is concisely referred to as ``polytope", so ${\cal L}$ will be called the \textit{local polytope} for the scenario $({\cal X},{\cal A};{\cal Y},{\cal B})$.

A polytope ${\cal L}$ embedded in $\real^D$ is delimited by $(D-1)$-dimensional hyperplanes called \textit{facets}. Such a hyperplane must have at least $D$ extremal points lying on it, while all the other extremal points must be found on the same side (if some extremal points are on one side and others on the other, then the hyperplane cuts through the polytope and is not a facet). Mathematically, let $n\in\real^D$ is the vector normal to the facet and oriented outside the polytope: if the equation for the points ${\cal P}$ of the facet is $n\cdot{\cal P}=f$, then
\ba
n\cdot{\cal P}\,\leq\,f&\textrm{for all}& {\cal P}\in{\cal L}\,.\label{bigeom}
\ea
In other words, if $n\cdot{\cal P}>f$, the point ${\cal P}$ cannot belong to ${\cal L}$.

In the specific case of a probability polytope like ${\cal L}$, some facets are given by the equations $P(a,b|x,y)=0$ and $P(a,b|x,y)=1$. Such facets are called \textit{trivial} because they are not proper to ${\cal L}$, indeed they describe the constraints $0\leq P(a,b|x,y)\leq 1$ that any probability distribution must satisfy. There must be other facets, however, which capture the constraints proper to ${\cal L}$. \textit{The inequalities (\ref{bigeom}) associated to the non-trivial facets of ${\cal L}$ are the Bell inequalities\footnote{To set history straight, such inequalities were noticed pretty early in statistics: Boole certainly describes them. But in those classical days, neither he nor anyone else considered the possibility of their violation. By contrast, John Bell derived a single inequality, which is not even a facet but a lower dimensional hyperline on a facet, because he used additional constraints in the derivation. Nevertheless, he made the great step of pioneering the method in the study of quantum physics; so, at least in physics, inequalities of this type are generically named after him.} for the scenario under study}.

Before studying a specific example, we have to determine the minimal $D$ such that ${\cal L}\subset \real^D$. Generically, a family ${\cal P}_{\cal X,\cal Y}$ contains $M_AM_B$ probability distributions, each of which is specified by $m_Am_B$ positive numbers constrained to sum to 1; so it can be fully specified by giving $D_{\mathrm{total}}=M_AM_B(m_Am_B-1)$ independent numbers. The ${\cal P}_{\cal X,\cal Y}$ that can be reproduced by LV satisfy the additional \textit{no-signaling} constraints, namely $\sum_bP(a,b|x,y)=\sum_bP(a,b|x,y')=P(a|x)$ for all $y,y'\in{\cal Y}$, and $\sum_aP(a,b|x,y)=\sum_aP(a,b|x',y)=P(b|y)$ for all $x,x'\in{\cal X}$. The counting can be done as follows. One can first take the marginals as independent parameters: there are $M_A(m_A-1)$ independent $P(a|x)$, and $M_B(m_B-1)$ independent $P(b|y)$. Consider now any choice of $(x,y)$: for every fixed $b=\bar{b}$, once the marginal $P(a)$ is given, one is left with $m_A-1$ independent numbers $P(a,\bar{b})$; similarly, for every fixed $a=\bar{a}$, once the marginal $P(b)$ is given, one is left with $m_B-1$ independent numbers $P(\bar{a},b)$. All in all, a no-signaling ${\cal P}_{\cal X,\cal Y}$ can be fully specified by giving
\ba
D_{\mathrm{NS}}&=&M_AM_B(m_A-1)(m_B-1)+M_A(m_A-1)+M_B(m_B-1)\label{dns}
\ea independent numbers. Since our goal is to compare LV with quantum physics, which is also no-signaling, it is sufficient to describe \textit{$\cal L$ as embedded in $\mathbb{R}^{D_{\mathrm{NS}}}$}.

\subsubsection{A case study: CHSH}
\label{ssscase}

It is instructive to work out explicitly the facets of $\cal L$ and derive the corresponding Bell inequalities. The simplest scenario has $M_A=M_B=m_A=m_B=2$. In this case, $D_{\mathrm{NS}}=8$ and there are 16 extremal points: finding the facets is a very easy task for a computer, but still cumbersome to write down here. We are rather going to study a very meaningful sub-polytope of $\cal L$.

For a choice of settings $(x,y)$ and binary outcomes, the \textit{correlation coefficient} is defined by
\ba
E_{xy}&=&P(a=b|x,y)-P(a\neq b|x,y)\,.\label{corrcoeff}
\ea Any quadruple of numbers
\ba
u&=&(E_{00},E_{01},E_{10},E_{11})
\ea with $-1\leq E_{xy}\leq 1$, is \textit{a priori} a valid correlation vector. The sixteen vectors such that $||u||^2=4$, i.e. those vectors whose components are either $+1$ or $-1$, are extremal points of a polytope embedded in $\mathbb{R}^{4}$.

To see which constraints are added by requiring that ${\cal P}\in{\cal L}$, it is convenient to use the labeling convention $a,b\in\{-1,+1\}$. With this choice, $E_{xy}=\moy{a_xb_y}$. In particular, for deterministic local points it holds $E_{xy}\stackrel{D}{=}a_xb_y$, which directly leads to $E_{00}E_{01}E_{10}E_{11}\stackrel{D}{=}1$. Therefore, the extremal points of the \textit{local correlation polytope} are the eight vectors
\ba
\begin{array}{lcl}
v_1=(+1,+1,+1,+1)&,& v_5=-v_1=(-1,-1,-1,-1)\\
v_2=(+1,+1,-1,-1)&,& v_6=-v_2=(-1,-1,+1,+1)\\
v_3=(+1,-1,+1,-1)&,& v_7=-v_3=(-1,+1,-1,+1)\\
v_4=(+1,-1,-1,+1)&,& v_8=-v_4=(-1,+1,+1,-1)\,.
\end{array}
\ea Notice that $\{v_1,v_2,v_3,v_4\}$ are mutually orthogonal, so in particular they are linearly independent: this implies that $\mathbb{R}^{4}$ is the smallest embedding for the local correlation polytope. Now we want to characterize the facets of this polytope. Four linearly independent vectors are required to define a 3-dimensional hyperplane\footnote{This is because the origin $(0,0,0,0)$ is inside the polytope. If the origin would be on a facet, only three linearly independent vectors would be sufficient to specify such a facet. This is a technical point that does not need to bother us here, but may play a role for most compact parametrizations of probability polytopes.}: our task consist of listing all sets of four linearly independent extremal points, constructing the hyperplane that they generate, and checking if it is indeed a facet.

The symmetry of the problem makes the task simple. There are sixteen sets of four linearly independent vectors, namely the $V_{\underline{s}}=\{s_1v_1,s_2v_2,s_3v_3,s_4v_4\}$ with $\underline{s}=[s_1,s_2,s_3,s_4]\in\{-1,+1\}^4$. The normal to the hyperplane generated by $V_{\underline{s}}$ is the solution to the equation $n_{\underline{s}}\cdot(s_kv_k)= f_{\underline{s}}\equiv 4$ (the constant being chosen for simplicity), which is readily found to be $n_{\underline{s}}=\sum_{k=1}^4s_kv_k$, either by direct inspection or by noticing that $v_i\cdot v_j=4\delta_{ij}$. Moreover, the extremal points that do not define the plane are the four $-s_kv_k$, which all lie on the same side of the hyperplane since obviously $n_{\underline{s}}\cdot(-s_kv_k)=-4$. Therefore each of the sixteen sets $V_{\underline{s}}$ defines a facet by the condition
\ba
n_{\underline{s}}\cdot u\,=\, 4&\textrm{with}& n_{\underline{s}}=\sum_{k=1}^4s_kv_k\,.
\ea
To finish our study, we just have to inspect each of these facets (since $n_{-\underline{s}}=-n_{\underline{s}}$, we can just look at eight of them). We find:
\ba
\begin{array}{lcl}
n_{[+1,+1,+1,+1]}=(4,0,0,0)&\longrightarrow & 4E_{00}\,\leq\,4\,,\\
n_{[+1,+1,+1,-1]}=(2,2,2,-2)&\longrightarrow & 2E_{00}+2E_{01}+2E_{10}-2E_{11}\,\leq\,4\,,\\
n_{[+1,+1,-1,+1]}=(2,2,-2,2)&\longrightarrow & 2E_{00}+2E_{01}-2E_{10}+2E_{11}\,\leq\,4\,,\\
n_{[+1,-1,+1,+1]}=(2,-2,2,2)&\longrightarrow & 2E_{00}-2E_{01}+2E_{10}+2E_{11}\,\leq\,4\,,\\
n_{[+1,+1,-1,-1]}=(0,4,0,0)&\longrightarrow & 4E_{01}\,\leq\,4\,,\\
n_{[+1,-1,+1,-1]}=(0,0,4,0)&\longrightarrow & 4E_{10}\,\leq\,4\,,\\
n_{[+1,-1,-1,+1]}=(0,0,0,4)&\longrightarrow & 4E_{11}\,\leq\,4\,,\\
n_{[+1,-1,+1,+1]}=(-2,2,2,2)&\longrightarrow & -2E_{00}+2E_{01}+2E_{10}+2E_{11}\,\leq\,4\,.
\end{array}
\ea
In summary, the local correlation polytope for the simplest scenario has sixteen facets. Up to relabeling of the inputs and/or of the outcomes, eight of these describe the constraint $E_{00}\leq 1$ and are therefore trivial, while the other eight describe the constraint
\ba
S\equiv E_{00}+E_{01}+E_{10}-E_{11}\,\leq\,2\,.\label{chshcorr}
\ea
This constraint is not trivial, and indeed can be violated by valid correlation vectors which do not belong to the local polytope: in particular, the vector $w=(+1,+1,+1,-1)$ reaches up to $S=4$. The inequality (\ref{chshcorr}) is called \textit{CHSH} from the names of Clauser, Horne, Shimony and Holt, who derived it first in physics \cite{chsh}. It is the most studied of all Bell inequalities\footnote{Bell's original inequality \cite{bell64} is ultimately CHSH, but he did not derive it using the systematic approach we just sketched. Rather, he had in mind measurements on a two-qubit singlet state, so he imposed that the LV model should satisfy \eqref{anticorr} exactly. This lead to an expression that is valid only under that assumption; which is enough in order to prove that quantum \textit{theory} violates the inequality, but is not suitable for comparison with experiments, since one will never observe absolutely perfect (anti)correlations.} and we shall use it repeatedly in this text.

Had we studied the full local polytope ${\cal L}$, we would have found only eight more facets, describing the trivial constraints $0\leq P(a=0|x)\leq 1$ and $0\leq P(b=0|y)\leq 1$ on the marginals, which certainly cannot be captured from the correlations alone. In conclusion, \textit{CHSH is the only Bell inequality in the scenario $M_A=M_B=m_A=m_B=2$}.

\subsubsection{Detection loophole: faking it by denial of service}

We have adopted an operational approach, in which the violation of Bell inequalities is read as the impossibility of a fair simulation of the observed statistics. From this perspective, something that is normally (and rightly) considered as anecdotical in more physically-based presentations acquires a huge importance: the possibility of faking a violation of Bell inequalities by a clever denial of service. We can discuss it now with the example of CHSH.

We have just seen that the correlation vector $w=(+1,+1,+1,-1)$ reaches the maximal possible value $S=4$. The LV vectors $v_1$, $v_2$, $v_3$ and $-v_4$ differ from $w$ only for one choice of settings: for instance, $v_1$ behaves like $w$ as long as the pair $(x,y)=(1,1)$ is not chosen. Because of measurement independence, $\lambda$ cannot guarantee that $(1,1)$ won't be chosen in that run. However, $\lambda$ can prevent such an event to be seen by adding the following instruction: for $x=1$, Alice's box does not reply. Now, if Alice and Bob naively estimate CHSH only from those instances in which both have got a reply, they may easily be cheated into obtaining $S>2$ (and even $S=4$) whereas only LV were used. If the bureaucrat and his friend had been aware of this possibility and the student had not, the two men might have faked a successful simulation. This possibility is called \textit{detection loophole} in the physics jargon.

This proves that \textit{post-selection is not an allowed data processing when dealing with Bell inequalities} on a purely operational basis\footnote{In the normal working of physics, which is the study of nature with characterized devices and without conspiracy theories, post-selection is licit as soon as one knows how the detector works and knows that the causes of reduced efficiency are completely independent from the choice of the measurement setting. In this sense, we can very safely claim that the violation of Bell inequalities \textit{has been observed} (in fact, far more often than many other physical effects, that are the object of less close scrutiny because their consequences are far less deep). However, if it comes to characterizing untrusted black boxes, or to convince a skeptical, then post-selection is not allowed.}. The remedy to avoid the trap is clear: Alice and Bob must compute the statistics from the whole sample. When the boxes (in physics, the detectors) do not give any answer, Alice and Bob can adopt \textit{a priori} either of two strategies. The first consists in treating the lack of answer as another outcome, in which case the scenario changes from $(m_A,m_B)$ to $(m_A+1,m_B+1)$. The second consists in agreeing in advance that the lack of answer will be treated as one of the outcomes (say $+1$, always the same). Which processing is more efficient may depend on the scenario; but in both cases, \textit{the statistics are evaluated on all the runs}: a violation of Bell inequalities by those statistics is conclusive.

\subsubsection{Forgetting memory}
\label{ssgill}

At the very beginning of this study, I stated the temporary assumption that the black boxes behave in the same way in each run. There is no compelling reason for this to be the case: at run $i$, the boxes may well be taking into account the inputs and outputs of the previous $i-1$ runs. However, now we have collected enough knowledge to understand that such \textit{memory effects cannot be used to fake a violation of Bell inequalities}.

Indeed, let us describe the most general protocol based on LV. We know already that the boxes can be assumed to act according to deterministic instructions $\lambda_i$ in each run $i$. So far, we had assumed that $\lambda_i$ is drawn independently in each run. Let us now drop this assumption and allow $\lambda_i$ to depend on all that happened in the previous $i-1$ runs. Even now, however, \textit{in each run, each box is prepared with deterministic instructions}. We could keep the reasoning general, but I find it more instructive here to refer explicitly to CHSH. Recall that $E_{xy}=a_xb_y$ if we choose $a,b\in\{-1,+1\}$. The maximum of $S$ is reached if $a_0b_0=+1$, $a_0b_1=+1$, $a_1b_0=+1$ and $a_1b_1=-1$. A local deterministic $\lambda_i$ can satisfy at most three of these conditions, at the price of getting the fourth one completely wrong; whence the LV bound $S_L=3-1=2$.

Now, the crucial observation is that, in a Bell test, Alice and Bob choose $x$ and $y$ independently in each run: in particular, the settings are completely uncorrelated from the $\lambda$'s, including the past ones. This means that, irrespective of how $\lambda_i$ is chosen, Alice and Bob may choose the ``wrong" pair of settings (and the total probability with which this happens does not matter, since $S$ is estimated from \textit{conditional} probabilities). Therefore, memory effects cannot be used to fake a violation of Bell inequalities with LV, as claimed\footnote{In this text, as mentioned, I concentrate on the asymptotic case of statistics gathered in infinitely many runs. In practice, it is essential to deal with finite-size statistics. It's this estimate that must be done more carefully when memory effects are taken into account: see section 2 of \cite{gill2012} for details and references.}.

\subsubsection{The message in the violation}
\label{ssmessage}

We have accumulated all the notions required for studying Bell inequalities in the context of quantum physics --- and indeed, it is worth while stressing very explicitly that \textit{no element of quantum theory has been used so far}. This means in particular that the meaning of the violation of Bell inequalities is theory-independent. In today's scientific jargon, we say that the violations observed in laboratories are due to ``quantum entanglement". Maybe future scientists will use different concepts, but this won't change the fact.

As for what \textit{the meaning of the violation actually is}, rivers of ink have been spent in arguing on that, and even frequently-used terms like ``nonlocality" and ``violation of local realism" are the object of sometimes heated debates\footnote{These terms can indeed be equivocal. I have met students who had memorized the slogan ``the violation of Bell inequalities demonstrates nonlocality" and had got a wrong understanding of it all. Similarly, hearing physicists claiming that ``realism is denied", many conclude that we can't know anything about reality --- while everything started by accepting the reality of the observed statistics. I don't want to impose a ban on any of these terms and feel in fact free to use them myself whenever convenient, but they must be correctly understood.}. I shall keep an economic approach: \textit{if the inequalities are violated, pre-established agreement is not the explanation}. So, either the results were determined by another mechanism, or they were not determined but new information is created. Let us see what one can say about these two options: 
\begin{itemize}
\item In order to keep determinism, $b$ must depend on $x$ or $a$ on $y$. This can happen in two ways:
\begin{itemize}
\item[-] A \textit{signal} sends the input of one party to the other location. This classically plausible explanation is very problematic in the present case. Both quantum theory and experiments agree that the violation of Bell inequalities does not change with distance. By putting enough distance between the box of Alice and that of Bob, so as to ensure that one's choice of input is spacelike separated from the other's output, the hypothetic signal should travel faster than light. Some authors have toyed with the idea of ``peaceful coexistence", arguing that one could think of a signal propagating faster than light, as long as it cannot be used by us to send an actual message. However, if one wants to reproduce all quantum predictions, this requirement can be met only by signals propagating at infinite speed (see Section \ref{secrobust}). These conclusions about the nature of the signal are \textit{phenomenological but (better: and therefore) inescapable}: even if the quantum world is seen as emergent, whatever deeper reality one may like to consider\footnote{Flights of imagination worthy of science fiction may lead one to think that each pair of entangled systems brings with it its own wormhole, its own extra dimension, in which the two objects remain forever close. The infinitely rigid ``quantum potential" of Bohmian mechanics, or t'Hooft's idea of the quantum fields emerging from a big cellular automaton, have a more serious ring to them.} must manifest itself as the phenomenon of infinite-speed signal in our (3+1)-dimensional space-time.

\item[-] The other option is the denial of measurement independence: the preparation of the system in the boxes would then differ according to which measurement are going to be performed on it. I would argue that this choice is incompatible with the practice of science\footnote{The denial of measurement independence is frequently called \textit{denial of free will}. Indeed, if $(x,y)$ are correlated with $\lambda$, then for a given $\lambda$ some pairs $(x,y)$ must be more probable than others. Now, the settings could in principle be chosen by human beings, whose ``free will" would then be maimed. The argument is correct. However, when used, it tends to stir philosophical passions, typically driving the debate towards \textit{The Matrix}, Libet's experiments, or moral responsibility --- all very interesting topics, but whose connection with quantum randomness is (to adopt an optimistic stance) unclear.}. Indeed, a fundamental tenet of the scientific method is the possibility of performing different measurements on \textit{identically prepared} systems. I don't know who would dare denying this possibility in order to save determinism in Bell experiments.
\end{itemize}

\item If all the mechanisms to ensure determinism are problematic, the remaining explanation, however improbable, must be accepted: \textit{the violation of Bell inequalities proves the existence of intrinsic randomness}. Unless stated otherwise, the remainder of this text will be written from this perspective. This is a very strong statement, of course. It is supported by the mathematics of quantum theory, since in general (\ref{qprob}) cannot be written as (\ref{lhvdef}); it fits also the so-called orthodox interpretation\footnote{The supporters of the many-worlds interpretation may want to say that the violation of Bell inequalities proves the existence of intrinsic randomness \textit{in each universe}. In this operationally-oriented text, I don't want to get lost in such subtleties and assume that each of us will experience only one universe throughout their lives, however many other universes may ``really" exist out there.}. Two further points are worth stressing:
\begin{itemize}
\item[-] The violation of Bell inequalities is \textit{the phenomenon} that proves the existence of intrinsic randomness. In other words, at the risk of repeating what I wrote in paragraph \ref{powerlv}: it's only because two degrees of freedom, even separated in space, violate Bell inequality that we can safely infer the presence of intrinsic randomness also when a single degree of freedom is involved, as in the double-slit experiment or in Heisenberg's uncertainty relations.
\item[-] Intrinsic randomness \textit{per se} cannot be \textit{a sufficient explanation} for the violation of Bell inequalities: one can easily conceive a world with intrinsic randomness, in which correlations are nevertheless compatible with LV\footnote{After all, most physicists believed ardently in intrinsic randomness before Bell ended up certifying their belief. I thank Nicolas Gisin for bringing this point to my attention.}. From this perspective, a minimal sufficient explanation consists in postulating that intrinsic randomness must be certifiable; but I prefer to try and recover quantum physics from physical principles such as those described in Section \ref{secprinciples}.
\end{itemize}
\end{itemize}

The last words of this paragraph are dedicated to those who don't want to make the step either to intrinsic randomness or to infinite-speed signals, and hope to recover a situation where all is ``normal". Usually, these skeptics scrutinize the mathematics, looking for the hidden assumption, the one that nobody would have noticed. Such attempts should be replaced with an operational challenge: skeptics should be asked to exhibit a violation of Bell inequalities between two classical computers, without communication and without post-selection\footnote{In order to convince the last die-hard, the rules of the simulation must be stated with great accuracy: the painful task of defining how such a ``Randi test" should look like has been undertaken by others \cite{gill2012,vongehr}.}. The effort of building such a simulation may prove very instructive --- for them.

\subsubsection{Bell and spacelike separation}
\label{bellspace}

I want to conclude by hopefully dissipating some frequent confusions. Let me first repeat that the violation of Bell inequalities \textit{per se} proves that pre-established agreement is not the explanation, and nothing else. In the context of quantum physics (Bohmians aside), one would further seek to rule out the other classical explanation, signaling, thus positively proving the existence of intrinsic randomness. Spacelike separation is the ultimate way of guaranteeing no-signaling, which should convince everyone (again, Bohmians aside); but it is not a strict requirement for a Bell experiment. Let me go back to the parable: how will the student make sure that the bureaucrat and his friend won't communicate during the simulation? She will certainly put them in non-adjacent rooms, out of obvious visual and acoustic contact. Beyond that, she may just trust their integrity, perhaps performing checks at random times; or she may ask them to leave their handphone in her lab; or she may put them in Faraday cages, thus assuming that they can only use electromagnetic waves as signals... Only in an extremely adversarial scenario the student will enforce spacelike separation in the protocol. In summary: in Bell experiments, \textit{there is an operational asymmetry between ruling out pre-established agreement (which is what Bell inequalities can do) and ruling out communication (which is also of great importance but requires independent criteria)}.

\section{Bell inequalities and quantum physics: the very basics}
\label{secquantum}

The content of this section can be summarized by one sentence: \textit{quantum theory predicts the violation of Bell's inequalities and experiments have confirmed this prediction}.

Any closer look will reveal a rather rigged landscape. In the past twenty years or so, a lot of Bell inequalities have been listed, some as belonging to nicely defined families, others as items of otherwise unstructured catalogues. Some of these inequalities lead to genuine refinements over CHSH, which the experts appreciate. Overall though, even these charted regions are largely unexplored and unexploited; and of course we can't know in advance if further exploration will lead to new insights, to a synthetic comprehension, or just to additions to the catalogue\footnote{For instance, some authors have recently discovered a family of Bell inequalities that no quantum state can violate, a result that at first sight looks like as pointless a mathematical exercise as it gets; instead, these inequalities have become a tool to gain quite interesting insights on the structure of quantum statistics. A rule of thumb: as long as you find articles published in journals with generic scope, the field may still be thriving. If you see the birth of a ``Journal of Bell inequalities", you know that the end is approaching.}. Anyway, the basics can be illustrated in the elementary scenario with two parties, each with two inputs and two outputs, leading to the \textit{CHSH inequality}. I shall confine myself to it here and throughout all this text, unless stated otherwise.

\subsection{CHSH operator and Tsirelson bound}

As usual in physics, one can choose the most convenient labeling: notably, if we choose to label the outcomes $a,b\in\{-1,+1\}$, the correlation coefficient defined in \eqref{corrcoeff} becomes simply $E_{xy}=\moy{a_xb_y}$. Here, we want to consider the case where the outcomes $a_x$ and $b_y$ are results of quantum measurements: there exist four Hermitians operators $\hat{A}_0,\hat{A}_1,\hat{B}_0,\hat{B}_1$ with eigenvalues $-1$ and $+1$, possibly degenerate, such that $E_{xy}=\moy{A_x\otimes B_y}$. 

In a Bell-CHSH experiment, each single-shot measurement corresponds to one of the $\hat{A}_x\otimes \hat{B}_y$; the statistics of four such series of measurements are later combined in the form \eqref{chshcorr}. Within quantum theory, because of linearity, the same statistics can be seen as the average value of the \textit{CHSH operator}
\ba
\hat{S}&=&\hat{A}_0\otimes \hat{B}_0+\hat{A}_0\otimes \hat{B}_1+\hat{A}_1\otimes \hat{B}_0-\hat{A}_1\otimes \hat{B}_1\,.
\ea
Even though a Bell experiment does \textit{not} consist of single-shot measurements of $\hat{S}$, the translation of Bell inequalities in the language of \textit{Bell operators} is extremely useful\footnote{Here we started by some remarks on the labeling, but of course one is not obliged to find a particularly clever labeling before writing down Bell operators: any Bell inequality is a linear combination of probabilities, therefore the corresponding operator is obtained by replacing each probablity with the corresponding projectors.}. In particular, \textit{a quantum state $\rho$ violates CHSH if and only if there exist measurements such that $\Tr(\rho\,\hat{S})>2$}.

Before introducing states that actually violate CHSH, let us show a generic result, known as \textit{Tsirelson bound} \cite{tsi}:
\begin{theorem}\label{thm1}
Measurements on quantum systems can violate the CHSH inequality \eqref{chshcorr} at most up to
\ba
S&\leq&2\sqrt{2}\,.\label{tsirelon}
\ea
\end{theorem}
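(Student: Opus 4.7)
My plan is to follow Tsirelson's original argument, which bounds the operator norm of $\hat{S}$ directly by computing $\hat{S}^2$. The essential input is that each observable $\hat{A}_x,\hat{B}_y$ squares to the identity (because eigenvalues are $\pm 1$), and operators on Alice's side commute with those on Bob's side because they act on different tensor factors. From $|\langle \psi | \hat{S} | \psi \rangle| \leq \|\hat{S}\|$ it suffices to prove $\|\hat{S}\| \leq 2\sqrt{2}$, or equivalently $\|\hat{S}^2\| \leq 8$, since $\hat{S}$ is Hermitian.

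First I would rewrite $\hat{S} = \hat{A}_0\otimes(\hat{B}_0+\hat{B}_1) + \hat{A}_1\otimes(\hat{B}_0-\hat{B}_1)$ and expand $\hat{S}^2$. The ``diagonal'' pieces give $\hat{A}_0^2\otimes(\hat{B}_0+\hat{B}_1)^2 + \hat{A}_1^2\otimes(\hat{B}_0-\hat{B}_1)^2$, and since $\hat{A}_x^2=\one$ and $(\hat{B}_0\pm\hat{B}_1)^2 = 2\one \pm \{\hat{B}_0,\hat{B}_1\}$, they collapse to $4\,\one\otimes\one$. The ``cross'' pieces use $(\hat{B}_0+\hat{B}_1)(\hat{B}_0-\hat{B}_1)=-[\hat{B}_0,\hat{B}_1]$ and $(\hat{B}_0-\hat{B}_1)(\hat{B}_0+\hat{B}_1)=[\hat{B}_0,\hat{B}_1]$, so they combine into $-[\hat{A}_0,\hat{A}_1]\otimes[\hat{B}_0,\hat{B}_1]$. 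Thus
\[
\hat{S}^2 \;=\; 4\,\one\otimes\one \;-\; [\hat{A}_0,\hat{A}_1]\otimes[\hat{B}_0,\hat{B}_1].
\]

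Second, I would bound the commutator piece. For any two Hermitian operators $X,Y$ of norm at most $1$ one has $\|[X,Y]\| \leq \|XY\|+\|YX\|\leq 2$, so each of $[\hat{A}_0,\hat{A}_1]$ and $[\hat{B}_0,\hat{B}_1]$ has operator norm at most $2$. Since the two commutators act on different factors, the norm of their tensor product is the product of the norms, i.e.\ at most $4$. Consequently $\|\hat{S}^2\|\leq 4+4=8$, which gives $\|\hat{S}\|\leq 2\sqrt{2}$, and taking the expectation value in any state $\rho$ yields $\operatorname{Tr}(\rho\,\hat{S})\leq 2\sqrt{2}$.

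I do not expect a genuine obstacle here; the only place where one must be mildly careful is the algebraic rearrangement of $\hat{S}^2$, where signs coming from the ``$-$'' in front of $\hat{A}_1\otimes\hat{B}_1$ have to be tracked, and the remark that on Alice's side the operators $\hat{A}_0,\hat{A}_1$ need not commute (so one keeps the anticommutator/commutator decomposition faithfully). A small formal point worth noting is that the argument does not require the systems to be finite dimensional: $\hat{A}_x,\hat{B}_y$ being bounded Hermitian with $\hat{A}_x^2=\hat{B}_y^2=\one$ on a tensor product Hilbert space is all that enters, and therefore the bound is theory-wide within quantum mechanics rather than a property of any particular representation.
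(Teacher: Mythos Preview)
Your argument is correct and is essentially identical to the paper's proof: both compute $\hat{S}^2=4\,\one\otimes\one-[\hat{A}_0,\hat{A}_1]\otimes[\hat{B}_0,\hat{B}_1]$ and then bound each commutator in operator norm by $2$ to conclude $\|\hat{S}^2\|\leq 8$. If anything, you spell out the derivation of the $\hat{S}^2$ identity more explicitly than the paper does.
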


\begin{proof} 
In order to prove the claim, we need to find an upper bound for the largest eigenvalue of $\hat{S}$, denoted as usual by $||\hat{S}||_\infty$. By construction, we have $||\hat{A}_x||_{\infty}=||\hat{B}_y||_{\infty}=1$, $\hat{A}_x^2=\one_{d_A}$ and $\hat{B}_y^2=\one_{d_B}$; the dimensions $d_A$ and $d_B$ of the Hilbert spaces are left unspecified and may be infinite. The bound is very simple to obtain by working with the square of the CHSH operator
\ba
\hat{S}^2&=&4\,\one\otimes\one\,-\,[\hat{A}_0,\hat{A}_1]\otimes [\hat{B}_0,\hat{B}_1]\,.
\ea
Indeed,
\ban
||[\hat{A}_0,\hat{A}_1]||_{\infty}&=&||\hat{A}_0\hat{A}_1-\hat{A}_1\hat{A}_0||_{\infty}\,\leq\,||\hat{A}_0\hat{A}_1||_{\infty}+||\hat{A}_1\hat{A}_0||_{\infty}\,\leq 2||\hat{A}_0||_{\infty}\,||\hat{A}_1||_{\infty} \,=\,2
\ean where for the last estimate we have used the inequality $|xy|\leq |x|\,|y|$. The same estimate leads to $||[\hat{B}_0,\hat{B}_1]||_{\infty}\leq 2$. Therefore $||\hat{S}^2||_\infty\leq 8$, which proves the claim.\end{proof}

\subsection{Study of CHSH for two-qubit states}
\label{sschshqubits}

Historically, the first state for which a violation of Bell inequalities was noticed is the singlet state $\ket{\Psi^-}=\frac{1}{\sqrt{2}}(\ket{0}\otimes\ket{1}-\ket{1}\otimes\ket{0})$. As mentioned above, for this state quantum theory predicts
\ba
E_{\vec{a},\vec{b}}(\Psi^-)&=&-\vec{a}\cdot\vec{b}
\ea
for Alice measuring along direction $\vec{a}$, and Bob along direction $\vec{b}$, in the Bloch sphere. With suitable choices of the measurements, these statistics can reach the Tsirelson bound $S=2\sqrt{2}$. Instead of indulging on this specific, very well known case, I provide directly a characterization of the behvaior of CHSH for a generic two-qubit state
\ba
\rho&=&\frac{1}{4}\Big(\one\otimes\one+\vec{r}_\rho\cdot\vec{\sigma}\otimes\one+\one\otimes\vec{s}_\rho\cdot\vec{\sigma}+\sum_{i,j=x,y,z}T_\rho^{ij}\sigma_i\otimes\sigma_j\Big)\,.\label{genqubit}
\ea

\begin{theorem}\label{thm2} The maximal value of CHSH achievable with von Neumann measurements on a generic two-qubit state $\rho$ \eqref{genqubit} is
\ba
\moy{\hat{S}}&=&2\sqrt{\lambda_1+\lambda_2}\label{maxviolqubitgen}
\ea where $\lambda_1$ and $\lambda_2$ are the two largest eigenvalues of the symmetric matrix $T^t_\rho T_\rho$, where $T^t_\rho$ is the transpose of $T_\rho$. The proof contains the construction of a possible choice of settings that reaches this maximum \cite{horo3}.
\end{theorem}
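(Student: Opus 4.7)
The plan is to reduce the optimization over states and measurements to a linear-algebra problem about $T_\rho$, then diagonalize. Fix the state $\rho$ in the form \eqref{genqubit}. A von Neumann measurement with outcomes $\pm 1$ on a qubit is of the form $\vec{a}\cdot\vec{\sigma}$ with a unit vector $\vec{a}\in\real^3$; so I would parametrize $\hat{A}_x=\vec{a}_x\cdot\vec{\sigma}$ and $\hat{B}_y=\vec{b}_y\cdot\vec{\sigma}$. Using the identity $\Tr[(\vec{u}\cdot\vec{\sigma})\otimes(\vec{v}\cdot\vec{\sigma})\,\sigma_i\otimes\sigma_j]=4u_iv_j$ together with the fact that the marginal terms $\vec{r}_\rho\cdot\vec{\sigma}\otimes\one$ and $\one\otimes\vec{s}_\rho\cdot\vec{\sigma}$ contribute zero to $\moy{\hat{A}_x\otimes\hat{B}_y}$, one obtains the clean formula $E_{xy}=\vec{a}_x^{\,t}T_\rho\vec{b}_y$. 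Substituting into CHSH gives
\[
\moy{\hat{S}} \,=\, \vec{a}_0^{\,t}T_\rho(\vec{b}_0+\vec{b}_1)+\vec{a}_1^{\,t}T_\rho(\vec{b}_0-\vec{b}_1)\,.
\]

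Next I would introduce $\vec{c}=\vec{b}_0+\vec{b}_1$ and $\vec{c}\,'=\vec{b}_0-\vec{b}_1$. Because $\vec{b}_0,\vec{b}_1$ are unit vectors, one has $\vec{c}\cdot\vec{c}\,'=0$ and $|\vec{c}|^2+|\vec{c}\,'|^2=4$, so writing $\vec{c}=2\cos\theta\,\hat{c}$ and $\vec{c}\,'=2\sin\theta\,\hat{c}\,'$ with $\hat{c}\perp\hat{c}\,'$ unit vectors covers all admissible choices of Bob's settings. Optimizing now over the unit vectors $\vec{a}_0,\vec{a}_1$ is immediate: for any nonzero vector $\vec{d}$, $\max_{|\vec{a}|=1}\vec{a}^{\,t}T_\rho\vec{d}=|T_\rho\vec{d}|$, attained at $\vec{a}=T_\rho\vec{d}/|T_\rho\vec{d}|$. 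This yields
\[
\max_{\vec{a}_0,\vec{a}_1}\moy{\hat{S}}\,=\,2\cos\theta\,|T_\rho\hat{c}|+2\sin\theta\,|T_\rho\hat{c}\,'|\,.
\]

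Then I would maximize over $\theta\in[0,\pi/2]$ by the two-dimensional Cauchy--Schwarz step (or by setting the derivative to zero), obtaining $2\sqrt{|T_\rho\hat{c}|^2+|T_\rho\hat{c}\,'|^2}$, with optimal $\tan\theta=|T_\rho\hat{c}\,'|/|T_\rho\hat{c}|$. The remaining task is
\[
\max_{\hat{c}\perp\hat{c}\,',\,|\hat{c}|=|\hat{c}\,'|=1}\bigl(\hat{c}^{\,t}T_\rho^{t}T_\rho\hat{c}+\hat{c}\,'^{\,t}T_\rho^{t}T_\rho\hat{c}\,'\bigr)\,.
\]
Since $T_\rho^{t}T_\rho$ is real symmetric positive semidefinite, the spectral theorem plus the Ky Fan (or standard Rayleigh) inequality states that the sum of the two Rayleigh quotients over an orthonormal pair is maximized precisely when $\hat{c},\hat{c}\,'$ span the eigenspace of the two largest eigenvalues $\lambda_1,\lambda_2$, giving the value $\lambda_1+\lambda_2$. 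Substituting back proves $\max\moy{\hat{S}}=2\sqrt{\lambda_1+\lambda_2}$, and the chain of extremizers described above yields an explicit optimal choice of $\vec{b}_0,\vec{b}_1,\vec{a}_0,\vec{a}_1$.

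The main obstacle I anticipate is bookkeeping rather than conceptual: making sure that the substitutions $\vec{c},\vec{c}\,'$ really cover all unit pairs $\vec{b}_0,\vec{b}_1$ (they do, since two unit vectors are determined up to orientation by their sum and difference, which are necessarily orthogonal), and verifying that the optimal $\hat{c},\hat{c}\,',\vec{a}_x$ produced in this way are genuine unit vectors so that the whole chain of inequalities is attained, giving equality in \eqref{maxviolqubitgen}.
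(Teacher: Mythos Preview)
Your proposal is correct and follows essentially the same route as the paper: express $\moy{\hat S}$ in terms of $T_\rho$, rewrite Bob's sum and difference as $2\cos\theta\,\hat c$ and $2\sin\theta\,\hat c'$ with $\hat c\perp\hat c'$, optimize Alice's directions to get $\|T_\rho\hat c\|$ and $\|T_\rho\hat c'\|$, optimize $\theta$, and finally pick $\hat c,\hat c'$ as the top two eigenvectors of $T_\rho^tT_\rho$. If anything, your write-up is slightly more explicit than the paper's (you justify why the marginal terms drop out and name the Ky Fan/Rayleigh step for the final maximization), but there is no substantive difference in strategy.
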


\begin{proof} The quantity to be maximized is
\ba
\moy{\hat{S}}\,=\,\mathrm{Tr}\big(\rho \hat{S}\big)&=&\Big(\vec{a}_0,T_\rho\underbrace{(\vec{b}_0+\vec{b}_1)}_{=2\cos\chi\vec{c}}\Big)\,+\,\Big(\vec{a}_1,T_\rho\underbrace{(\vec{b}_0-\vec{b}_1)}_{=2\sin\chi\vec{c}^\perp}\Big)
\ea where we used the fact that the sum and difference of two vectors are always orthogonal vectors, and where $(\vec{b}_0,\vec{b}_1)=\cos2\chi$.

Since $||\vec{a}_0||=1$, the maximal value of $\big(\vec{a}_0,T_\rho\vec{c}\big)$ is $||T_\rho\vec{c}||$, obtained when $\vec{a}_0$ is chosen parallel to $T_\rho\vec{c}$. By the same reasoning on the term involving $\vec{a}_1$, we reach
\ba
\max_{\vec{a}_0,\vec{a}_1,\vec{b}_0,\vec{b}_1}\moy{\hat{S}}&=&\max_{\vec{b}_0,\vec{b}_1}2\big(\cos\chi ||T_\rho \vec{c}|| + \sin\chi ||T_\rho \vec{c}^\perp||\big) \nonumber \\&=& \max_{\vec{c},\vec{c}^\perp}2\sqrt{||T_\rho \vec{c}||^2 + ||T_\rho \vec{c}^\perp||^2}\label{maxqubinterm}
\ea where in the last step we used the well known optimization $\max_{\chi}\cos\chi x + \sin\chi y = \sqrt{x^2+y^2}$ achievable with the choice $\cos\chi=x/\sqrt{x^2+y^2}$. Finally, $||T_\rho \vec{c}||^2=\Big(\vec{c},T_\rho^t T_\rho\vec{c}\Big)$ and $T_\rho^t T_\rho$ is a positive symmetric matrix. Therefore the maximization \eqref{maxqubinterm} is achieved by choosing $\vec{c}$ and $\vec{c}^\perp$ as the two eigenvectors associated to the two largest eigenvalues. This proves \eqref{maxviolqubitgen}. The proof gives the recipes to reconstruct the measurement settings that lead to the maximal violation.

\end{proof}

Let us particularize this result to the case of a pure state:
\ba
\ket{\Psi(\theta)}=\cos\theta\,\ket{0}\otimes\ket{0}+\sin\theta\, \ket{1}\otimes\ket{1} &\longrightarrow&
T_{\Psi(\theta)}=\left(\begin{array}{ccc}\sin 2\theta &0&0\\ 0&-\sin 2\theta& 0\\ 0& 0& 1 \end{array}\right)\,.
\ea
The maximal achievable value of CHSH is
\ba
\max_{\vec{a}_0,\vec{a}_1,\vec{b}_0,\vec{b}_1}\sandwich{\Psi(\theta)}{\hat{S}}{\Psi(\theta)} &=& 2\,\sqrt{1+\sin^22\theta}\,,\label{maxviolqubitpure}
\ea which is always larger than 2, unless $\sin 2\theta=0$. Therefore, all pure entangled states of two qubits violate CHSH for suitable measurements. Moreover, only maximally entangled states can reach $S=2\sqrt{2}$.

Let us determine the corresponding measurement settings. The eigenvector associated to the largest eigenvalue of $T_\rho^t T_\rho$ is $\vec{c}=\hat{z}$; the orthogonal subspace being degenerate, we can choose any vector in it as $\vec{c}^\perp$: for instance $\vec{c}^\perp=\hat{x}$. With this choice,
\ba
\vec{b}_{0,1}\,=\,\cos\chi\,\hat{z}\,\pm\,\sin\chi\,\hat{x} &\textrm{ with }&\cos\chi\,=\,1/\sqrt{1+\sin^22\theta}\,.
\ea
Furthermore, $\vec{a}_0$ must be the unit vector parallel to $T_\rho\vec{c}$, and $\vec{a}_1$ to $T_\rho\vec{c}^\perp$, so here
\ba
\vec{a}_0\,=\,\hat{z} &,& \vec{a}_1\,=\,\hat{x}\,.
\ea
Now we know pretty much everything about the violation of CHSH by two-qubit states. In the next paragraph, we are going to see how CHSH can be adapted to prove that all pure entangled states violate a Bell inequality.

\subsection{All pure entangled states violate a Bell inequality (``Gisin's theorem")}

The fact that \textit{all pure entangled states violate a Bell inequality} is usually referred to as Gisin's theorem, since Nicolas Gisin was the first to ask the question and to answer it for bipartite states \cite{gisinthm}. Popescu and Rohrlich extended the proof to the general case shortly after \cite{prgisthm}. Here, I follow this development. 

\begin{lemma} Any \textit{bipartite} pure state (i.e.~of any dimensionality) violates a Bell inequality.\label{thm:gisin1}
\end{lemma}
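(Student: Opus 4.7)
The plan is to reduce the problem to the two-qubit case already handled in Theorem \ref{thm2}, using the Schmidt decomposition to carve out a $2\otimes 2$ sector on which CHSH is violated, and then to lift the qubit observables to the full Hilbert space without spoiling the violation.

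First, I would write $\ket{\Psi}=\sum_{k}\sqrt{p_k}\ket{k}_A\otimes\ket{k}_B$ in its Schmidt form. Since $\ket{\Psi}$ is entangled, at least two coefficients are strictly positive; pick two such indices $i\neq j$ and set $q=p_i+p_j$. This gives the orthogonal splitting $\ket{\Psi}=\sqrt{q}\,\ket{\tilde\Phi}+\sqrt{1-q}\,\ket{\tilde\Phi^{\perp}}$, where $\ket{\tilde\Phi}=\frac{1}{\sqrt{q}}(\sqrt{p_i}\ket{ii}+\sqrt{p_j}\ket{jj})$ is a normalised two-qubit pure entangled state of the Schmidt form treated in paragraph \ref{sschshqubits}, with angle $\theta$ satisfying $\sin^2 2\theta=4p_ip_j/q^2>0$. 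The key structural fact I shall exploit is that, because the Schmidt decomposition pairs the indices on the two sides, $\ket{\tilde\Phi^{\perp}}$ lives in the orthogonal complement of the chosen $2\otimes 2$ sector on \emph{both} factors simultaneously.

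Second, invoking Theorem \ref{thm2} inside this qubit sector, I obtain observables $A_0,A_1,B_0,B_1$ with $\pm 1$ spectrum such that $\sandwich{\tilde\Phi}{\hat{S}_{\mathrm{qub}}}{\tilde\Phi}=2\sqrt{1+\sin^2 2\theta}>2$. I then extend them to the whole Hilbert space by direct sum with the identity on the orthogonal complement, $\hat{A}_x=A_x\oplus\one^{\perp}_A$ and $\hat{B}_y=B_y\oplus\one^{\perp}_B$, which preserves the $\pm 1$ spectrum. Thanks to the block structure noted above, the cross terms $\bra{\tilde\Phi}\hat{A}_x\otimes\hat{B}_y\ket{\tilde\Phi^{\perp}}$ and $\bra{\tilde\Phi^{\perp}}\hat{A}_x\otimes\hat{B}_y\ket{\tilde\Phi}$ all vanish (the sector-swapping matrix elements of $\hat{A}_x\otimes\hat{B}_y$ between the two blocks are zero, each factor being block-diagonal). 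Consequently
\ban
\moy{\hat S}_{\Psi}&=&q\,\moy{\hat{S}_{\mathrm{qub}}}_{\tilde\Phi}+(1-q)\,\moy{\one^{\perp}_A\otimes\one^{\perp}_B\,+\,\one^{\perp}_A\otimes\one^{\perp}_B\,+\,\one^{\perp}_A\otimes\one^{\perp}_B\,-\,\one^{\perp}_A\otimes\one^{\perp}_B}_{\tilde\Phi^{\perp}}\\
&=&2q\sqrt{1+\sin^2 2\theta}+2(1-q)\\
&>&2q+2(1-q)\,=\,2\,.
\ean

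The argument is fundamentally routine once the Schmidt decomposition is brought in; the only thing to be a little careful about is that the orthogonal block not only survives but contributes \emph{constructively} to CHSH, so that the strict inequality is preserved regardless of how small $q$ is. The main conceptual point, which I would highlight, is why the same trick does not automatically generalise to multipartite states: there one cannot simultaneously diagonalise the reduced states on all parties, so the block-diagonal extension of qubit observables may fail to decouple the two sectors cleanly. That is presumably why Popescu and Rohrlich needed a separate argument for the general case.
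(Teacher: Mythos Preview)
Your argument is correct and follows essentially the same route as the paper's proof: Schmidt-decompose, isolate a $2\times 2$ sector with two nonzero coefficients, apply the two-qubit CHSH result (Theorem~\ref{thm2}) there, and extend the qubit observables by the identity on the orthogonal complement so that the latter contributes exactly $2$ to $S$. The only cosmetic differences are that the paper picks specifically the two largest Schmidt coefficients (you pick any two positive ones, which works equally well) and does not spell out the vanishing of the cross terms as explicitly as you do.
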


\begin{proof} Any bipartite pure state can be written in its Schmidt decomposition
$\ket{\Psi}=\sum_{k=0}^{d-1} c_k \ket{k}\otimes\ket{k}$ where we can define the bases such that $c_0\geq c_1\geq ... \geq c_{d-1}\geq 0$. The state is entangled if and only if $c_1\neq 0$. Let us now rewrite the state as
\ba
\ket{\Psi}&=&\sqrt{c_0^2+c_1^2}\,\big(\cos\theta\ket{0}\otimes\ket{0}+\sin\theta \ket{1}\otimes\ket{1}\big)\,+\,\sqrt{1-c_0^2-c_1^2}\,\ket{\Psi'}
\ea where $\cos\theta=c_0/\sqrt{c_0^2+c_1^2}$ and $\ket{\Psi'}$ is the normalized projection of $\ket{\Psi}$ onto the subspace orthogonal to $\textrm{Span}(\ket{0}\otimes\ket{0},\ket{1}\otimes\ket{1})$. Now consider the operators
\ba
A_x\,=\,\vec{a}_x\cdot\vec{\sigma} \oplus \one'&,& B_y\,=\,\vec{b}_y\cdot\vec{\sigma} \oplus \one'
\ea where for both systems the Pauli matrices are defined as acting in the subspace $\textrm{Span}(\ket{0},\ket{1})$ and $\one'=\ket{2}\bra{2}+...+\ket{d-1}\bra{d-1}$ is the identity on the orthogonal complement of that subspace. By choosing the measurement vectors that lead to \eqref{maxviolqubitpure}, one can therefore reach a value of CHSH
\ba
S&=&(c_0^2+c_1^2)\,2\sqrt{1+\sin2\theta}\,+\,(1-c_0^2-c_1^2)\,2
\ea which is larger than 2 as soon as $c_1>0$ as claimed.
\end{proof}
There is no claim that this procedure is optimal according to any figure of merit: after all, the construction does not make any use of the entanglement that was possibly present outside $\textrm{Span}(\ket{0}\otimes\ket{0},\ket{1}\otimes\ket{1})$. One may want to look for ``better" inequalities, even some that are tailored to the state. However, this approach is sufficient to prove the claim; moreover, it has the advantage that the procedure is defined uniquely, once and for all. This same remark will apply also to the general theorem, which can now be proved:

\begin{theorem}\label{gisthmfull} Any pure state violates a Bell inequality, and this can be checked by a uniquely defined protocol in which each of the parties performs local measurements\footnote{Obviously, if the parties could come together and form two groups, one goes back to the case settled by Lemma \ref{thm:gisin1}.}.
\end{theorem}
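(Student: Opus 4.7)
The plan is induction on the number of parties $N$, taking Lemma \ref{thm:gisin1} as the base case $N=2$. The core step, due to Popescu and Rohrlich, is to reduce the multipartite problem to the bipartite one by having $N-2$ of the parties perform carefully chosen local projective measurements and using the bipartite lemma on the remaining pair.

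First, I would establish the following structural fact: for any $N$-partite pure entangled state $\ket{\Psi}$ (with $N\geq 3$), there exist local orthonormal bases for parties $3,\ldots,N$ and at least one outcome tuple $\underline{i}=(i_3,\ldots,i_N)$ occurring with nonzero probability such that the (normalized) post-measurement state $\ket{\Phi_{\underline{i}}}$ of parties $1$ and $2$ is \emph{pure and entangled}. Purity is immediate, since projective measurements on a pure state yield pure states. Entanglement is the delicate point; I would argue by contradiction. If for every choice of local bases on parties $3,\ldots,N$ every outcome yielded a product state on the first two parties, then expanding $\ket{\Psi}$ in an arbitrary basis and tracking how the coefficients transform under local unitaries on parties $3,\ldots,N$ would force $\ket{\Psi}$ itself to be fully separable, contradicting the hypothesis.

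With such measurements and outcomes in hand, Lemma \ref{thm:gisin1} applied to $\ket{\Phi_{\underline{i}}}$ furnishes an explicit pair of dichotomic settings $\vec{a}_0,\vec{a}_1,\vec{b}_0,\vec{b}_1$ for parties $1$ and $2$ that yield a conditional CHSH value
\ban
S_{12\mid\underline{i}} \;=\; \sum_{x,y\in\{0,1\}} (-1)^{xy}\,E_{xy}^{(\underline{i})} \;>\; 2,
\ean
where $E_{xy}^{(\underline{i})}$ is evaluated on the conditional state. The full protocol is thus uniquely specified: parties $3,\ldots,N$ measure in the bases fixed above (call these settings $\underline{z}$), while parties $1$ and $2$ use the two CHSH settings prescribed by the bipartite lemma. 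All measurements are local, as required.

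It remains to package this conditional violation into a genuine $N$-partite Bell inequality, without any illicit post-selection (recall the warnings of Section~\ref{secbell}). Define
\ban
\tilde S \;\equiv\; \sum_{x,y\in\{0,1\}}\sum_{a,b\in\{\pm 1\}} (-1)^{xy}\, ab\, P(a,b,\underline{i}\mid x,y,\underline{z}).
\ean
For any LV model for the full $N$-party scenario, factorizing the joint distribution over $\lambda$ and using that $\sum_{xy}(-1)^{xy}\sum_{ab} ab\, P(a\mid x,\lambda)P(b\mid y,\lambda) \leq 2$ for every $\lambda$ by the CHSH argument, one obtains the Bell inequality
\ban
\tilde S \;\leq\; 2\, P(\underline{i}\mid \underline{z})\qquad (\text{all LV models}).
\ean
Quantum mechanics, on the other hand, gives $\tilde S = P(\underline{i}\mid\underline{z})\, S_{12\mid\underline{i}}$ (using no-signaling from parties $1$ and $2$ to assert that $P(\underline{i}\mid x,y,\underline{z})$ is independent of $x,y$), and by construction $S_{12\mid\underline{i}}>2$, yielding a strict violation evaluated on the full statistics.

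The main obstacle is the first structural step: showing that a pure multipartite entangled state cannot have the pathological property that \emph{every} local projective measurement on any $N-2$ parties leaves the remaining two in a product state. Once that is in place, the rest is a mechanical composition of Lemma \ref{thm:gisin1} with a conditional-probability rearrangement and the standard CHSH argument, together with an appeal to no-signaling to promote a conditional violation to an honest $N$-partite Bell inequality violation.
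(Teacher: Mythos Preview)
Your approach is the same Popescu--Rohrlich reduction the paper uses, and your packaging of the conditional CHSH violation into an honest $N$-partite inequality via $\tilde S$ is in fact more explicit than the paper's (which simply argues that if the full $P(a,b,c_1,\ldots,c_N|x,y,z_1,\ldots,z_N)$ were LV then so would every conditional $P(a,b|x,y,z,c)$, and takes the contrapositive).

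There is one genuine gap. Your structural claim, as you state it initially, fixes parties $1$ and $2$ as the pair left unmeasured and asserts that if every local measurement on parties $3,\ldots,N$ leaves parties $1,2$ in a product state, then $\ket{\Psi}$ must be fully separable. This is false: take $\ket{\Psi}=\ket{0}_1\otimes\frac{1}{\sqrt{2}}(\ket{00}+\ket{11})_{23}$. This state is entangled, yet any projective measurement on party $3$ leaves parties $1,2$ in a state of the form $\ket{0}_1\otimes\ket{\cdot}_2$, which is product. The correct statement (which the paper uses) is that for any multipartite pure entangled state, \emph{some} pair of parties can be left entangled after suitable local measurements by the others; you must allow the choice of which two parties play the role of the CHSH pair. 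You seem to slide toward this in your final paragraph (``any $N-2$ parties''), but the body of the argument and the ``uniquely specified'' protocol are written with parties $1,2$ fixed, and the contradiction you sketch would at best yield that $\ket{\Psi}$ factorizes across the $\{1\}$ versus $\{2,\ldots,N\}$ cut (or the analogous cut for party $2$), not full separability. Once you let the protocol pick the pair, the rest of your argument goes through and matches the paper's.
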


\begin{proof} It is better to introduce the idea with an example. Suppose that Alice, Bob and Charlie share the GHZ state of three qubits $\frac{1}{\sqrt{2}}(\ket{000}+\ket{111})$ and they have to perform local measurements, but they have only heard of the CHSH inequality. They can do the following: Charlie measures only $\sigma_x$; if he finds $+1$, he has prepared the state $\ket{\Phi^+}=\frac{1}{\sqrt{2}}(\ket{00}+ \ket{11})$ for Alice and Bob; otherwise, he has prepared the state $\ket{\Phi^-}$. Alice and Bob, as for them, alternate between two measurement settings, those suitable to violate CHSH with $\ket{\Phi^+}$. Clearly, the statistics $P(a,b|x,y,z=\sigma_x,c=+1)$ will exhibit $S=2\sqrt{2}$. Now, it is not difficult to convince oneself that, if the $P(a,b,c|x,y,z)$ can be described by a LV model, then all the $P(a,b|x,y,z,c)$ can be described by a LV model too. Therefore \textit{a contrario}, if the $P(a,b|x,y,z,c)$ violate a Bell inequality for some choices of $z$ and $c$ (as in the example), then the original three partite statistics $P(a,b,c|x,y,z)$ cannot be described by LV\footnote{In order for Alice and Bob to check the violation, at some point Charlie has to send $(z,c)$ to them. This is not problematic, as can be seen in two scenarios: (i) If the information is sent before Alice and Bob choose their measurements, it can be seen as part of the procedure of state preparation, i.e.~part of the pre-established agreement; (ii) If the information is sent at the end of the protocol, the post-selection done by Alice and Bob does not open the detection loophole, because it is uncorrelated with their choice of settings.}.

In general, if a pure multipartite state is entangled, at least one pair can be prepared in a bipartite pure entangled state $\ket{\Psi}_{AB}$ by the other parties $C_1,C_2,...,C_N$ performing suitable measurements and obtaining the right outcomes. Given the state, therefore, the $N+2$ parties agree that $N$ of them perform only a single measurement, while the remaining two parties choose suitable measurements for $\ket{\Psi}$ to violate some Bell inequality. Lemma \ref{thm:gisin1} guarantees that such measurements always exist. The violation implies that $P(a,b,c_1,...,c_N|x,y,z_1,...,z_N)$ cannot be described by a LV model. \end{proof}

\subsection{Some mixed entangled states don't violate any Bell inequality: Werner states}
\label{sswerner}

It is pretty obvious that \textit{entanglement is necessary to violate Bell's inequalities in quantum theory} and all separable states admit a LV model. It is very reasonable to conjecture that entanglement is also sufficient to violate Bell's inequalities --- but this conjecture is wrong, as first proved by Werner  \cite{wer}:
\begin{theorem}\label{thm:werner}
There exist mixed states that are entangled, but nevertheless cannot violate any Bell inequality.
\end{theorem}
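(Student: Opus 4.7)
The plan is to exhibit Werner's one-parameter family of two-qubit states
\ba
\rho_W(p) &=& p\,\ket{\Psi^-}\bra{\Psi^-}\,+\,(1-p)\,\frac{\one}{4}\,,\qquad 0\leq p\leq 1\,,\nonumber
\ea
and to identify a window of $p$ in which $\rho_W(p)$ is provably entangled yet its measurement statistics already admit a local-variable description. Specifically, I will show entanglement for $p>1/3$ and construct an LV model valid up to $p=1/2$, leaving the interval $p\in (1/3,1/2]$ as the example demanded by the theorem.

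For the entanglement side, the quickest route is the PPT criterion. I would compute the partial transpose of $\ket{\Psi^-}\bra{\Psi^-}$, whose spectrum is $\{-1/2,1/2,1/2,1/2\}$ (it is proportional to the swap operator shifted by identity); adding the fully mixed component makes the smallest eigenvalue of $\rho_W(p)^{T_B}$ equal to $(1-3p)/4$, so $\rho_W(p)$ fails PPT, hence is entangled, precisely when $p>1/3$.

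The heart of the proof is the LV model at $p=1/2$. Following Werner, I would take the hidden variable $\vec{\lambda}$ uniformly distributed on the unit sphere. For projective measurements along $\vec{a}$ (Alice) and $\vec{b}$ (Bob) with outcomes in $\{-1,+1\}$, use the response functions
\ba
a(\vec{a},\vec{\lambda})\,=\,-\mathrm{sign}(\vec{a}\cdot\vec{\lambda}) &,& P(b|\vec{b},\vec{\lambda})\,=\,\frac{1+b\,\vec{b}\cdot\vec{\lambda}}{2}\,.\nonumber
\ea
The verification to carry out is (i)~that both marginals vanish, which is immediate by the rotational symmetry of $\vec{\lambda}$, and (ii)~that the correlator equals
\ba
\moy{ab}^{LV}_{\vec{a},\vec{b}} &=& -\int\frac{d\vec{\lambda}}{4\pi}\,\mathrm{sign}(\vec{a}\cdot\vec{\lambda})\,(\vec{b}\cdot\vec{\lambda}) \,=\, -\frac{1}{2}\vec{a}\cdot\vec{b}\,,\nonumber
\ea
which follows by taking the polar axis along $\vec{a}$, collapsing the azimuthal integral, and using the elementary identity $\int_0^{\pi}|\cos\theta|\sin\theta\,d\theta = 1$. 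Since a two-outcome joint distribution of binary variables is pinned down by the two marginals and the correlator, this matches the quantum prediction for $\rho_W(1/2)$ under arbitrary Pauli measurements $\vec{a}\cdot\vec{\sigma}\otimes \vec{b}\cdot\vec{\sigma}$. A convex-combination argument then propagates the model to every $p\leq 1/2$ via $\rho_W(p)=2p\,\rho_W(1/2)+(1-2p)\,\one/4$, the second summand being separable and hence trivially local. Combined with Step~1, this produces the entangled-yet-local states $\rho_W(p)$ for $p\in(1/3,1/2]$.

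The main obstacle is not the construction itself but its scope: the model above was built for von Neumann measurements, whereas a general black-box Bell test allows the untrusted devices to implement POVMs. To claim that \emph{no} Bell inequality is violated, one must extend the LV description to POVMs; the standard way (due to Barrett) is to exploit the fact that every qubit POVM decomposes, element by element, as a convex mixture of projectors, and to absorb the extra randomness into $\lambda$. I would flag this as the genuinely subtle point that separates Werner's original construction from the full statement, and remark that for the proof of mere \emph{existence} of an entangled local state --- which is what the theorem asks for --- handling projective measurements already delivers the conclusion for all Bell inequalities of the polytope type studied in the previous section, CHSH included.
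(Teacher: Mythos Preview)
Your proposal is correct and follows essentially the same route as the paper: Werner states, PPT for the entanglement side, and the sphere-valued hidden variable with one party outputting $\mathrm{sign}(\cdot)$ and the other using the single-qubit response $\frac{1}{2}(1+\cdot)$; you merely swap the roles of Alice and Bob relative to the paper's presentation, which is immaterial. Your flag about POVMs is well placed and in fact goes a step beyond the paper, which proves the Lemma only for von Neumann measurements and does not comment on the gap.
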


The proof will be given by exhibiting explicit counterexamples. Consider the single-parameter family of two-qubit states (``Werner states") defined as
\ba
\rho_W\,=\,W\ket{\Psi^-}\bra{\Psi^-}\,+\,(1-W)\frac{\one}{4} &,&W\in [0,1]\,.
\ea Using the criterion of the negative partial transposition, it can be proved that Werner states are separable for $W\leq \frac{1}{3}$ and entangled otherwise. The statistics of von Neumann measurements on Werner states are given by
\ba
{\cal P}_{Q}(W)&=&\left\{P(a,b|\vec{a},\vec{b})=\frac{1}{4}(1-W ab\,\vec{a}\cdot\vec{b})\,,\,a,b\in\{-1,+1\}, \vec{a},\vec{b}\in \mathbb{S}^2\right\}\,.
\label{statswerner}\ea

\begin{lemma}
The set ${\cal P}_{Q}(W)$ can be reproduced with LV if $W\leq \demi$.
\end{lemma}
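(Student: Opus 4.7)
Plan of proof. The goal is to exhibit an explicit local-variable model, in the sense of \eqref{lvprob}, reproducing the family $\mathcal{P}_{Q}(W)$ of singlet-like distributions when $W\leq\demi$. Following an idea due to Werner, I would take the shared variable to be a unit vector $\vec{\lambda}$ drawn uniformly on the Bloch sphere $S^2$ (with measure $d\vec{\lambda}/(4\pi)$), and build simple local response functions matched to those statistics.

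Concretely, I would define Alice's box to output deterministically $a(\vec{a},\vec{\lambda})=-\mathrm{sign}(\vec{a}\cdot\vec{\lambda})$ and Bob's box to output $+1$ with probability $\demi(1+2W\,\vec{b}\cdot\vec{\lambda})$ and $-1$ with the complementary probability. Both response functions depend only on the local setting and on $\vec{\lambda}$, so the factorization \eqref{lhvdef} is automatic. For Bob's rule to yield a legitimate probability one needs $|2W\,\vec{b}\cdot\vec{\lambda}|\leq 1$ for every pair of unit vectors, which is precisely the condition $W\leq\demi$; this is where the threshold announced in the statement enters.

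It then suffices to match the marginals and the correlator, since any joint distribution of $\pm 1$-valued outcomes reads
\ban
P(a,b|\vec{a},\vec{b})\,=\,\tfrac{1}{4}\big(1+a\moy{a}_{\vec{a}}+b\moy{b}_{\vec{b}}+ab\moy{ab}_{\vec{a},\vec{b}}\big).
\ean
Both marginals vanish by the antipodal symmetry $\vec{\lambda}\leftrightarrow-\vec{\lambda}$ of the uniform measure on $S^2$. For the correlator
\ban
\moy{ab}_{\vec{a},\vec{b}}\,=\,-\,\frac{2W}{4\pi}\int_{S^2}\mathrm{sign}(\vec{a}\cdot\vec{\lambda})\,(\vec{b}\cdot\vec{\lambda})\,d\vec{\lambda},
\ean
rotational invariance lets one align $\vec{a}$ with $\hat{z}$; the azimuthal integration kills the component of $\vec{b}$ orthogonal to $\vec{a}$, and the remaining one-dimensional integral evaluates to $\int_0^\pi |\cos\theta|\sin\theta\,d\theta=1$, giving $\moy{ab}_{\vec{a},\vec{b}}=-W\,\vec{a}\cdot\vec{b}$. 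Substituting these values into the display above, one recovers exactly $P(a,b|\vec{a},\vec{b})=\tfrac{1}{4}(1-Wab\,\vec{a}\cdot\vec{b})$, as required.

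The only substantive step is pinpointing the ansatz: once Bob is made to report a bit biased linearly in $\vec{b}\cdot\vec{\lambda}$ while Alice reports the deterministic sign, everything reduces to elementary spherical integrals, and the threshold $W=\demi$ emerges naturally as the probability-positivity constraint on Bob's response. Extending the same ansatz to $W>\demi$ would force Bob's rule out of the probability simplex, so any improvement of the bound must use a genuinely different strategy.
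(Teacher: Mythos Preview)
Your argument is correct and is essentially Werner's construction as presented in the paper: a uniformly distributed unit vector $\vec{\lambda}$, one party outputting the deterministic sign and the other a bit linearly biased in $\vec{\lambda}$. The only cosmetic differences are that the paper swaps the roles (Alice stochastic with bias $\vec{a}\cdot\vec{\lambda}$, Bob deterministic with $b=-\mathrm{sign}(\vec{b}\cdot\vec{\lambda})$) and proves the case $W=\tfrac{1}{2}$ first, then reaches smaller $W$ by mixing with white noise, whereas you fold the parameter $W$ directly into the stochastic response and read off the threshold $W\leq\tfrac{1}{2}$ as the positivity constraint; both routes reduce to the same spherical integral.
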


\begin{proof} It is clearly sufficient to exhibit the proof for $W=\demi$, since any $\rho_W$ with $W<\demi$ can be obtained by mixing $\rho_{\demi}$ with white noise. 

In each run, the pre-shared local variable is a vector $\vec{\lambda}$ drawn from the unit sphere $\mathbb{S}^2$ with uniform distribution  $\rho(\vec{\lambda})d\vec{\lambda} = \frac{1}{4\pi}\sin\theta d\theta d\varphi$ with the usual spherical coordinates. Alice's box simulates the measurement of a single spin prepared in the direction $\vec{\lambda}$:
\ba
P^A_{\vec{\lambda}}(a|\vec{a})&=&\demi\left(1+a\,\vec{a}\cdot\vec{\lambda}\right)\,.
\ea Bob's box outputs $b=-\textrm{sign}(\vec{b}\cdot\vec{\lambda})$, i.e. $b=+1$ if $\vec{b}\cdot\vec{\lambda}\leq 0$, $b=-1$ if $\vec{b}\cdot\vec{\lambda}> 0$). So we have
\ba
P(a,+1|\vec{a},\vec{b})&=&\int_{S^2}d\vec{\lambda}\rho(\vec{\lambda})\,P^A_{\vec{\lambda}}(a|\vec{a})\,\delta_{\vec{b}\cdot\vec{\lambda}\leq 0}\,=\,\frac{1}{4}+\frac{1}{2}\,a\,\int_{\vec{b}\cdot\vec{\lambda}\leq 0}d\vec{\lambda}\rho(\vec{\lambda})\,\vec{a}\cdot\vec{\lambda}\,.\label{wercomp}
\ea In order to compute the integral, we choose the spherical coordinates such that $\vec{b}$ is $\hat{z}$ (i.e. $\theta=0$):
\ban
\int_{\vec{b}\cdot\vec{\lambda}\leq 0}d\vec{\lambda}\rho(\vec{\lambda})\,\vec{a}\cdot\vec{\lambda}&=& \frac{1}{4\pi}\int_{\pi/2}^{\pi}d\theta\sin\theta  \int_{0}^{2\pi}d\varphi \Big[(a_x\cos\varphi+a_y\sin\varphi)\sin\theta +a_z\cos\theta\Big]\nonumber\\&=&\demi\,a_z\,\underbrace{\int_{\pi/2}^{\pi}d\theta\sin\theta \cos\theta}_{=-\demi}=-\frac{1}{4}a_z\,.
\ean Inserting this result into \eqref{wercomp} and recalling that $a_z=\vec{a}\cdot\vec{b}$, we recover indeed \eqref{statswerner} for $b=+1$. The calculation for $b=-1$ changes only in the bounds of the last integral and yields the desired result too. This concludes the proof of the Lemma. Since all the Werner states with $\frac{1}{3}<W\leq\demi$ are entangled, it proves Theorem \ref{thm:werner} as well.\end{proof}

\subsection{All that I left out}

The goal of this text is to present the device-independent outlook. In this perspective, the cases of violation of Bell inequalities in quantum physics constitute a \textit{bank of examples}. I have kept this section short because we have collected sufficiently many examples to move on. But I encourage the reader to learn more examples by referring to the comprehensive review \cite{ourreview}.

In particular, the reader may want to learn about \textit{other Bell inequalities}: with more inputs, more outputs, more parties. One will learn, first of all, that the computation of Tsirelson bounds and of violations for a family of states are not obvious tasks, contrary to what our specific Theorems \ref{thm1} and \ref{thm2} may suggest. Intriguing features emerge, which the simple CHSH does not possess: for instance, some inequalities are surprisingly maximally violated by non-maximally entangled states; other inequalities can be used as dimension witnesses, because the maximal violation cannot be achieved with (say) two qubits and requires in fact quantum systems of dimension larger than some bound.

It is also important to realize that the ultimate test of the non-classicality of entangled states need not consist of the elementary procedure ``take a state and measure it". The Bell test may be the final step of \textit{more complicated procedures}, involving for instance an initial filtering (\textit{hidden non-locality}) or multi-partite preparation (\textit{activation of non-locality}).

\section{Device-independent assessment}
\label{secdi}

\subsection{A new vantage point}

When I entered the field back in the year 2000, Bell inequalities seemed to have already fulfilled their mission. Sure enough, a few passionate researchers were still distilling moderately interesting mathematical and physical insight from them. But most colleagues regarded them as one would regard, for instance, Foucault's pendulum: an instrument, which has allowed humankind to firmly establish a crucial fact about our physical world. Both the pendulum and the inequalities should feature in the science museums of the world, so that every curious human would be informed that the Earth rotates around its axis and that there exists intrinsic randomness --- but research should move forward. For a few years, there was no strong argument against this stance: the few of us, who were still researching on Bell inequalities and related notions, looked like the lovers of Kodak films enjoying their last moments of fun at the dawn of the digital era.

It all changed thanks to a complex chain of thoughts, stretching between 2005 and 2007, which I narrate in Appendix \ref{sspathdi} because I have chosen to order the materials in this text in a logical, rather than chronological, sequence. The final outcome can however be explained here because it was, in hindsight, a \textit{flash of the obvious}.

I stressed enough many times that Bell inequalities are independent of quantum theory --- and so they must be, if they have to test quantum theory against the alternative description of pre-established agreement. The flash of the obvious is that one can be more specific: \textit{the assessment of a Bell test does not rely on the knowledge of the degrees of freedom that are measured}. Notions like ``photons", ``polarization", ``complementary bases", ``dimension of the Hilbert space", are completely dispensed with at the moment of \textit{analyzing} a Bell test (though, needless to say, the experimentalists who \textit{build} the test had better have a very good control of all that). In other words, Bell inequalities are the only entanglement witnesses that do not rely on assumptions about dimensionality of the state or commutation relations of the observables.

This observation opens up the possibility of \textit{device-independent assessment} of entanglement and of related operational quantities like the secrecy of a cryptographic key, the amount of intrinsic randomness that is generated, etc. Besides their undisputed role in shaping the scientific worldview, Bell inequalities have a very unique role to play in future quantum technologies by providing the \textit{ultimate level of device certification}.

\subsection{Self-testing}

Self-testing, which could also be called \textit{device-independent characterization of the state and the measurements}, or simply \textit{blind tomography}, refers to the fact that some statistics predicted by quantum theory determine the state and the measurement as uniquely as possible, namely up to a local isometry. We first explain this equivalence class, then discuss explicit examples of self-testing.

\subsubsection{Equivalence up to a local isometry}

If the observed statistics $P(a,b|x,y)$ are the only available data, the state and the measurements can be characterized at most \textit{up to local isometries}. Indeed, if nothing is assumed about the \textit{dimension}, one obtains the same statistics \ba
P(a,b|x,y)&=&\textrm{Tr}(\rho\,E_a^x\otimes E_b^y)=\textrm{Tr}(\tilde{\rho}\,\tilde{E}_a^x\otimes \tilde{E}_b^y)\ea
by appending other degrees of freedom according to $\tilde{\rho}=\rho\otimes \sigma_{A'B'C}$, and performing trivial measurements on those ($\tilde{E}_a^x=E_a^x\otimes \one_{A'C}$ and $\tilde{E}_b^y=E_b^y\otimes\one_{B'C}$). Further, if nothing is assumed about the \textit{measurements}, one obtains the same statistics if states and measurements differ by local unitaries, i.e. $\tilde{\rho}=U_A\otimes U_B\rho U_A^\dagger\otimes U_B^\dagger$, $\tilde{E}_a^x=U_A E_a^x U_A^\dagger$ and $\tilde{E}_b^y=U_B E_b^y U_B^\dagger$.

Let us practice this notion on an example. Consider the state
\ba
\ket{\Psi}&=&\sum_{k=0,1,...} c_{k}\,\frac{\ket{2k,2k}+\ket{2k+1,2k+1}}{\sqrt{2}}\,.\label{selftest0}
\ea This is visibly a direct sum of singlets --- since local unitaries are free, here I shall call ``singlet" any maximally entangled state of two qubits and shall rather use $\ket{\Phi^+}=\frac{\ket{00}+\ket{11}}{\sqrt{2}}$. One expects this state to exhibit at least all the properties of the singlet. A local isometry helps to make this fact manifest. Let us append to $\ket{\Psi}_{AB}$ a two-qubit local ancilla $\ket{00}_{A'B'}$ and apply to both sides the local isometry $\Phi=\Phi_A\otimes\Phi_B$ with
\begin{subequations}
\begin{align}
\Phi_A \ket{2k,0}_{AA'} & \mapsto \ket{2k,0}_{AA'}, \\ \Phi_A \ket{2k+1,0}_{AA'} & \mapsto \ket{2k,1}_{AA'}, 
\end{align}
\end{subequations} and $\Phi_B$ identically defined. We obtain
\ba
\ket{\Psi}_{AB}\ket{00}_{A'B'}& \mapsto & \Big[c_0\ket{00}+c_1\ket{22}+...+c_k\ket{2k,2k}+...\Big]_{AB}\ket{\Phi^+}_{A'B'}\,.
\ea
The local isometry has mapped the singlet into $A'B'$, while $AB$ carry the rest of the structure (in which there may be a lot of entanglement left).

With these notions into place, we can study self-testing.

\subsubsection{Self-test of the singlet using CHSH}

The simplest self-test criterion to state is probably the following:
\begin{theorem}
If a CHSH test yields $S=2\sqrt{2}$ exactly, then, up to local isometries, the state is a singlet of two qubits and the measurements are the corresponding Pauli matrices.
\end{theorem}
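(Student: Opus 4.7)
The plan is to use the \emph{sum-of-squares} (SOS) refinement of the proof of Theorem~\ref{thm1} to turn the saturation $S=2\sqrt{2}$ into exact algebraic identities on the state, and then to construct an explicit local isometry that extracts a two-qubit singlet. I assume WLOG that the shared state is a pure $\ket{\psi}_{AB}$ (purifying if necessary).

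First, I would upgrade Theorem~\ref{thm1} to an SOS identity. Setting $M_0=\frac{1}{\sqrt{2}}(\hat{A}_0+\hat{A}_1)\otimes\one$ and $M_1=\frac{1}{\sqrt{2}}(\hat{A}_0-\hat{A}_1)\otimes\one$, a direct computation using $\hat{A}_x^2=\one$ and $\hat{B}_y^2=\one$ gives
\ban
2\sqrt{2}\,\one-\hat{S}&=&\tfrac{1}{\sqrt{2}}\big(M_0-\one\otimes\hat{B}_0\big)^2+\tfrac{1}{\sqrt{2}}\big(M_1-\one\otimes\hat{B}_1\big)^2\,.
\ean
The hypothesis $\moy{\hat{S}}=2\sqrt{2}$ forces both nonnegative terms to have zero expectation on $\ket{\psi}$, hence
\ban
M_i\ket{\psi}&=&(\one\otimes\hat{B}_i)\ket{\psi}\,,\quad i=0,1\,.
\ean
These two identities are the whole source of information about the algebraic structure of the state and observables on the support of $\ket{\psi}$.

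Second, I would derive Pauli-like relations \emph{on the state}. A one-line calculation gives $\{M_0,M_1\}=(\hat{A}_0^2-\hat{A}_1^2)\otimes\one=0$; substituting the two identities above and using that $M_i$ commutes with $\one\otimes\hat{B}_j$, one obtains $(\one\otimes\{\hat{B}_0,\hat{B}_1\})\ket{\psi}=0$. A symmetric argument, starting from the analogous SOS written in terms of $\hat{B}$'s, yields $(\{\hat{A}_0,\hat{A}_1\}\otimes\one)\ket{\psi}=0$. Thus on the support of the state, Alice's and Bob's observables are pairs of anticommuting $\pm 1$-valued involutions --- exactly the Pauli algebra of a qubit.

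Third, I would make ``up to local isometry'' explicit. By Jordan's lemma, any pair of $\pm 1$-valued Hermitians decomposes simultaneously into invariant blocks of dimension $\leq 2$; combined with the anticommutation derived above, $\ket{\psi}$ is supported on those $2\times 2$ blocks where $(\hat{A}_0,\hat{A}_1)$ can be identified, via a local unitary, with $(\sigma_z,\sigma_x)$, and analogously for Bob after the $45^\circ$ rotation dictated by $\hat{B}_i\leftrightarrow(\hat{A}_0\pm\hat{A}_1)/\sqrt{2}$. Appending ancilla qubits $\ket{00}_{A'B'}$ and applying on each side the standard ``swap-like'' isometry (Hadamard--controlled-$Z$--Hadamard--controlled-$X$ on the ancilla, with targets $\hat{A}_{0,1}$ and the rotated $\hat{B}_{0,1}$), the image of $\ket{\psi}\otimes\ket{00}_{A'B'}$ can be unpacked using only the identities of the first two steps; it factorises as $\ket{\mathrm{junk}}_{AB}\otimes\ket{\Phi^+}_{A'B'}$, while the observables are simultaneously sent to $\sigma_z,\sigma_x$ acting on the extracted qubits. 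The main obstacle is precisely this last step: the anticommutation holds only \emph{on} $\ket{\psi}$, so every application of the swap circuit must invoke the relations $M_i\ket{\psi}=(\one\otimes\hat{B}_i)\ket{\psi}$ one step at a time rather than assuming the operators anticommute globally --- Jordan's lemma is the conceptual crutch that guarantees the bookkeeping closes, and it is this careful on-support reasoning that will be the delicate part of writing out the proof in full.
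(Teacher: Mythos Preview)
The paper does not give a self-contained proof of this theorem: it explicitly says ``so I skip it here'' and defers to the Mayers--Yao argument that follows, remarking in a footnote that the CHSH case is handled by showing that $S=2\sqrt{2}$ implies relations \eqref{st1} and \eqref{st4} for suitably defined operators, after which the swap-circuit computation of Fig.~\ref{figisom} goes through verbatim. Your SOS identity is precisely the standard way to produce those relations, and your derivation of $\{\hat B_0,\hat B_1\}\ket\psi=0$ and $\{\hat A_0,\hat A_1\}\ket\psi=0$ is correct. So your approach is not an alternative to the paper's --- it is the missing first half that the footnote only alludes to.

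There is one technical gap. You propose to run the swap circuit with ``the rotated $\hat B_{0,1}$'', i.e.\ with $(\hat B_0\pm\hat B_1)/\sqrt{2}$ as Bob's controlled gates. These operators are Hermitian but not unitary (their square is $\one\pm\tfrac12\{\hat B_0,\hat B_1\}$), so they cannot be used as gates as written. The standard fix is to regularise: set $\tilde Z_B=\mathrm{sgn}(\hat B_0+\hat B_1)$ and $\tilde X_B=\mathrm{sgn}(\hat B_0-\hat B_1)$ via polar decomposition, completed to involutions on the kernels. Since $\{\hat B_0,\hat B_1\}\ket\psi=0$ makes $\ket\psi$ an eigenvector of $(\hat B_0\pm\hat B_1)^2$ with eigenvalue $2$, one gets $\tilde Z_B\ket\psi=\tfrac{1}{\sqrt2}(\hat B_0+\hat B_1)\ket\psi=\hat A_0\ket\psi$ and similarly $\tilde X_B\ket\psi=\hat A_1\ket\psi$. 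With $Z_A=\hat A_0$, $X_A=\hat A_1$, $Z_B=\tilde Z_B$, $X_B=\tilde X_B$ --- all four now genuine unitary involutions --- you have exactly \eqref{st1} and \eqref{st4}, and the paper's calculation from \eqref{transfstate} onwards closes using only those on-state relations together with $Z_A^2=X_A^2=Z_B^2=X_B^2=\one$. Jordan's lemma is therefore not what ``guarantees the bookkeeping closes''; it is a legitimate alternative route (restrict to the blocks where the anticommutation is exact), but the swap-circuit argument is self-contained once the gates are honest involutions.
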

In other words, any state that leads to $S=2\sqrt{2}$ can be written as \eqref{selftest0}, or as a mixture of such states out of which the singlet can be extracted with the same isometry.

This theorem has been proved in various ways \cite{pr92,bmr92}, and ultimately the proof can be related to the one given below; so I skip it here. But two remarks are worth making. First, it is remarkable that self-testing can be achieved from a single number, associated to a procedure that uses only two measurements per site. Second, only extremal points of the set of quantum statistics can be self-tested exactly, but these points can never be achieved in a real experiment: therefore, self-testing calls for robustness bounds if it is ever to become useful. As it turns out, such bounds have been given; but I'll skip them, since they are not optimal and only add technicalities to the proofs.

\subsubsection{Self-test of the singlet using the Mayers-Yao statistics}

Consider two settings labeled $\{X_A,Z_A\}$ on Alice's side and three settings labeled $\{X_B,Z_B,D_B\}$ on Bob's side. All the measurements are binary and their outcomes are labeled $\pm 1$. The locality of the measurements $[M_A,N_B]=0$ is assumed; apart from this, nothing is known \textit{a priori} about these measurements. In particular, the dimensionality is not known: this implies that any experiment can be described by the \textit{projective} measurement of an unknown but \textit{pure} bipartite state $\ket{\Psi}$. In turn, this implies that for each of the measurements being performed, there exist two projectors $\Pi_{+1}$ and $\Pi_{-1}$ associated to the two outcomes. The pure state and the two projectors are the mathematical objects that describe the content of the boxes; all the rest will be constructed from them. In particular, for each measurement we construct the operator $M=\Pi_{+1}-\Pi_{-1}$, which by construction is unitary and hermitian, with the property $M^2=\one$. The five operators $\{X_A,Z_A;X_B,Z_B,D_B\}$ must be considered as mathematical objects built in this way\footnote{This is crucial for understanding the proof. Notably, $M^2$ does \textit{not} refer to two sequential application of the unknown box (if this were the case, there would be no guarantee that $M^2=\one$ without having to assume that the boxes are non-demolition L\"uders instruments). In each run, the unknown box is used only once. The rest are mathematical manipulations based on the fact that each box must define those two projectors.}.

Suppose now that one observes that all the $\sandwich{\Psi}{M_AN_B}{\Psi}$ are equal to the two-qubit values one would obtain for the singlet and the corresponding Pauli matrices $\sandwich{\Phi^+}{\sigma_m\otimes\sigma_n}{\Phi^+}$, with the identification $\sigma_d=\frac{1}{\sqrt{2}}(\sigma_z+\sigma_x)$: then $\ket{\Psi}$ is equivalent to $\ket{\Phi^+}$ up to local isometries, and moreover all the operators are also equivalent to Pauli matrices acting on the effective qubit. This is a modification of the original Mayers-Yao scheme \cite{my04}, which used three settings also on Alice's side; the proof can be presented here thanks to the work of Matthew McKague, who found a way of rederiving the original result in simple terms.

\begin{theorem}
Consider five unknown unitary operators $\{X_A,Z_A;X_B,Z_B,D_B\}$ with binary outcomes labeled $\pm 1$ and assumed to fulfill $[M_A,N_B]=0$: if
\ba
\sandwich{\Psi}{Z_AZ_B}{\Psi}\,=\,\sandwich{\Psi}{X_AX_B}{\Psi}&=&1\label{selftest1}\\
\sandwich{\Psi}{X_AZ_B}{\Psi}\,=\,\sandwich{\Psi}{Z_AX_B}{\Psi}&=&0\label{selftest2}\\
\sandwich{\Psi}{Z_AD_B}{\Psi}\,=\,\sandwich{\Psi}{X_AD_B}{\Psi}&=&1/\sqrt{2}\label{selftest3}
\ea then there exist a local isometry $\Phi=\Phi_A\otimes\Phi_B$ such that
\ba
\Phi\ket{\Psi}_{AB}\ket{00}_{A'B'}&=&\ket{\mathrm{junk}}_{AB}\ket{\Phi^+}_{A'B'}\,,\\
\Phi\,M_AN_B\ket{\Psi}_{AB}\ket{00}_{A'B'}&=&\ket{\mathrm{junk}}_{AB}\left(\sigma_m\otimes\sigma_n\ket{\Phi^+}_{A'B'}\right)\,.
\ea
\end{theorem}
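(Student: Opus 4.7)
The plan is to extract from the observed expectation values a small set of vector identities on $\ket\Psi$, and then feed them into a standard SWAP-like local circuit that transfers a singlet onto a pair of ancillary qubits. The perfect-correlation inputs \eqref{selftest1} are handled first: since $Z_AZ_B$ is Hermitian with spectrum $\{\pm 1\}$, $\sandwich{\Psi}{Z_AZ_B}{\Psi}=1$ forces $Z_AZ_B\ket\Psi=\ket\Psi$, and multiplying by $Z_B$ with $Z_B^2=\one$ yields $Z_A\ket\Psi=Z_B\ket\Psi$. The same reasoning gives $X_A\ket\Psi=X_B\ket\Psi$, and hence $(X_A+Z_A)\ket\Psi=(X_B+Z_B)\ket\Psi$.

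The crux is to promote the remaining scalar data \eqref{selftest2}-\eqref{selftest3} to the operator-on-state identity $\{X_A,Z_A\}\ket\Psi=0$. First, substituting $Z_B\ket\Psi\to Z_A\ket\Psi$ into $\sandwich{\Psi}{X_AZ_B}{\Psi}=0$ gives $\sandwich{\Psi}{X_AZ_A}{\Psi}=0$; the twin move on $\sandwich{\Psi}{Z_AX_B}{\Psi}=0$ yields $\sandwich{\Psi}{Z_AX_A}{\Psi}=0$, and adding them delivers $\sandwich{\Psi}{\{X_A,Z_A\}}{\Psi}=0$ because $(X_AZ_A)^\dagger=Z_AX_A$. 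Consequently $\|(X_A+Z_A)\ket\Psi\|^2=2$, while $\|D_B\ket\Psi\|^2=1$ and $\sandwich{\Psi}{(X_A+Z_A)D_B}{\Psi}=\sqrt{2}$: Cauchy-Schwarz is saturated and so
\ba
(X_A+Z_A)\ket\Psi&=&\sqrt{2}\,D_B\ket\Psi\,.
\ea
Multiplying this identity on the left successively by $X_A$ and by $Z_A$, using $X_A^2=Z_A^2=D_B^2=\one$ together with $[X_A,D_B]=[Z_A,D_B]=0$ to pull $X_A$ and $Z_A$ past $D_B$, and then invoking the mirror relation $(X_B+Z_B)\ket\Psi=\sqrt{2}\,D_B\ket\Psi$ to collapse $\sqrt{2}\,D_B(X_B+Z_B)\ket\Psi=2\,D_B^2\ket\Psi=2\ket\Psi$, the two resulting equalities sum to $(2\one+\{X_A,Z_A\})\ket\Psi=2\ket\Psi$, i.e.~$\{X_A,Z_A\}\ket\Psi=0$.

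With these identities in hand, the isometry is the standard one. On each side one appends an ancilla qubit in $\ket 0$ and applies, in order, a Hadamard on the ancilla, a controlled-$Z_S$ (ancilla as control, $S$ as target), a Hadamard, and a controlled-$X_S$; a short expansion gives $\Phi_S\ket\psi\ket 0=P_S^+\ket\psi\ket 0+X_SP_S^-\ket\psi\ket 1$ with $P_S^{\pm}=\frac{1}{2}(\one\pm Z_S)$, and $\Phi=\Phi_A\otimes\Phi_B$ is an isometry since $X_S^2=Z_S^2=\one$. Expanding $\Phi\ket\Psi\ket{00}$ into the four ancilla-labelled pieces and reducing with the identities above (via $Z_B\ket\Psi\to Z_A\ket\Psi$, $X_B\ket\Psi\to X_A\ket\Psi$, and $X_AZ_AX_A\ket\Psi\to-Z_A\ket\Psi$), the $\ket{01}_{A'B'}$ and $\ket{10}_{A'B'}$ sectors vanish through $(\one+Z_A)(\one-Z_A)\ket\Psi=0$ while the $\ket{00}_{A'B'}$ and $\ket{11}_{A'B'}$ sectors both collapse to $\frac{1}{2}(\one+Z_A)\ket\Psi$. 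This yields $\Phi\ket\Psi\ket{00}=\ket{\mathrm{junk}}_{AB}\otimes\ket{\Phi^+}_{A'B'}$ with $\ket{\mathrm{junk}}=\frac{1}{\sqrt{2}}(\one+Z_A)\ket\Psi$, whose unit norm is automatic because $\Phi$ preserves norms. The second claim follows from the same expansion with $M_AN_B$ inserted in front of $\ket\Psi$: the book-keeping rules $P_S^{\pm}Z_S=\pm P_S^{\pm}$ and $P_A^{\pm}X_A\ket\Psi=X_AP_A^{\mp}\ket\Psi$ (the latter a restatement of the anticommutation) let $M_A$ and $N_B$ march through the gates of $\Phi$; the junk state factorizes out unchanged, while the ancilla pair acquires $\sigma_m\otimes\sigma_n\ket{\Phi^+}$.

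The main obstacle is Step 2: it is the unique point where a numerical expectation-value input is upgraded to a vector identity, and every later reduction (notably the vanishing of the two off-diagonal ancilla sectors and the collapse of the $\ket{11}_{A'B'}$ sector) depends on the \emph{exact} anticommutation. Any residual $\sandwich{\Psi}{\{X_A,Z_A\}}{\Psi}\neq 0$, or any non-saturation of the Cauchy-Schwarz inequality, would leak into the final expansion and destroy the factorization of the singlet; controlling this step by a quantitative stability estimate is precisely what a robust (noisy) version of the theorem would have to achieve.
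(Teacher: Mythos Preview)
Your proof is correct and follows essentially the same route as the paper. The only differences are cosmetic: where the paper argues that $D_B\ket\Psi$ must equal $\frac{1}{\sqrt 2}(Z_A+X_A)\ket\Psi$ because its two projections onto the orthonormal pair $Z_A\ket\Psi,\,X_A\ket\Psi$ already exhaust its norm, you phrase the same saturation as Cauchy--Schwarz; and where the paper squares $D_B$ directly to reach $\big(\frac{Z_A+X_A}{\sqrt 2}\big)^2\ket\Psi=\ket\Psi$, you obtain the same identity by left-multiplying by $X_A$ and $Z_A$ separately and summing. The isometry and the subsequent reduction are identical to the paper's.
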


\begin{proof} The condition \eqref{selftest1} can be rewritten as
\ba
Z_A\ket{\Psi}=Z_B\ket{\Psi}\textrm{ and }X_A\ket{\Psi}=X_B\ket{\Psi}\label{st1}\,.
\ea
Inserting this into \eqref{selftest2}, we find $\sandwich{\Psi}{X_AZ_A}{\Psi}=0$, that is $X_{A}\ket{\Psi}$ is orthogonal to $Z_{A}\ket{\Psi}$. If this is the case, then \eqref{selftest3} means that 
\ba D_B\ket{\Psi}&=&\frac{Z_A+X_A}{\sqrt{2}}\ket{\Psi}\,:\label{st3}\ea indeed, $D_B\ket{\Psi}$ must be of norm 1, and we know already two of its projections of amplitude $\frac{1}{\sqrt{2}}$ on two orthogonal vectors, so there can't be anything left. The last preparatory step consists in computing $D_B^2\ket{\Psi}$. One must be slightly careful here, because \eqref{st3} tells us how $D_B$ behaves on $\ket{\Psi}$, not how it behaves on the different vector $D_B\ket{\Psi}$. Nevertheless, using $[M_A,N_B]=0$ we obtain
\ban
D_B^2\ket{\Psi}&=&D_B\,\frac{Z_A+X_A}{\sqrt{2}}\ket{\Psi}\,=\,\frac{Z_A+X_A}{\sqrt{2}}\,D_B\ket{\Psi}\,=\,\Big(\frac{Z_A+X_A}{\sqrt{2}}\Big)^2\ket{\Psi}\\
&=&\demi(Z_A^2+X_A^2+Z_AX_A+X_AZ_A)\ket{\Psi}\,.
\ean
But $D_B^2=Z_A^2=X_A^2=\one$, so finally
\ba
Z_AX_A\ket{\Psi}\,=\,-X_AZ_A\ket{\Psi}&\stackrel{\eqref{st1}}{\Longrightarrow}& Z_BX_B\ket{\Psi}\,=\,-X_BZ_B\ket{\Psi}\,.\label{st4}
\ea
The third setting $D_B$ was instrumental in deriving these anti-commutation relations and has now finished its role.

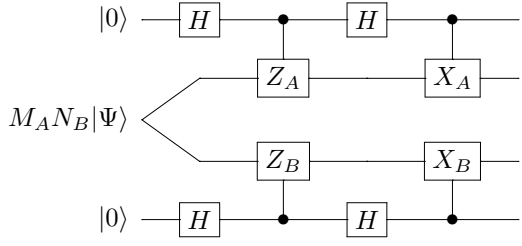
\begin{figure}[t]
\[
\Qcircuit @C=0.5cm @R=.3cm {
\lstick{\ket{0}}  & \gate{H} & \ctrl{1}  & \gate{H} & \ctrl{1} &\qw \\
 &  & \gate{Z_{A}} & \qw & \gate{X_{A}} & \qw\\
 \lstick{M_{A} N_{B}\ket{\Psi}} \ar@{-}[ur] \ar@{-}[dr] & & & & & \\
              &    & \gate{Z_{B}} & \qw & \gate{X_{B}} & \qw\\
\lstick{\ket{0}}     & \gate{H} & \ctrl{-1} & \gate{H} & \ctrl{-1} & \qw
}
\]
\caption{Local isometry $\Phi$ that allows self-testing of the singlet and all the measurements of the Mayers-Yao test ($M,N\in \{I,X,Z\}$). From top to bottom, the rows represent the systems $A'$ (qubit), $A$ (unconstrained), $B$ (unconstrained) and $B'$ (qubit). $H$ is the Hadamard unitary gate defined as usual: $H\ket{0}=\frac{1}{\sqrt{2}}(\ket{0}+\ket{1})$, $H\ket{1}=\frac{1}{\sqrt{2}}(\ket{0}-\ket{1})$. The controlled gates act non-trivially on the target when the control qubit is in state $\ket{1}$.}
\label{figisom}
\end{figure}

The rest of the proof consists in exhibiting an explicit local isometry which leads to self-test of the singlet and of the corresponding measurements when \eqref{st1} and \eqref{st4} hold\footnote{The proof of self-testing from CHSH can be made by showing that $S=2\sqrt{2}$ implies \eqref{st1} and \eqref{st4} for suitably defined operators; then continuing with the same steps as we are going to describe.}. The isometry is described in Fig.~\ref{figisom}: a simple step-by-step calculation shows that it implements the transformation
\ba\label{transfstate}
\Phi\ket{\Psi}_{AB}\ket{00}_{A'B'}&=&\frac{1}{4}\Big[[(\one+Z_A)(\one+Z_B)\ket{\Psi}]\,\ket{00}\nonumber\\&&+ [X_AX_B(\one-Z_A)(\one-Z_B)\ket{\Psi}]\,\ket{11}\nonumber\\
&&+[X_B(\one+Z_A)(\one-Z_B)\ket{\Psi}]\,\ket{01}\nonumber\\&&+[X_A(\one-Z_A)(\one+Z_B)\ket{\Psi}]\,\ket{10}\Big]\,.
\ea
First, using \eqref{st1}, we replace $Z_B$ with $Z_A$, and this cancels the third and fourth lines because $(\one+Z_A)(\one-Z_A)=\one-Z_A^2=0$. Then, using \eqref{st4}, one proves that $X_AX_B(\one-Z_A)(\one-Z_B)\ket{\Psi}=(\one+Z_A)(\one+Z_B)X_AX_B\ket{\Psi}$, which is in turn equal to $(\one+Z_A)(\one+Z_B)\ket{\Psi}$ because of \eqref{st1}. Finally $(\one+Z_A)(\one+Z_B)\ket{\Psi}=(\one+Z_A)^2\ket{\Psi}=2(\one+Z_A)\ket{\Psi}$, so we have found
\ban
\Phi\ket{\Psi}_{AB}\ket{00}_{A'B'}&=&\frac{1}{2}(\one+Z_A)\ket{\Psi}\,[\ket{00}+\ket{11}]\,=\,
\underbrace{\frac{\one+Z_A}{\sqrt{2}}\ket{\Psi}_{AB}}_{=\ket{\mathrm{junk}}_{AB}}\,\ket{\Phi^+}_{A'B'}
\ean which is the self-testing of the state.

The proof for the measurements follows the same steps, starting with $\Phi\,M_AN_B\ket{\Psi}_{AB}\ket{00}_{A'B'}$ instead of \eqref{transfstate}. Let me show it for one of the six cases:
\ban
\Phi\,\underline{X_A}\ket{\Psi}_{AB}\ket{00}_{A'B'}&=&\frac{1}{4}\Big[[(\one+Z_A)(\one+Z_B)\underline{X_A}\ket{\Psi}]\,\ket{00}\nonumber\\&&+ [X_AX_B(\one-Z_A)(\one-Z_B)\underline{X_A}\ket{\Psi}]\,\ket{11}\nonumber\\
&&+[X_B(\one+Z_A)(\one-Z_B)\underline{X_A}\ket{\Psi}]\,\ket{01}\nonumber\\&&+[X_A(\one-Z_A)(\one+Z_B)\underline{X_A}\ket{\Psi}]\,\ket{10}\Big]
\ean By using \eqref{st4}, one moves $X_A$ to the left in the first, second and fourth lines while changing the sign of $Z_A$, and $X_B$ to the right in the third while changing the sign of $Z_B$. The analysis is then the same as above, and the result is
\ban
\Phi\,X_A\ket{\Psi}_{AB}\ket{00}_{A'B'}&=& \frac{1}{2}(\one+Z_A)\ket{\Psi}\,[\ket{01}+\ket{10}]\,=\,
\ket{\mathrm{junk}}_{AB}\,\left(\sigma_x\otimes\one\,\ket{\Phi^+}_{A'B'}\right)\,.
\ean This concludes the proof, which was unpublished as such, but is a particular case of the robustness proofs presented in \cite{mys12}. \end{proof}


\subsection{Intermezzo: two tools}

The main difficulty that one encounters in device-independent studies is the fact that there are \textit{a priori} infinitely many free parameters, since the dimension of the Hilbert space is arbitrary. A reduction to a finite problem seems to be necessary if one wants to compute explicit bounds. In self-testing, for instance, the isometry takes care of relegating all the potential infinity into the junk --- by the way, in the text I have by-passed the main difficulty faced in the research work by providing the isometry, instead of leaving it to be found by the reader. Before moving to the next example of device-independent assessment, we need to introduce two of the main other tools developed so far. They are not on equal footing: the first one is very specific to the CHSH case, the second is far more general.

\subsubsection{A very specific tool: decomposition of CHSH}
\label{sssblocks}

We start with Jordan's lemma:
\begin{lemma}\label{lemmabinary}
Let $\hat{A}_0$ and $\hat{A}_1$ be two Hermitian operators with eigenvalues $-1$ and $+1$. There exist a basis in which both operators are block-diagonal, in blocks of dimension $2\times 2$ at most.
\end{lemma}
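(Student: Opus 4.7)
The plan is to exploit the fact that, since the eigenvalues of each $\hat{A}_i$ are $\pm 1$, both operators are Hermitian involutions: $\hat{A}_0^2=\hat{A}_1^2=\one$. The key object is then the Hermitian operator
\[
\hat{C}\,=\,\hat{A}_0\hat{A}_1+\hat{A}_1\hat{A}_0\,,
\]
which, thanks to $\hat{A}_i^2=\one$, commutes with both $\hat{A}_0$ and $\hat{A}_1$ (a direct two-line check: $[\hat{A}_0,\hat{C}]=\hat{A}_0^2\hat{A}_1-\hat{A}_1\hat{A}_0^2=0$, and symmetrically for $\hat{A}_1$). Therefore the eigenspaces of $\hat{C}$ give an orthogonal decomposition of the Hilbert space that is simultaneously invariant under $\hat{A}_0$ and $\hat{A}_1$. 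The problem reduces to analysing the action of the two operators on a single eigenspace $V_c$, where by construction $\acom{\hat{A}_0}{\hat{A}_1}=c\,\one$ with $c\in\real$.

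On $V_c$ I would distinguish the two regimes according to the value of $c$. If $c=\pm 2$, then $(\hat{A}_0\mp\hat{A}_1)^2=\hat{A}_0^2+\hat{A}_1^2\mp\acom{\hat{A}_0}{\hat{A}_1}=0$; since $\hat{A}_0\mp\hat{A}_1$ is Hermitian, this forces $\hat{A}_0=\pm\hat{A}_1$, so the two operators are simultaneously diagonalisable on $V_c$ and the invariant subspaces are one-dimensional. If instead $|c|<2$, I pick any eigenvector $v$ of $\hat{A}_0|_{V_c}$ with eigenvalue $\epsilon\in\{-1,+1\}$ and set $w=\hat{A}_1 v$. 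Then $\hat{A}_1 w=v$ and, using the anticommutation relation, $\hat{A}_0 w=(c\one-\hat{A}_1\hat{A}_0)v=cv-\epsilon w$, so $W=\textrm{Span}(v,w)$ is a subspace invariant under both operators. Linear independence of $v$ and $w$ is forced by the assumption $|c|<2$: if $w=\alpha v$, then $v$ would be a common eigenvector of $\hat{A}_0$ and $\hat{A}_1$, giving $c=\pm 2$, a contradiction. Hence $\dim W=2$.

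The remaining step is to iterate. Since $\hat{A}_0$ and $\hat{A}_1$ are Hermitian, the orthogonal complement $W^\perp$ inside $V_c$ is also invariant under both of them, and the restriction satisfies the same hypotheses. By induction on $\dim V_c$ one obtains an orthogonal decomposition of $V_c$ into one- and two-dimensional invariant subspaces, and concatenating across the eigenspaces of $\hat{C}$ yields the desired basis.

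I expect the main obstacle to be a purely pedagogical one: convincing the reader that the algebraic identity $(\hat{A}_0\mp\hat{A}_1)^2=0$ actually implies $\hat{A}_0=\pm\hat{A}_1$, which uses Hermiticity in a non-trivial way, and that the two-dimensional subspace constructed in the generic case is genuinely invariant. There is no real analytical difficulty (no need to go through the spectral theorem of the unitary $\hat{A}_0\hat{A}_1$ or Clifford algebra representation theory) once the operator $\hat{C}$ has been identified.
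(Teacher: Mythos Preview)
Your proof is correct and takes a genuinely different route from the paper's. The paper works with the unitary $U=\hat{A}_0\hat{A}_1$: for each eigenvector $\ket{\alpha,0}$ of $U$ with eigenvalue $\omega_\alpha$, it shows that $\ket{\alpha,1}=\hat{A}_0\ket{\alpha,0}$ is an eigenvector of $U$ with the conjugate eigenvalue $\omega_\alpha^*$, and that both $\hat{A}_0$ and $\hat{A}_1$ act within $\textrm{Span}\{\ket{\alpha,0},\ket{\alpha,1}\}$; the block structure then follows directly from the spectral decomposition of $U$. Your approach instead uses the Hermitian anticommutator $\hat{C}=\acom{\hat{A}_0}{\hat{A}_1}$, whose eigenspaces are automatically invariant under both operators, and then builds the two-dimensional blocks by hand inside each eigenspace via $v\mapsto\hat{A}_1 v$. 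The two constructions are closely related (indeed $\hat{C}=U+U^\dagger$, so your eigenvalue $c$ equals $2\,\mathrm{Re}(\omega_\alpha)$), but yours stays entirely within the Hermitian world and makes the degenerate case $c=\pm 2$ --- equivalently $\omega_\alpha=\pm 1$, where the block collapses to dimension one --- explicit, whereas the paper leaves that case implicit. The price you pay is the extra iteration on $W^\perp$, which the paper sidesteps by appealing directly to the completeness of the eigenbasis of $U$.
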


\begin{proof} By definition, $\hat{A}_0^2=\hat{A}_1^2=\one$ in the suitable (unknown) dimension. It is then trivial to check that $U=\hat{A}_0\hat{A}_1$ is unitary. Let us denote $\ket{\alpha,0}$ an eigenstate of $U$: $U\ket{\alpha,0}=\omega_\alpha\ket{\alpha,0}$ with $|\omega_\alpha|=1$. Then $\ket{\alpha,1}=\hat{A}_0\ket{\alpha,0}$ is also an eigenstate of $U$: indeed, $U\ket{\alpha,1}=\hat{A}_0\hat{A}_1\hat{A}_0\ket{\alpha,0}=\hat{A}_0U^\dagger\ket{\alpha,0}=\omega_\alpha^*\ket{\alpha,1}$. Therefore:
\ba
\hat{A}_0\ket{\alpha,0}=\ket{\alpha,1}&\mathrm{ and }&\hat{A}_0\ket{\alpha,1}=\ket{\alpha,0}\,,\\
\hat{A}_1\ket{\alpha,1}=U^\dagger\ket{\alpha,0}=\omega_\alpha^*\ket{\alpha,0} &\mathrm{ and }& \hat{A}_1\ket{\alpha,0}=\hat{A}_1(\omega_\alpha\hat{A}_1\ket{\alpha,1})=\omega_\alpha\ket{\alpha,1}\,.
\ea Since the eigenvectors of a unitary operator span the whole Hilbert space, we have
\ba
\hat{A}_0=\bigoplus_{\alpha} \sigma_x^\alpha &\mathrm{and}& \hat{A}_1=\bigoplus_{\alpha} \mathrm{Re}(\omega_\alpha)\,\sigma_x^\alpha-\mathrm{Im}(\omega_\alpha)\,\sigma_y^\alpha
\ea where the $\sigma_k^\alpha$ are the usual Pauli matrices defined in $\textrm{Span}\{\ket{\alpha,0},\ket{\alpha,1}\}$ with the conventional choice $\sigma_z^\alpha=\ket{\alpha,0}\bra{\alpha,0}-\ket{\alpha,1}\bra{\alpha,1}$.
\end{proof}

As a direct consequence of this Lemma applied to both Alice's and Bob's boxes, there exist a basis in which the CHSH operator can be decomposed as
\ba\label{decompchsh}
\hat{S}&=&\bigoplus_{\alpha} \bigoplus_{\beta} \hat{S}^{\alpha\beta}
\ea
where each of the $\hat{S}^{\alpha\beta}$ is a two-qubit CHSH operator. In other words, any observed violation of CHSH can be written as \ba
S_{obs}&=&\Tr(\rho S)\,=\,\sum_{\alpha,\beta} \Tr(\rho^{\alpha\beta} S^{\alpha\beta})\label{sobsblocks}
\ea where $\rho^{\alpha\beta}$ is the unnormalized two-qubit state whose matrix elements are
\ban
\rho_{2i+j+1,2k+l+1}^{\alpha\beta}&=&\sandwich{\alpha,i;\beta,j}{\rho}{\alpha,k;\beta,l}\;,\;\;i,j,k,l\in\{0,1\}\,.
\ean
One can therefore hope to reduce a device-independent study of CHSH to a two-qubit problem. This reduction does not follow automatically from the decomposition\footnote{There is still quite some complexity left in \eqref{decompchsh}: for instance, one cannot optimize each $\hat{S}^{\alpha\beta}$ independently, since the Pauli matrices of Alice (Bob) are the same in all operators with the same $\alpha$ ($\beta$). Such constraints are notably the reason why the conjectured maximal violation of CHSH by higher dimensional pure states is still unproven (see \cite{liang} for the most thorough numerical verification). Another example, in which the reduction does not seem to be possible, is the study of robust self-testing (recall that this is the study of how the self-testing conclusions have to be modified when the observed correlations differ from the ideal ones).}, but it is indeed quite often possible: we are going to see an example with randomness amplification below.

\subsubsection{A very general tool: characterization of the quantum set}
\label{sssnpa}

At the beginning of Section \ref{secbell}, I reminded that the observed statistics ${\cal P}_{\cal X,\cal Y}$ can be obtained from measurement on a quantum state if there exist a state and suitable POVMs such that $P(a,b|x,y)=\textrm{Tr}(\rho\,E_a^x\otimes E_b^y)$. In this case, one says that ${\cal P}_{\cal X,\cal Y}$ \textit{belongs to the quantum set} $\cal Q$ for the scenario $({\cal X},{\cal A};{\cal Y},{\cal B})$. The dimension of the Hilbert spaces is left arbitrary. Therefore, assessing whether ${\cal P}_{\cal X,\cal Y}\in{\cal Q}$ is a possible device-independent test. As such, this test won't be often performed on observed data, insofar as nobody expects to observe statistics in flagrant violation of quantum physics. However, \textit{the characterization of the quantum set is a crucial tool} for more relevant device-independent tests: all those, in which one aims at optimizing some value over all possible quantum realizations.

But how to characterize the quantum set? The question looks similar to the characterization of LV models, but the mathematical answer is very different. This is due to the fact that the quantum set is convex but has continuously many extremal points\footnote{The proof of convexity is rather simple: since the dimension is not fixed, given two quantum points $P_1$ and $P_2$, one can always think of the underlying states and measurements to live in disjoint direct sums. Then, the point $pP_1+(1-p)P_2$ is obtained by measuring the obvious mixed state with the suitably extended measurements. The fact that there are continuously many extremal points is proved by exhibiting an explicit family of such points: I do not think this proof instructive enough as to be reported here.}. Its boundaries are not hyperplanes, like the facets of a polytope. The tool we are going to discuss allows to \textit{approach the boundaries of the quantum set from outside}.

Let us start with the following observation:
\begin{lemma}\label{lemmanpa}
Let $\{F_1,..., F_n\}$ be a collection of operators. The orthogonal matrix $M$ whose entries are
\ba
[M]_{ij}&=&\Tr(\rho F_i^\dagger F_j)
\ea is non-negative.
\end{lemma}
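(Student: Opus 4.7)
The plan is to verify positive semidefiniteness directly from the definition: show that $v^\dagger M v \geq 0$ for every $v \in \mathbb{C}^n$. The computation is short and essentially forced by the structure of the entries.

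First I would expand
\[
v^\dagger M v \;=\; \sum_{i,j} \bar v_i\, v_j\, \Tr(\rho\, F_i^\dagger F_j) \;=\; \Tr\!\left(\rho\, G^\dagger G\right),
\]
where I have absorbed the coefficients by defining the single operator $G := \sum_j v_j F_j$ and used the linearity and cyclicity of the trace. This is the key rewriting: the awkward bilinear sum collapses into the trace of $\rho$ against a manifestly positive operator $G^\dagger G$.

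Next I would invoke positivity of $\rho$. Since $\rho$ is a density operator, it admits a spectral decomposition $\rho = \sum_k p_k \ket{\psi_k}\bra{\psi_k}$ with $p_k \geq 0$, so
\[
\Tr(\rho\, G^\dagger G) \;=\; \sum_k p_k \sandwich{\psi_k}{G^\dagger G}{\psi_k} \;=\; \sum_k p_k\, \|G\ket{\psi_k}\|^2 \;\geq\; 0.
\]
Since $v$ was arbitrary, $M \succeq 0$, which is the claim. I do not anticipate any real obstacle: the only subtlety is choosing the right grouping (forming $G$ before taking the trace) so that the inner positivity of $F_i^\dagger F_j$-type combinations becomes visible; once that is done, the conclusion is immediate from positivity of $\rho$.
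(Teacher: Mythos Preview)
Your proof is correct and follows essentially the same route as the paper's: both rewrite the quadratic form $v^\dagger M v$ as $\Tr(\rho\, G^\dagger G)$ with $G=\sum_j v_j F_j$ and invoke positivity of $\rho$ and of $G^\dagger G$. The only cosmetic difference is that you take $v\in\mathbb{C}^n$ and spell out the spectral decomposition of $\rho$, whereas the paper restricts to $v\in\real^n$ and simply cites positivity of $\rho$ and $C^\dagger C$.
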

\begin{proof}
The proof is very direct: for any vector $\vec{v}\in\real^n$ it holds
\ban
\vec{v}^{\,T}M\vec{v}&=&\Tr\Big[\rho \big(\sum_iv_iF_i^\dagger\big)\big(\sum_jv_j F_j\big)\Big]\,\geq 0
\ean because both $\rho$ and any operator of the form $C^\dagger C$ are positive.
\end{proof}
Therefore, $M\geq 0$ for all $M$ constructed as above is a \textit{necessary} condition for ${\cal P}_{\cal X,\cal Y}\in{\cal Q}$. The idea is therefore to construct matrices of this type, whose entries can be expressed in terms of the observed statistics ${\cal P}_{\cal X,\cal Y}$, and check if they are not negative.

At this point, it is useful to break the chain of general reasoning and consider the same example of paragraph \ref{ssscase}: the correlators in the CHSH scenario. The observed data are $(E_{00},E_{01},E_{10},E_{11})$. In the quantum formalism,
\ba
E_{xy}=\Tr\Big[\rho (\Pi_{a=+1}^{x}-\Pi_{a=-1}^{x})\otimes(\Pi_{b=+1}^{y}-\Pi_{b=-1}^{y})\Big]\,.
\ea
We want to construct $M$ that can accommodate the observed data. Let us define
\ban
F_1&=&\left[\Pi_{a=+1}^{x=0}-\Pi_{a=-1}^{x=0}\right]\otimes \one\\ F_2&=&\left[\Pi_{a=+1}^{x=1}-\Pi_{a=-1}^{x=1}\right]\otimes \one\\
F_3&=&\one\otimes\left[\Pi_{b=+1}^{y=0}-\Pi_{b=-1}^{y=0}\right]\\ F_4&=&\one\otimes\left[\Pi_{b=+1}^{y=1}-\Pi_{b=-1}^{y=1}\right]\,;\ean notice that $F_j^2=\one$ because $\Pi_a^x\Pi_{a'}^x=\delta_{a,a'}\Pi_a^x$ and the analog for Bob. Then we have
\ba
M_Q&=&\left(\begin{array}{cccc} 1 & \Tr(\rho F_1F_2) & \Tr(\rho F_1F_3) & \Tr(\rho F_1F_4)\\
& 1 & \Tr(\rho F_2F_3) & \Tr(\rho F_2F_4)\\
& & 1 & \Tr(\rho F_3F_4)\\ & & & 1\end{array}\right)
\ea (since the matrix is symmetric, for clarity of reading I fill only the upper half). The observed data fit in this matrix as
\ba
M&=&\left(\begin{array}{cccc} 1 & u_1 & E_{00} & E_{01}\\
& 1 & E_{10} & E_{11}\\
& & 1 & u_2\\ & & & 1\end{array}\right).
\ea Two entries are undetermined, since we can't have observed data for $\Tr(\rho F_1F_2)$ and $\Tr(\rho F_3F_4)$: these numbers would require one of the parties to perform two measurements in the same run, which is against the operational rules of the black boxes. Nevertheless, if $(E_{00},E_{01},E_{10},E_{11})\in {\cal Q}$, there must exist two real numbers $u_1,u_2\in [-1,+1]$ such that $M\geq 0$. The calculation of the conditions under which this happens is available in the literature and not particularly instructive, but the result is remarkable: one can find $u_1,u_2$ such that $M\geq 0$ if and only if
\ba
|A_{00}+A_{01}+A_{10}-A_{11}|\leq \pi
\ea with $A_{xy}=\mathrm{Arcsin}(E_{xy})$. This non-linear equation approximates the boundary of the quantum set for correlators and has at times been called a ``quantum Bell inequality". An interesting case study are the statistics $(\frac{1}{\sqrt{2}},\frac{1}{\sqrt{2}},\frac{1}{\sqrt{2}},-\frac{1}{\sqrt{2}})$. It is the only quantum point that reaches $S=2\sqrt{2}$ and it saturates this inequality as well. On the one hand, this example shows that even elementary tests may enforce decent boundaries. On the other hand, we know (from self-testing) that a quantum point reaching $S=2\sqrt{2}$ must also have the marginals of the singlet, namely $P(a|x)=P(b|y)=\demi$: it is then simple to construct valid ${\cal P}_{\cal X,\cal Y}$ with the same correlators but biased marginals, which are therefore not quantum but cannot be detected by this test.

Going back to the general discussion, we left the problem as unparametrizable as before: Lemma \ref{lemmanpa} does not constrain the family of operators for which one should check $M\geq 0$; and even if all operators are checked, we have nothing more than a necessary condition for ${\cal P}_{\cal X,\cal Y}\in{\cal Q}$. Fortunately, it has been proved by Navascu\'es, Pironio and Ac\'{\i}n (NPA) that there exist a \textit{convergent hierarchy of criteria} \cite{npa}. The simplest test in the hierarchy, which is however already tighter than our example above, takes as $F$'s the identity $\one$ and all the measurement operators $E_a^x\otimes\one\equiv\Pi_a^x$ and $\one\otimes E_b^y\equiv\Pi_b^y$. The set of the ${\cal P}_{\cal X,\cal Y}$ for which $M\geq 0$ in this step is denoted ${\cal Q}_1$. The further steps of the hierarchy add to the list of $F$'s all the products or two (e.g. $\Pi_a^x \Pi_{a'}^{x'}$, $\Pi_a^x \Pi_b^y$), three (e.g. $\Pi_a^x \Pi_b^y \Pi_{b'}^{y'}$), etc. measurement operators. The set of the ${\cal P}_{\cal X,\cal Y}$ for which $M\geq 0$ at each stage are denoted ${\cal Q}_2$, ${\cal Q}_3$ etc. Clearly ${\cal Q}_1\supseteq{\cal Q}_2\supseteq{\cal Q}_3\supseteq...$ and ${\cal Q}_n\supseteq{\cal Q}$ for all $n$. What is not trivial to prove is that this hierarchy is convergent, i.e. $\lim_{n\rightarrow\infty}{\cal Q}_n={\cal Q}$. As such, the hierarchy of tests provides a necessary \textit{and sufficient} condition for ${\cal P}_{\cal X,\cal Y}\in{\cal Q}$.

All the steps of the hierarchy involve writing down a matrix with undefined elements and asking whether one can fill the gaps in such a way that it becomes non-negative. Such a computational problem is an example of a very well-known class known as \textit{semi-definite programs}, on which the reader will find abundance of information upon searching. Let me just stress here that semi-definite programs are not only efficiently solvable: the solutions are also exact up to numerical precision, because the result can be upper- and lower-bounded simultaneously. Needless to say, efficient as the algorithms may be, the size of $M$ grows very fast, so it is impossible to check the hierarchy up to arbitrary order. In practice, one compares the result after the first few steps with a known quantum result guessed to be optimal, and they often coincide, like the Tsirelson bound for CHSH in the example above.

A final remark: the hierarchy is elegantly defined and widely used, but one should keep in mind that Lemma \ref{lemmanpa} allows for full freedom in constructing a set of tests. In particular, there is no need to go all the way from ${\cal Q}_{n}$ to ${\cal Q}_{n+1}$ in order to strenghten the constraint: for example, it is enough to add a single $\Pi_a^x \Pi_b^y$ to the $F$'s that define ${\cal Q}_{1}$ to obtain a test which is \textit{a priori} stronger than the latter.

\subsection{Randomness amplification}

\subsubsection{From science to devices}

In paragraph \ref{ssmessage} I explained that the violation of Bell inequalities proves the existence of intrinsic randomness (unless one opts for a deterministic explanation that requires superluminal signaling). This suggests that Bell inequalities may be useful to \textit{generate random numbers for practical applications}. This possibility must be argued further.

The key issue is measurement independence. In paragraph \ref{ssmessage}, I argued that the possibility of measurement independence cannot be denied without denying a great part of the scientific method. Of course, this does not mean that measurement independence must be believed blindfoldedly for any Bell experiment: one can legitimately try and check that \textit{that} source and \textit{those} choice of settings are really independent. This check cannot be done in a device-independent way, it always leaves an element of trust. Usually, what one does is to implement the choice of the settings by ``random number generators" built by the trusted parties themselves. At first sight, this seems to render the generation of randomness through a Bell test a pointless task: if one has already got a random number generator, why bother using it as seed of a Bell test to create other random numbers, instead of using directly? Even further, many devices acting as random number generators are already produced and routinely used. So, is there an additional benefit in generating randomness using a Bell test? The answer is positive: the Bell-based protocol produces \textit{more}, and above all \textit{better}, randomness.

Indeed, notice that the choice of the settings needs not be ``random", but just uncorrelated from the source while the Bell test is running. If there are enough reasons to trust this to be true, then any source of weak, or even pseudo, randomness will do\footnote{For instance, Alice may take her second favorite book in its third French edition and select the settings based on the sixth letter of each even line. This is a perfectly deterministic recipe, but Alice may have good reasons to trust that the provider of the boxes will not guess it.}. But the product is very different: if a Bell inequality is violated, the outcomes of the Bell test is \textit{guaranteed in a device-independent way} and \textit{private}, since nobody can have an exact copy of Alice's list. This is nothing but a rephrasing of what the violation of Bell means: the outcomes could not possibly pre-exist, so they are guaranteed to be random \textit{and} nobody else can possibly have a copy, otherwise the list would have been pre-existing after all. Furthermore, the settings for a run need to be kept confidential only until the outcomes of that run are produced.

\subsubsection{Randomness amplification using CHSH}

Let us consider the CHSH test: if $S_{obs}>2$, there is some randomness in ${\cal{P}}_{{\cal X},{\cal Y}}$. For simplicity, we focus on Alice's outcomes only, so we want to compute\footnote{Even if it is quite obvious, let me stress that the intrinsic randomness of the Bell test is \textit{not} characterized by the \textit{observed} $P(a|x)$ from ${\cal{P}}_{{\cal X},{\cal Y}}$: indeed, one can easily define LV distributions with $P(a|x)=\demi$, for instance white noise.}
\ba\label{optprob}
P^*(a|x)\,=\,\max P(a|x)&\textrm{subject to} & \textrm{CHSH}[{\cal{P}}_{{\cal X},{\cal Y}}]=S_{obs}\;\textrm{and}\;{\cal{P}}_{{\cal X},{\cal Y}}\in{\cal Q}\,.
\ea This is the most elementary example discussed in \cite{randomnature}. We expect $P^*(a|x)=1$ for $S_{obs}\leq 2$ and $P^*(a|x)=\demi$ for $S_{obs}=2\sqrt{2}$ because we know that this condition self-tests the singlet.

The first constraint is linear, therefore easily dealt with. As we know from paragraph \ref{sssnpa}, the other constraint can be approximated by semi-definite criteria. In a systematic approach, therefore, one would start by replacing ${\cal Q}$ with ${\cal Q}_1$, solve the relaxed optimization with semi-definite programming, then check if one finds a quantum state and measurement that reach the same result. If this is the case, one has the solution. If not, one can iterate down the hierarchy. This could be a very instructive exercise for the reader. Here, we can solve the optimization \eqref{optprob} analytically using the decomposition of CHSH presented in paragraph \ref{sssblocks}.

Let us start with the observation that, since we do not bound the dimension of the Hilbert space, the quantum state can be considered pure without loss of generality\footnote{Moreover, it is intuitive that a pure state will give the optimal solution: in a mixed state, one adds classical randomness on top of the quantum randomness.}. Let us write it in the basis which decomposes Alice's and Bob's operators as in paragraph \ref{sssblocks}: $\ket{\Psi}=\sum_{\alpha,\beta}\sqrt{p_{\alpha\beta}}\ket{\psi^{\alpha\beta}}$ where $\ket{\psi^{\alpha\beta}}$ is the suitable normalized two-qubit state. Therefore,
\ba
S_{obs}&=&\sum_{\alpha,\beta} p_{\alpha\beta} \sandwich{\psi^{\alpha\beta}}{\hat{S}^{\alpha\beta}}{\psi^{\alpha\beta}}\,.
\ea
Similarly, any $P(a|x)$ compatible with the constraints will be of the form
\ba
P(a|x)&=& \sum_{\alpha,\beta} p_{\alpha\beta} \sandwich{\psi^{\alpha\beta}}{{\Pi_a^x}^{\alpha}\otimes \one^\beta}{\psi^{\alpha\beta}}\,.
\ea 
The problem has a standard form: maximize $P=\sum_kp_kP_k$ under the constraint $S=\sum_kp_kS_k\equiv S_{obs}$. In order to find the solution, we can first find the maximization for the elementary problem $\max_{S_k\equiv s}P_k=f(s)$. In the very reasonable case that $f''(s)$ does not change sign, we have two possibilities: (i) if $f$ is convex, the solution to the main problem is simply $P=f(S_{obs})$ obtained by setting all the $S_k=S_{obs}$; (ii) if $f$ is concave, the solution to the main problem is given by the convex combination of boundary points (for our problem, it would be $P=p\demi+(1-p)1$ with $p$ defined by $S_{obs}=p2\sqrt{2}+(1-p)2$).

So, our next step is to find the maximal value of $\sandwich{\psi}{{\Pi_a^x}\otimes \one}{\psi}$ over the set of \textit{pure two-qubit states}, under the constraint that $CHSH=S$. From \eqref{maxviolqubitpure}, we know that a state with Schmidt decomposition $\ket{\psi}=\cos\theta\ket{00}+\sin\theta\ket{11}$ can reach $s=2\sqrt{1+\sin^22\theta}$. By writing down the projector for that state, we see that its most biased marginal is $P(+|\hat{z})=\demi(1+\cos 2\theta)$. States with smaller values of $\theta$ can also reach the same $s$ (for suboptimal measurements), but their most biased marginal is definitely lower; states with higher values of $\theta$ cannot reach $s$. Therefore, by solving explictly for $\theta$, we find after trivial algebra $f(s)=\demi(1+\sqrt{2-(s/2)^2})$. This function being convex, we have found the solution to the general problem:
\ba
P^*(a|x)&=&\demi\,\left[1+\sqrt{2-\left(\frac{S_{obs}}{2}\right)^2}\right]\,.\label{randomchsh}
\ea
It is worth while noticing that the bound can be achieved by the observed $P(a|x)$ under ideal conditions. Indeed, as discussed in paragraph \ref{sschshqubits}, the maximal violation of CHSH for the two-qubit state $\ket{\Psi(\theta)}$ can be achieved by choosing $\hat{z}$ as one of Alice's measurements. If one creates exactly that state and performs exactly the optimal measurements, the observed $P(a|x)$ will be related to $S_{obs}$ through \eqref{randomchsh}.

\subsection{Subtle is the device, and maybe malicious}

Device-independent assessment works because \textit{one cannot violate Bell inequalities by accident}. Of course, since a Bell test is a statistical test, a violation may happen as a fluctuation, but the probability of such an event can be quantified exactly and this is not the point I want to make\footnote{Recall that, in this text, I have always assumed that the statistical character of quantum measurement is dealt with in a proper way.}. Suppose rather that one of the procedures for the Bell test is implemented incorrectly: for instance, the synchronization procedure may fail, so that Alice and Bob compare results that correspond to different pairs. Or suppose that there is a failure in Bob's hardware, in such a way that the choice of the input (the position of the knob) does not change the measurement that is really done\footnote{As a side remark, notice how such a hardware failure could easily lead to observe a violation of the uncertainty relations. Indeed, if Bob believes that he is alternating measurements of $\hat{X}$ and $\hat{P}$, while in fact the same observable is being measured all the time, then the observed variances will be equal and can be arbitrarily small.}. In either case, no violation will be observed: the assessment will produce the conservative answer that the device does not reach up to standards.

In fact, \textit{the conclusion of an allegedly device-independent assessment can be thwarted only if (i) there are loopholes in the Bell test or (ii) there is a failure related to the task itself to be assessed}. These are the points that one has to check.

We know already how to deal with point (i), but it is useful to revisit those items in this new light: 
\begin{itemize}
\item The settings must be chosen by the users independently from the devices, so the users should trust the random number generators that make those choices (ideally by fabricating them themselves).
\item Cheating boxes could try and exploit the \textit{detection loophole}. By requesting that the boxes give an outcome each time that a setting is chosen, this cheat can be ruled out easily\footnote{The only nuisance, not a minor one at the moment of writing, is that very few experimental setups can exhibit a violation of Bell inequalities while closing the detection loophole. As a result, device-independent assessment is currently challenging, if not strictly unfeasible. But there is no reason to believe that technology won't improve in the coming years.}.
\item One should exclude \textit{signaling} between the measurement devices, as we discussed at the end of paragraph \ref{ssmessage}.
\item \textit{Deviation from the i.i.d. scenario} can be treated as the memory effects in paragraph \ref{ssgill} provided that the scenario is not adversarial, i.e. if device-independent assessment is made with the purpose of quantifying defects. If the scenario is adversarial, however, the tools are just being developed\footnote{For instance, see \cite{vv11} for the certification of Bell-based randomness against a quantum adversary.} and exceed the scope of this text.
\end{itemize}

Point (ii) arises because we are not aiming at describing the violation itself, but at accomplishing a task with uncharacterized devices. For instance, consider the amplification of randomness, and assume that the Bell test is done by proper measurement of quantum entanglement, without any communication. Nevertheless, if a measurement box contains a signaling device, it can just leak out the list of random numbers at the end of the process. The random list is still guaranteed and fresh, but no longer private. Clearly, spacelike separation of the choices during the Bell test does not protect against such classical leakage. It is crucial to stress, however, that this is \textit{not} a limitation of the device-independent assessment: whenever there is a claim of privacy, one must check at best, and ultimately trust, that the devices are not unduly leaking out information.

\section{The robustness of quantum knowledge}
\label{secrobust}

The LV model for quantum correlations is definitely falsified by the violation of Bell inequalities. Logically, this does not force one to accept the quantum description of statistics \eqref{qprob} as the only alternative: all that this means is that some $P(a,b|x,y,\lambda)$ in \eqref{myfavorite} must violate Bell inequalities. In particular, one may still want to try and describe the observed statistics \textit{while recovering at least some of the classical features} that quantum theory denies. In this section, I am going to review further results that show the robustness of quantum knowledge: \textit{sheer observations, which quantum theory reproduces with simple finite-dimensional calculations, are sufficient to falsify many apparently reasonable alternative models}.

\subsection{Two no-signaling relaxations of LV}

For definiteness, let us focus on the statistics predicted by quantum theory for measurement on a singlet state:
\ba
{\cal P}_{Q}&=&\left\{P(a,b|\vec{a},\vec{b})=\frac{1}{4}(1- ab\,\vec{a}\cdot\vec{b})\,,\,a,b\in\{-1,+1\}, \vec{a},\vec{b}\in \mathbb{S}^2\right\}\,.
\label{statssinglet}\ea
As we know by now, they can't be reproduced with LV; moreover, if one accepts the validity of quantum theory, a small finite subset of them is sufficient for self-testing. Here we are going to show two much more stringent results. They are theory-independent like Bell's original theorem; and device-independent \textit{a fortiori}. Before discussing them, I need to introduce another useful inequality, because the CHSH inequality, with its two settings and its Tsirelson bound, is not powerful enough to reach these conclusions.

\subsubsection{A tool: the chained inequality}

The \textit{chained inequality} is a bipartite inequality with $M_A=M_B=M$ settings and two outputs on each side, based on the following sum of $2M$ terms
\ba
C_M&=&P(a_1=b_1)+P(b_1=a_2)+P(a_2=b_2)+...\nonumber\\
&&...+P(a_M=b_M)+P(b_M\neq a_1)\label{chained1}
\ea
where $P(a_x,b_y)$ is a convenient shorthand for $P(a,b|x,y)$ and $P(a_j=b_k)=P(+1,+1|j,k)+P(-1,-1|j,k)$, $P(a_j\neq b_k)=P(+1,-1|j,k)+P(-1,+1|j,k)$. The assumption of LV enforces the bound $C_M\leq C_{M,L}=2M-1$, while the algebraic bound (achievable with no-signaling distributions) is obviously $C_{M,NS}=2M$. As a kind of exercise, let me mention two alternative ways of writing the same inequality, which may be useful:
\begin{itemize}
\item Using $E_{jk}=2P(a_j=b_k)-1=1-2P(a_j\neq b_k)$, one can rewrite everything with correlators:
\ba
C'_M&=&E_{11}+E_{21}+E_{22}+E_{32}+...+E_{MM}-E_{1M}
\ea with $C'_{M,L}=2(M-1)$ and $C'_{M,NS}=2M$. From this expression, it is manifest that the chained inequality for $M=2$ is equivalent to CHSH.
\item For paragraph \ref{sssleg}, it is useful to invert all the terms of the equation using $P(a_j=b_k)+P(a_j\neq b_k)=1$ and get
\ba
C''_M=2M-C_M&=&P(a_1\neq b_1)+P(b_1\neq a_2)+P(a_2\neq b_2)+...\nonumber\\
&&...+P(a_M\neq b_M)+P(b_M= a_1)\,.\label{chained2}
\ea The local constraint reads $C''_M\geq C''_{M,L}=1$, the algebraic bound is $C''_{M,NS}=0$.
\end{itemize}
The chained inequality has many properties that one may consider as sub-optimal: it needs $M^2$ settings but only uses $2M$ correlators\footnote{This is not at all a problem for the theorist; but for the experimentalist, it means that most of the time one is recording data that won't actually be used.}, only one of which expresses a condition that is incompatible with the others under LV. Also,  for any $M>2$, the LV bound does not define a facet of the local polytope\footnote{The proof is simple. Using \eqref{dns}, we know that the polytope is embedded in a space of dimension $D_{NS}(M)=M^2+2M$. Facets are hyperplanes of dimension $D_{NS}-1$, so there must be $D_{NS}$ linearly independent points lying on each facet. In particular, there must be at least $D_{NS}$ extremal points on each facet. However, it is easy to check that only $4M$ extremal points saturate the chained inequality. Indeed, take \eqref{chained1}: $C_M=2M-1$ can be reached either by $a_1=b_1=a_2=...=b_M$ [in which case $P(b_M\neq a_1)=0$ and all the other probabilities are 1] or by a point of the type $a_1=b_1=...=a_k\neq b_k=a_{k+1}=...=a_M$ [in which case $P(a_k=b_k)=0$ and all the other probabilities are 1; there are clearly $2M-1$ positions for the $\neq$ sign]. There are certainly no other points, so, given that $a_1$ can take two values, we have found that $4M$ extremal point saturate the chained inequality as claimed. Since $4M<D_{NS}(M)$ for $M>2$, the chained inequality cannot be a facet.}, which means that there exist tighter inequalities for each $M$.

The main interest of the family of chained inequalities is that \textit{the quantum violation comes arbitrarily close to the algebraic one in the limit $M\rightarrow\infty$}. To prove this claim, it is enough to produce an example (which turns out to be the maximal violation achievable with quantum physics for each fixed $M$): consider a two-qubit maximally entangled state $\ket{\Phi^+}$ and the settings chosen taken in the $x-z$ plane of the Bloch sphere as
\ba
\vec{a}_j\,=\,\cos\theta_{2j-1}\hat{z}+\sin\theta_{2j-1}\hat{x}&,& \vec{b}_j\,=\,\cos\theta_{2j}\hat{z} + \sin\theta_{2j}\hat{x}
\ea where $\theta_k=\frac{k\pi}{2M}$. With this choice, all the probabilities in \eqref{chained1} become equal to $\frac{1}{2}(1+\cos\frac{\pi}{2M})=\cos^2\frac{\pi}{4M}$ and consequently
\ba
C_{M,Q}&=&M\left(1+\cos\frac{\pi}{2M}\right)\,\approx\,2M-\frac{\pi^2}{8M}\,.
\ea
This is the property that we are going to use to demonstrate the following two results.

\subsubsection{The singlet statistics have zero local fraction}
\label{sssepr2}

The first idea \cite{epr2,bkp06} consists in writing every $P$ in ${\cal P}_{{\cal X},{\cal Y}}$ as the convex sum
\ba
P(a,b|x,y)&=&p\,\int d\lambda\rho(\lambda)P(a|x,\lambda)P(b|y,\lambda)\,+\,(1-p)P(a,b|x,y,\mu)\nonumber\\
&\equiv&p\,P_L(a,b|x,y)\,+\,(1-p)\,P_{NS}(a,b|x,y)
\label{epr2}
\ea where $p\in[0,1]$, $P_L$ can be achieved with LV and $P_{NS}\equiv \frac{P-p_LP_L}{1-p_L}$ is only requested to be a valid probability distribution (it is no-signaling by construction, since both $P$ and $P_L$ are). The \textit{local fraction} $p_L$ of ${\cal P}_{{\cal X},{\cal Y}}$ is defined as
\ba
p_L&=&\max_{\textrm{\eqref{epr2} holds}} p\,.
\ea The local fraction is a natural figure of merit in terms of simulations: if one wants to simulate the observed correlations $P$, the local part $P_L$ comes for free. Clearly ${\cal P}_{{\cal X},{\cal Y}}$ violates at least one Bell inequality if and only if $p_L<1$; moreover, the observed violation of a Bell inequality puts an upper bound on $p_L$. Indeed, let $I_L$, $I_{obs}$ and $I_{\textrm{alg}}$ be the local, observed, and algebraic bounds for the inequality under study. From \eqref{epr2} it follows immediately that $I_{obs}\leq pI_L+(1-p)I_{\textrm{alg}}$, whence
\ba
p_L&\leq&\frac{I_{\textrm{alg}}-I_{obs}}{I_{\textrm{alg}}-I_L}\,. 
\ea
If $I_{obs}>I_L$, the bound on $p_L$ is not trivial. In particular, if ${\cal P}_{{\cal X},{\cal Y}}$ leads to $I_{obs}=I_{\textrm{alg}}$, then $p_L=0$. This holds for the singlet statistics \eqref{statssinglet} and the chained inequality in the limit $M\rightarrow\infty$, so we have proved
\begin{theorem}
The singlet statistics \eqref{statssinglet}, and even the subset obtained by restricting $\vec{a}$ and $\vec{b}$ to lie in a plane, have zero local fraction.
\end{theorem}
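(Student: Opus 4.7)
The plan is to combine the generic upper bound $p_L\leq (I_{\textrm{alg}}-I_{obs})/(I_{\textrm{alg}}-I_L)$ derived just above the theorem with the chained inequality, whose quantum violation on the singlet saturates the algebraic bound asymptotically.

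First, I would fix an arbitrary integer $M\geq 2$ and evaluate the chained inequality $C_M$ on the singlet statistics using the specific planar choice of settings already provided, namely $\vec{a}_j=\cos\theta_{2j-1}\hat{z}+\sin\theta_{2j-1}\hat{x}$ and $\vec{b}_j=\cos\theta_{2j}\hat{z}+\sin\theta_{2j}\hat{x}$ with $\theta_k=k\pi/(2M)$. The excerpt already records that each term of $C_M$ then equals $\cos^2(\pi/(4M))$, giving the quantum value $C_{M,Q}=M(1+\cos\tfrac{\pi}{2M})$. The relevant bounds for the chained inequality are $I_L=2M-1$ and $I_{\textrm{alg}}=2M$.

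Second, I would apply the upper bound on the local fraction: for this particular choice of $2M$ settings the singlet statistics yield
\ba
p_L\,\leq\,\frac{I_{\textrm{alg}}-C_{M,Q}}{I_{\textrm{alg}}-I_L}\,=\,2M-M\left(1+\cos\tfrac{\pi}{2M}\right)\,=\,M\left(1-\cos\tfrac{\pi}{2M}\right)\,.
\ea
Using $1-\cos x\sim x^2/2$, the right-hand side behaves like $\pi^2/(8M)$ and tends to $0$ as $M\to\infty$.

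The main subtlety, which I want to handle carefully in the write-up, is that the local fraction is defined for a fixed family ${\cal P}_{{\cal X},{\cal Y}}$, whereas the chain uses a different, larger set of settings for each $M$. The resolution is monotonicity under restriction: any decomposition \eqref{epr2} realising weight $p$ on the full singlet statistics \eqref{statssinglet} restricts, by fixing $\vec{a}\in\{\vec{a}_1,\dots,\vec{a}_M\}$ and $\vec{b}\in\{\vec{b}_1,\dots,\vec{b}_M\}$, to a decomposition with the \emph{same} weight $p$ on the $2M$-setting subset. Hence $p_L$ of the full family is bounded above by the right-hand side of the previous display for \emph{every} $M$, forcing $p_L=0$. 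Since all chosen directions live in the $x$-$z$ plane, the argument applies equally to the planar sub-family, proving the strengthened form of the theorem. No step here looks genuinely hard: the only care needed is the monotonicity remark above, to keep the local fraction referring to one and the same family while $M$ varies.
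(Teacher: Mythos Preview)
Your proof is correct and follows essentially the same route as the paper: apply the generic bound $p_L\leq (I_{\textrm{alg}}-I_{obs})/(I_{\textrm{alg}}-I_L)$ to the chained inequality and let $M\to\infty$, using the planar settings that give $C_{M,Q}=M(1+\cos\tfrac{\pi}{2M})$. Your write-up is in fact more careful than the paper's, which simply asserts that the bound ``holds for the singlet statistics and the chained inequality in the limit $M\rightarrow\infty$'' without spelling out the monotonicity-under-restriction point you flag; that point is exactly the right way to make the limit argument rigorous while keeping $p_L$ attached to a single fixed family.
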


\subsubsection{The singlet statistics force fully random marginals}
\label{sssleg}

The second result is the achievement of a series of works initiated by Leggett. Let us recall the result of self-testing: if one assumes quantum physics to be valid, some observed ${\cal P}_{{\cal X},{\cal Y}}$ can only be due to the measurement of a maximally entangled state. Within quantum theory, this implies that the properties of the composite system are sharply defined while those of each sub-system are completely undefined. One may wonder whether a more classical picture can be recovered at the level of probabilities and try to reconstruct a given ${\cal P}_{{\cal X},{\cal Y}}$ using ${\cal P}_{{\cal X},{\cal Y},\lambda}$ that have \textit{biased marginals}. As an example, Leggett studied whether one can find a decomposition \eqref{myfavorite} of the singlet statistics \eqref{statssinglet}, whose marginals would look like those of pure single-qubit states $P(a|\vec{a},\lambda)=\demi(1+\vec{u}_\lambda\cdot\vec{a})$ and $P(b|\vec{b},\lambda)=\demi(1+\vec{v}_\lambda\cdot\vec{b})$. Generically, if the ${\cal P}_{{\cal X},{\cal Y},\lambda}$ are allowed to be signaling, they can even be completely deterministic. However, if the ${\cal P}_{{\cal X},{\cal Y},\lambda}$ are further restricted to be no-signaling (as in Leggett's example), very strong constraints can be proved, notably
\begin{theorem}
Consider any decomposition \eqref{myfavorite} of the singlet statistics \eqref{statssinglet}: if the $P(a,b|\vec{a},\vec{b},\lambda)$ are requested to satisfy the no-signaling constraint, then they must have fully random marginals, i.e. $P(a|\vec{a},\lambda)=P(b|\vec{b},\lambda)=\demi$ \cite{cr08}.
\end{theorem}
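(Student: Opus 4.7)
The plan is to exploit the chained inequality in the regime $M\to\infty$, where its quantum violation saturates the no-signaling algebraic bound. The key observation is that no-signaling alone (without a joint distribution) suffices to carry a "telescoping" argument on real-valued marginals, which yields a pointwise bound of the form $|2P(a{=}{+}1|\vec{a},\lambda)-1|\le C''_M[P_\lambda]$. Averaging that bound against $\rho(\lambda)$ and using that the average reproduces the singlet value $C''_M[P^Q]=2M\sin^2(\pi/4M)\to 0$ will force all local marginals to equal $\demi$.

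Concretely, I would first fix the direction $\vec{a}$ whose marginal we want to analyze and embed it as the first Alice-setting $\vec{a}_1$ of a chained inequality of size $M$, with the other Alice and Bob directions placed on the equi-spaced grid in the $x$–$z$ plane used in the excerpt. For any no-signaling box $P_\lambda$ the marginals $\alpha_j^\lambda=P(a{=}{+}1|\vec{a}_j,\lambda)$ and $\beta_k^\lambda=P(b{=}{+}1|\vec{b}_k,\lambda)$ are well-defined, and a short calculation from $P(a_j\neq b_k)=\alpha_j+\beta_k-2P(+,+|j,k)$ gives the two elementary bounds
\begin{equation*}
P(a_j\neq b_k)\ge |\alpha_j-\beta_k|,\qquad P(b_M=a_1)\ge |1-\alpha_1-\beta_M|.
\end{equation*}
Summing these for all terms of $C''_M$ and telescoping $(\alpha_1-\beta_1)+(\beta_1-\alpha_2)+\cdots+(\alpha_M-\beta_M)=\alpha_1-\beta_M$ yields
\begin{equation*}
C''_M[P_\lambda]\ge |\alpha_1^\lambda-\beta_M^\lambda|+|1-\alpha_1^\lambda-\beta_M^\lambda|\ge |2\alpha_1^\lambda-1|,
\end{equation*}
using $|x|+|y|\ge |x-y|$. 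By an obvious cyclic relabeling of the chain the same bound holds with $\alpha_1^\lambda$ replaced by any $\alpha_j^\lambda$ or $\beta_k^\lambda$.

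With this in hand, I would integrate over $\lambda$ using the hypothesis that $\int d\lambda\,\rho(\lambda) P_\lambda$ reproduces the singlet statistics:
\begin{equation*}
\int d\lambda\,\rho(\lambda)\,|2\alpha_j^\lambda-1|\;\le\;\int d\lambda\,\rho(\lambda)\,C''_M[P_\lambda]\;=\;C''_M[P^Q]\;=\;2M\sin^2\!\tfrac{\pi}{4M}.
\end{equation*}
Letting $M\to\infty$ drives the right-hand side to zero, hence $\alpha^\lambda_{\vec{a}}=\demi$ for $\rho$-a.e.\ $\lambda$, for the fixed direction $\vec{a}$. Repeating the argument on a countable dense family of directions (and analogously for Bob) gives $P(a|\vec{a},\lambda)=P(b|\vec{b},\lambda)=\demi$ simultaneously for all relevant settings, which is the claim.

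The step I expect to be the main obstacle, and which is the real content of the proof, is the key no-signaling inequality $C''_M[P_\lambda]\ge |2\alpha_j^\lambda-1|$. One must resist the temptation to use the standard triangle inequality for Hamming distance, which is \emph{false} without a joint distribution over all the $a_j,b_k$ (indeed the PR box saturates $C''_2=0$ despite the absence of such a joint). The trick is that the telescoping is applied to the real-valued marginals themselves — which exist thanks to no-signaling — rather than to the jointly observed events; only afterwards are these numerical marginals combined with the elementary bound $P(b_M=a_1)\ge |1-\alpha_1-\beta_M|$ on the closing term. Once this lemma is in place, everything else is linearity of the decomposition and the asymptotic saturation of the chained inequality by the singlet, already derived in the excerpt.
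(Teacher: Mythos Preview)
Your proof is correct and is essentially the paper's own argument in different notation: for binary outcomes the paper's statistical distance $D(P_{{\cal A}|x\lambda},P_{{\cal B}|y\lambda})$ is precisely your $|\alpha_j^\lambda-\beta_k^\lambda|$, its bound $D\le P(a\neq b)$ is your first elementary inequality, and the telescoping via the triangle inequality on these real-valued marginals is identical to yours. The only minor difference is in the endgame: the paper argues that $C''_\infty({\cal P})=0$ forces $C''_\infty({\cal P}_\lambda)=0$ pointwise (since zero is the no-signaling minimum of a nonnegative integrand), whereas you integrate the finite-$M$ bound and then let $M\to\infty$ --- your version is arguably cleaner, but the content is the same.
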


\begin{proof}
The proof uses the \textit{statistical distance} between two conditional probability distributions
\ba
D(P_{U|\omega},P_{V|\omega})&=&\demi\,\sum_{u\in U,v\in V}\big|P(u|\omega)-P(v|\omega)\big|\,.\label{distance}
\ea This distance is obviously symmetric by exchange of $U$ and $V$; it satisfies the triangle inequality as well as $D(P_{U|\omega},P_{V|\omega})\leq P(u\neq v|\omega)$.

Let us start from the chained inequality in its form \eqref{chained2} applied to one of the ${\cal P}_{{\cal X},{\cal Y},\lambda}\equiv {\cal P}_\lambda$. Using the bound just mentioned, we find
\ban
C''_M({\cal P}_\lambda)&\geq & D(P_{{\cal A}|11\lambda},P_{{\cal B}|11\lambda})+ D(P_{{\cal B}|21\lambda},P_{{\cal A}|21\lambda}) + D(P_{{\cal A}|22\lambda},P_{{\cal B}|22\lambda})+...\\
&&...+ D(P_{{\cal A}|MM\lambda},P_{{\cal B}|MM\lambda})+D(P_{{\cal B}|1M\lambda},1-P_{{\cal A}|1M\lambda})
\ean with obvious notations. Now we use the no-signaling property $P_{{\cal A}|xy\lambda}=P_{{\cal A}|x\lambda}$ and $P_{{\cal B}|xy\lambda}=P_{{\cal B}|y\lambda}$. Now we can apply the triangle inequality as
\ban
D(P_{{\cal A}|x=1\lambda},P_{{\cal B}|y=1\lambda})+ D(P_{{\cal B}|y=1\lambda},P_{{\cal A}|x=2\lambda})&\geq & D(P_{{\cal A}|x=1\lambda},P_{{\cal A}|x=2\lambda})
\ean and by repeated application we finally reach
\ba
C''_M({\cal P}_\lambda)&\geq &D(P_{{\cal A}|x=1\lambda},1-P_{{\cal A}|x=1\lambda})\nonumber\\&\stackrel{\eqref{distance}}{=}&\big|P(a=0|x=1,\lambda)-\demi\big|+\big|P(a=1|x=1,\lambda)-\demi\big|\,.
\label{uniform}\ea Now, the chained inequality is based on a linear expression, therefore \eqref{myfavorite} implies $C''_M({\cal P})=\int d\lambda \rho(\lambda) C''_M({\cal P}_\lambda)$. If ${\cal P}$ are the statistics \eqref{statssinglet} of the singlet, in theory we can have $C''_\infty({\cal P})= 0$. Since this value is the no-signaling bound, it implies $C''_\infty({\cal P}_\lambda)= 0$ for all $\lambda$, which, inserted in \eqref{uniform}, proves the theorem for $P(a|x=1,\lambda)$. The proof for all other settings is exactly the same: one just has to keep the suitable terms when iterating the triangle inequality.
\end{proof}

\subsection{Signaling models: simulation with communication}
\label{sssignal}

In the whole text, I adopted the view that the violation of Bell inequalities demonstrates intrinsic randomness. It matches how most physicists understand quantum physics and, as we have seen, it can even be related to potentially useful applications. This approach, however, does not address the reason why many (rightly or wrongly) feel uneasy with quantum physics: \textit{it tells you what you get out of it, but not how nature does it\footnote{After writing this text out of my head, I was reminded that Science Magazine published a contribution by Gisin with the title ``How does nature perform the trick?" \cite{gisinbellprize}. Probably the expression was in my subconscious.}}.

The violation of Bell inequalities leaves very little choice: any explanation of quantum correlations in classical terms must involve \textit{communication}. Moreover, the predictions of quantum theory do not vary if the choice of settings and the detection events are spacelike separated, and several experiments have confirmed the violation of Bell inequalities in this configuration: therefore, the hypothetical communication would have to be \textit{superluminal}. If one wants to tell ``how nature does it", the explanation must involve this rather problematic feature\footnote{Or better, recalling what we said in \ref{ssmessage}: the explanation must \textit{involve something that manifests itself as superluminal communication} in our common (3+1)-dimensional space-time.}. A comprehensive discussion would need to address the modifications of special relativity, a much debated topic that would bring us too far. Here, I address signaling models under the pragmatic angle of \textit{simulation}: never mind how nature really does it, which resources would \textit{we} need to simulate quantum statistics?

\subsubsection{How much?}
\label{sshowmuch}

Let us first study the \textit{amount of communication} required to simulate quantum statistics. One may naively guess this amount must be infinite, based on steering: in quantum theory, by choosing her measurement on an entangled state, Alice can prepare Bob's system in any state, and there are continuously many states. However, this reasoning fails because the simulators Anthony and Beatrix\footnote{Recall that I use the names of Alice and Bob for the users that choose measurement settings and observe outcomes. Think of Anthony and Beatrix as the mechanisms inside the boxes of Alice and Bob respectively.} are allowed to share some LV. In fact, at the moment of writing, no example of a quantum process is known, whose simulation would provably require an unbounded amount of communication (though some are conjectured). Here I present only the most famous result, due to Toner and Bacon \cite{tonerbacon}:
\begin{theorem} \label{tonerbaconthm}
The statistics \eqref{statssinglet} of the singlet state under all possible von Neumann measurements can be simulated with local variables and one bit of communication.
\end{theorem}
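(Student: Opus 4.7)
The plan is to exhibit an explicit protocol and verify that it reproduces the singlet statistics \eqref{statssinglet}. Anthony and Beatrix pre-share two independent unit vectors $\vec{\lambda}_1,\vec{\lambda}_2$ drawn uniformly from $\mathbb{S}^2$. On input $\vec{a}$, Anthony sets $s_i=\textrm{sign}(\vec{a}\cdot\vec{\lambda}_i)$ for $i=1,2$, outputs $a=-s_1$, and transmits the single bit $c=s_1s_2\in\{-1,+1\}$ to Beatrix. On input $\vec{b}$, Beatrix outputs $b=\textrm{sign}\!\big(\vec{b}\cdot(\vec{\lambda}_1+c\,\vec{\lambda}_2)\big)$. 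The communication is exactly one bit; the rest is local.

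Verification of the marginals is purely a symmetry argument. The identity $P(a=\pm 1|\vec{a})=\demi$ is immediate from the uniformity of $\vec{\lambda}_1$. For Beatrix, the involution $(\vec{\lambda}_1,\vec{\lambda}_2)\mapsto(-\vec{\lambda}_1,-\vec{\lambda}_2)$ preserves the joint law, flips both $s_1$ and $s_2$ (so preserves $c$), and flips the sign of $\vec{\lambda}_1+c\,\vec{\lambda}_2$, hence flips $b$. Consequently $P(b=\pm 1|\vec{b})=\demi$.

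The correlator is where the real work sits. Using $s_1^2=1$, one rewrites
\ba
s_1\cdot\textrm{sign}\!\big(\vec{b}\cdot(\vec{\lambda}_1+s_1s_2\vec{\lambda}_2)\big)&=&\textrm{sign}\!\big(\vec{b}\cdot(s_1\vec{\lambda}_1+s_2\vec{\lambda}_2)\big)\,.\nonumber
\ea
The change of variables $\vec{\mu}_i=s_i\vec{\lambda}_i$ maps the uniform law on $\mathbb{S}^2\times\mathbb{S}^2$ to the uniform law on $H_{\vec{a}}\times H_{\vec{a}}$, where $H_{\vec{a}}=\{\vec{\mu}\in\mathbb{S}^2:\vec{a}\cdot\vec{\mu}>0\}$ is the open hemisphere of area $2\pi$. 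Thus the claim reduces to the geometric identity
\ba
\frac{1}{(2\pi)^2}\int_{H_{\vec{a}}}d\vec{\mu}_1\int_{H_{\vec{a}}}d\vec{\mu}_2\;\textrm{sign}\!\big(\vec{b}\cdot(\vec{\mu}_1+\vec{\mu}_2)\big)&=&\vec{a}\cdot\vec{b}\,,\nonumber
\ea
which would yield $\moy{ab}_{\vec{a},\vec{b}}=-\vec{a}\cdot\vec{b}$ after restoring the overall minus sign coming from $a=-s_1$.

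The main obstacle is verifying this double integral. My approach would be to rotate coordinates so that $\vec{a}=\hat{z}$ and $\vec{b}=\cos\gamma\,\hat{z}+\sin\gamma\,\hat{x}$, and to hold $\vec{\mu}_1$ fixed: the inner integral over $\vec{\mu}_2$ is then $2A_+(\vec{b}\cdot\vec{\mu}_1)-2\pi$, where $A_+(w)$ is the spherical area of $H_{\vec{a}}\cap\{\vec{b}\cdot\vec{\mu}_2>-w\}$. This region is the intersection of two open hemispheres and its area admits a closed-form expression in terms of $\gamma$ and $w$, obtained by classical spherical-lune calculus. Upon integrating the result over $\vec{\mu}_1\in H_{\vec{a}}$, azimuthal symmetry around $\vec{a}$ annihilates every contribution proportional to $\sin\gamma$ (the component of $\vec{b}$ transverse to $\vec{a}$), leaving a single term proportional to $\cos\gamma=\vec{a}\cdot\vec{b}$; the normalization is then fixed by the extremes $\vec{b}=\pm\vec{a}$, where $\vec{b}\cdot(\vec{\mu}_1+\vec{\mu}_2)$ has constant sign $\pm 1$ on $H_{\vec{a}}\times H_{\vec{a}}$. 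Combining this correlator with the uniform marginals reconstructs $P(a,b|\vec{a},\vec{b})=\frac{1}{4}(1-ab\,\vec{a}\cdot\vec{b})$ exactly, which completes the proof.
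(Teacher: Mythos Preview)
Your protocol is the original Toner--Bacon construction, and it is correct; the paper instead presents the Degorre--Laplante--Roland variant, in which Anthony \emph{selects} one of the two pre-shared vectors (the one with larger $|\vec{a}\cdot\vec{\lambda}_i|$), transmits the index, and both parties then use only that single $\vec{\lambda}$. The paper's key step is a rejection-method lemma showing that the selected $\vec{\lambda}$ is distributed as $\rho(\vec{\lambda})=\frac{1}{2\pi}|\vec{a}\cdot\vec{\lambda}|$; once this is established, the correlator is a \emph{single} integral $\int d\vec{\lambda}\,\rho(\vec{\lambda})\,\textrm{sign}(\vec{a}\cdot\vec{\lambda})\,\textrm{sign}(\vec{b}\cdot\vec{\lambda})$ in which the $|\vec{a}\cdot\vec{\lambda}|\,\textrm{sign}(\vec{a}\cdot\vec{\lambda})$ collapses to $\vec{a}\cdot\vec{\lambda}$ and the remaining spherical integral is elementary. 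Your route keeps both vectors in play and reduces to a genuine double hemisphere integral; this is a legitimate alternative, but it is computationally heavier, and your sketch of its evaluation has two gaps.

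First, the set $\{\vec{\mu}_2:\vec{b}\cdot\vec{\mu}_2>-w\}$ is a spherical \emph{cap}, not a hemisphere, for $w\neq 0$; so $A_+(w)$ is the area of a hemisphere--cap intersection, and ``spherical-lune calculus'' (which handles intersections of two great-circle hemispheres) does not directly give you a closed form. Second, and more seriously, your azimuthal-symmetry claim does not establish what you need: rotational invariance around $\vec{a}$ only tells you that the final answer is a function of $\gamma$ alone, not that it is \emph{linear} in $\cos\gamma$. The integrand $A_+(\vec{b}\cdot\vec{\mu}_1)$ depends on $\phi_1$ through $w=\cos\gamma\cos\theta_1+\sin\gamma\sin\theta_1\cos\phi_1$ and on $\gamma$ through the cap geometry, and there is no symmetry that isolates a term ``proportional to $\sin\gamma$'' to be killed. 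Without linearity, checking the endpoints $\vec{b}=\pm\vec{a}$ fixes nothing. To close the argument along your lines you must actually compute the cap--hemisphere area and perform the outer integral; alternatively, you can borrow the paper's idea and observe that averaging $\frac{1}{2}(\vec{\mu}_1+\vec{\mu}_2)$ over the hemisphere pair produces an effective direction with the same $\frac{1}{2\pi}|\vec{a}\cdot\vec{\lambda}|$ bias, which is what makes the single-integral route so clean.
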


\begin{proof}
The proof is constructive and I present it in the version of \cite{degorre}. In each run, Anthony and Beatrix share two unit vectors $\vec{\lambda}_0$ and $\vec{\lambda}_1$, previously drawn with uniform distribution on the unit sphere $\mathbb{S}^2$. Anthony selects one of the two vectors according to the following rule: if $|\vec{a}\cdot\vec{\lambda}_0|\geq |\vec{a}\cdot\vec{\lambda}_1|$, he sets $\vec{\lambda}=\vec{\lambda}_0$; otherwise, he sets $\vec{\lambda}=\vec{\lambda}_1$. He communicates this choice to Beatrix (this is the bit of communication). Finally, Anthony outputs $a(\vec{\lambda})=\textrm{sign}(\vec{a}\cdot\vec{\lambda})$, Beatrix outputs $b(\vec{\lambda})=-\textrm{sign}(\vec{b}\cdot\vec{\lambda})$.

Clearly $\moy{a}=\moy{b}=0$, so now we need to prove that
\ba
\moy{ab}\,=\,\int_{\mathbb{S}^2} d\vec{\lambda}\rho(\vec{\lambda})\,a(\vec{\lambda})b(\vec{\lambda})&=&-\vec{a}\cdot\vec{b}\,.
\ea The important piece is the effective probability distribution $\rho(\vec{\lambda})$, which is not uniform, because Anthony's initial selection biases $\vec{\lambda}$ to be close to $\vec{a}$. Concretely, we shall prove in Lemma \ref{lemmadegorre} below that $\rho(\vec{\lambda})=\frac{1}{2\pi}|\vec{a}\cdot\vec{\lambda}|$. So we have
\ban
\moy{ab}\,=-\,\int_{\mathbb{S}^2} d\vec{\lambda}\,\frac{1}{2\pi}\,\underbrace{|\vec{a}\cdot\vec{\lambda}|\textrm{sign}(\vec{a}\cdot\vec{\lambda})}_{=\vec{a}\cdot\vec{\lambda}}\,\textrm{sign}(\vec{b}\cdot\vec{\lambda})
\ean and the result is readily derived by passing in spherical coordinates chosen such that $\vec{a}\equiv \vec{z}$ and $\vec{b}\equiv \cos\beta\vec{z}+\sin\beta\vec{x}$. \end{proof}

We have postponed the proof of the following
\begin{lemma}\label{lemmadegorre}
The selection procedure used in the proof above leads to an effective distribution $\rho(\vec{\lambda})=\frac{1}{2\pi}|\vec{a}\cdot\vec{\lambda}|$.
\end{lemma}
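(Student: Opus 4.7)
The plan is to exploit symmetry to reduce the two-dimensional problem on $\mathbb{S}^2$ to a one-dimensional computation involving the order statistics of the scalar $|\vec{a}\cdot\vec{\lambda}_i|$. First I would fix the coordinate system so that $\vec{a}=\hat{z}$, and write a point on $\mathbb{S}^2$ in spherical coordinates $(\theta,\varphi)$ so that $\vec{a}\cdot\vec{\lambda}=\cos\theta$ and $d\vec{\lambda}=\sin\theta\, d\theta\, d\varphi$. The selection rule depends only on the magnitudes $|\cos\theta_0|$ and $|\cos\theta_1|$, i.e. it is insensitive to the azimuthal angle $\varphi$ and to the sign of $\cos\theta$. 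Since both initial samples are rotationally invariant about $\vec{a}$ and symmetric under $\theta\mapsto\pi-\theta$, the effective distribution of the selected $\vec{\lambda}$ must retain both symmetries. Hence $\rho(\vec{\lambda})$ depends only on $u\equiv|\vec{a}\cdot\vec{\lambda}|$.

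Next I would compute the scalar density of $u$ explicitly. Under the uniform distribution on $\mathbb{S}^2$, $\cos\theta$ is uniform on $[-1,1]$, so $u=|\cos\theta|$ is uniform on $[0,1]$ with CDF $F(u)=u$. The selection rule outputs the sample with the larger of the two values $u_0,u_1$, so the density of the selected $u$ is the density of $\max(u_0,u_1)$ for two i.i.d.\ uniforms on $[0,1]$, which is
\begin{equation*}
f_{\max}(u)=\frac{d}{du}\bigl[F(u)^2\bigr]=2u,\qquad u\in[0,1].
\end{equation*}

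Finally I would convert this scalar density back into a surface density on $\mathbb{S}^2$ against the canonical measure $d\vec{\lambda}=\sin\theta\, d\theta\, d\varphi$. Conditioned on $u$, the azimuth $\varphi$ is uniform on $[0,2\pi)$ and the sign of $\cos\theta$ is equiprobable in $\{\pm1\}$. Using $du=|\sin\theta\, d\theta|$ on each hemisphere, the joint probability element equals
\begin{equation*}
2u\,\tfrac{1}{2}\,\tfrac{d\varphi}{2\pi}\,du \;=\; \tfrac{u}{2\pi}\,\sin\theta\, d\theta\, d\varphi \;=\; \tfrac{1}{2\pi}|\vec{a}\cdot\vec{\lambda}|\, d\vec{\lambda},
\end{equation*}
so $\rho(\vec{\lambda})=\frac{1}{2\pi}|\vec{a}\cdot\vec{\lambda}|$ as claimed; a one-line consistency check $\int_{\mathbb{S}^2}\rho(\vec{\lambda})\, d\vec{\lambda}=1$ confirms normalization.

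The step that requires the most care is the bookkeeping in the last paragraph: one must track the Jacobian factor $\sin\theta$, the factor $2$ coming from the two hemispheres contributing to the same value of $u$, and the uniform distribution in $\varphi$, without double-counting. Everything else — the reduction to a scalar problem by symmetry and the order-statistics computation — is essentially free once $\vec{a}=\hat{z}$ is fixed.
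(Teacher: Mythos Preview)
Your argument is correct but follows a different route from the paper's. You reduce to a scalar problem by symmetry, identify the selected $u=|\vec{a}\cdot\vec{\lambda}|$ as the maximum of two i.i.d.\ uniforms on $[0,1]$, read off its density $2u$ from order statistics, and then pull this back to the sphere. The paper instead interprets the two-vector rule as \emph{rejection sampling}: since $|\vec{a}\cdot\vec{\lambda}_1|$ is uniform on $[0,1]$, the event ``$\vec{\lambda}_0$ is selected'' is exactly the acceptance step of a rejection method with acceptance probability $|\vec{a}\cdot\vec{\lambda}_0|$, so the accepted $\vec{\lambda}_0$ has density proportional to $|\vec{a}\cdot\vec{\lambda}_0|$; by symmetry the same holds when $\vec{\lambda}_1$ is selected, and no run is ever discarded. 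Your approach is self-contained and needs no prior familiarity with rejection sampling, at the price of the Jacobian bookkeeping you flag yourself. The paper's approach buys a conceptual explanation of \emph{why} drawing a second uniform vector lets Anthony avoid ever aborting a run, which is the operationally relevant point for the simulation (a discarded run would open the detection loophole).
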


\begin{proof}
Consider first the following procedure, known as rejection method:
\begin{enumerate}
\item Pick $\vec{\lambda}_0$ uniformly on $\mathbb{S}^2$ and $u_0$ uniformly in $[0,1]$;
\item Keep $\vec{\lambda}_0\equiv\vec{\lambda}$ if $|\vec{a}\cdot\vec{\lambda}_0|\geq u_0$, discard it otherwise.
\end{enumerate}
The probability that a given $\vec{\lambda}_0$ is kept is the probability that $u_0$ smaller than $|\vec{a}\cdot\vec{\lambda}_0|$ is drawn; since $u_0$ is drawn uniformly, this probability is just $|\vec{a}\cdot\vec{\lambda}_0|$. Therefore $\rho(\vec{\lambda})\propto|\vec{a}\cdot\vec{\lambda}|$. By normalizing \textit{a posteriori}, one finds the factor $\frac{1}{2\pi}$.

The problem with this procedure, as the name indicates, is that several $\vec{\lambda}_0$ are discarded (in the context of this paper, it would amount to a detection loophole scheme in which Alice's box refuses to reply if her local variable does not match a desired condition). The introduction of $\vec{\lambda}_1$ solves the problem. Indeed, if $\vec{\lambda}_1$ is chosen uniformly in $\mathbb{S}^2$, $u_0\equiv |\vec{a}\cdot\vec{\lambda}_1|$ is uniform in $[0,1]$. So the procedure with which Anthony selects $\vec{\lambda}_0$ is the rejection method. By symmetry, the procedure with which Anthony selects $\vec{\lambda}_1$ is also the rejection method. Therefore the final $\vec{\lambda}$ is distributed according to $\rho(\vec{\lambda})=\frac{1}{2\pi}|\vec{a}\cdot\vec{\lambda}|$ as claimed, while no instance is discarded.\end{proof}

Let me finish by a small balance. On the one hand, it is quite remarkable that the statistics of the singlet, that are so strongly non-classical according to several criteria presented in the previous sections, are only one bit away from being classical in terms of communication. On the other hand, this and all signaling models have an unpleasant taste of fine-tuning\footnote{I thank Rob Spekkens for the expression.}: indeed, the use of the bit of communication in the simulation above is entirely \textit{ad hoc} and justified only \textit{a posteriori} by the fact that it reproduces the quantum statistics. Worse, almost any deviation from that rule would manifest itself in signaling.

\subsubsection{How fast?}

In order to reproduce all the predictions of quantum theory, the superluminal communication should in fact have infinite speed, because the entangled systems can be arbitrarily far apart. Remarkably, one can give a device-independent proof of this constraint, at least within a reasonable scenario:
\begin{theorem}\label{thmbancal}
Consider a theory that simulates quantum statistics with superluminal communication in a preferred frame. Then, either the speed of communication is infinite, or the theory must predict the possibility of sending messages faster than light (``observable signaling") for some arrangement of measurements in spacetime. The conclusion can be based only on observed statistics and does not require the candidate preferred frame to be identified.
\end{theorem}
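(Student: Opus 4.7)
The plan is to assume that the model operates with some finite speed $v$ in a preferred frame and derive, from this finiteness alone and from the observed quantum statistics, an inequality whose violation must produce directly observable faster-than-light communication. The crucial device is an \emph{arrangement-dependent no-influence constraint}: if two parties' measurement events are $v$-spacelike in the preferred frame (their spatial separation exceeds $v$ times their temporal separation), then the alleged superluminal signal cannot reach across them, so the joint marginal of those two parties must factor with respect to the other party's input. Any residual input-dependence in that marginal would be nothing other than a message travelling at speed $v$ in the lab frame, which is exactly what the theorem calls observable signaling.

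First I would set up a multipartite Bell scenario with at least four parties and consider several complementary spacetime arrangements of the measurement events. The arrangements must be engineered so that, whatever the candidate preferred frame is, \emph{in each arrangement some specified pair is $v$-spacelike}, and across the collection of arrangements enough pairs acquire the $v$-spacelike property that the joint statistics inherit strong factorisation constraints. This is the geometric core of the argument: one exploits the fact that Lorentz boosts cannot simultaneously flip the temporal ordering of many carefully placed pairs of events, so by choosing arrangements adapted to the possible ``sides'' of any boost one can cover all admissible preferred frames at once. The preferred frame therefore never needs to be identified.

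Next I would translate those per-arrangement no-influence requirements into linear constraints on the conditional distributions $P(a,b,c,d\,|\,x,y,z,w)$ and derive, by polytope methods analogous to those of paragraph \ref{ssscase}, a Bell-type \emph{hidden-influence inequality} that every such model must satisfy. The content of this inequality is essentially that the correlations produced when the influences are geometrically blocked must be classical-like for the marked pairs, even though other pairs are allowed to exchange signals; summing over the arrangements turns this requirement into a nontrivial numerical bound.

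Finally, I would exhibit a quantum state (two suitably distributed maximally entangled pairs across the four parties is the natural candidate) and a set of projective measurements that violate the inequality. Since the quantum predictions depend only on the state and the measurements, and not at all on the relative timing of the events, the same violation must occur in the geometric arrangements constructed above. The model therefore cannot simultaneously reproduce that quantum value and respect the factorisation forced on it by its own finite speed $v$; the only escape is that no-signaling fails across some pair that the geometry had marked as $v$-spacelike, which is precisely observable superluminal signaling in the lab. The step I expect to be the main obstacle is the geometric one: producing arrangements that together cover \emph{every} admissible preferred frame, since any boost left uncovered would leave the model an escape route, and only such universal coverage makes the conclusion genuinely frame-independent rather than a statement against one specific choice of frame.
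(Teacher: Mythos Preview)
Your broad architecture matches the paper's: four parties, arrangement-dependent constraints from finite $v$, an inequality, and a quantum violation. But you have misidentified the operative constraint, and that is a genuine gap. You write that when two parties are $v$-spacelike, ``the joint marginal of those two parties must factor with respect to the other party's input''; read literally this is just ordinary no-signaling from the remaining parties to the pair, which holds in every arrangement and uses nothing about the $v$-cone. The paper's constraint has two distinct pieces. First (the model's own rule (D1b), Eq.~\eqref{signallv}): when $A$ and $B$ lie outside each other's $v$-cones, their bipartite marginal $P^{II}(a,b|x,y)$ must admit a \emph{local-variable} decomposition, because no hidden influence connects them. Second (from (D2), Eq.~\eqref{signalm}, via the argument in the caption of Fig.~\ref{figinfluences}): the marginals that pair $A$ or $B$ with the party $C$ that \emph{does} lie in both $v$-cones must equal the quantum marginals seen in the fully connected arrangement, because otherwise $B$'s choice of \emph{when} to measure would signal to $A$ faster than light. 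The contradiction is engineered by finding quantum statistics whose $AC$- and $BC$-type marginals are so rigid that every global distribution consistent with them forces the $AB$ marginal to violate a Bell inequality, contradicting the first piece. Your formulation does not capture this interplay.

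Two further points. Your candidate state --- two independent maximally entangled pairs --- will not work: its four-party correlations factor across the two pairs, so fixing some bipartite marginals places no constraint on the others and the forcing mechanism above never engages; already the paper's partial tripartite illustration needs a genuinely tripartite entangled state in $\compl^2\otimes\compl^2\otimes\compl^3$, and the four-party construction in \cite{bancal2012} likewise relies on genuinely multipartite entanglement. Finally, the obstacle you single out --- covering all preferred frames simultaneously by a family of arrangements --- is not how the paper proceeds: for \emph{each} candidate frame and each finite $v$, the single configuration (II) can be realised and yields the contradiction; since the quantum predictions are timing-independent, the argument applies to whatever the unknown frame happens to be, without ever identifying it.
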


In order to understand the theorem and how one can possibly prove such a result, we have to start by describing the signaling model in some detail. The \textit{desiderata} are:
\begin{itemize} \item[(D1)] Quantum statistics are simulated by LV, supplemented by a superluminal communication propagating at speed $v<\infty$ in a preferred frame;
\item[(D2)] The theory does not predict the possibility for us to send a message faster than light (no observable signaling).
\end{itemize} 

Let ${\cal E}_{A}$ denote the event in spacetime at which Alice chooses her measurement and gets her outcome (for simplicity, I suppose that these procedures take negligible time). Consider now a bipartite experiment. If $v<\infty$, three arrangements are possible: ${\cal E}_{A}\stackrel{v}{\rightarrow} {\cal E}_B$, ${\cal E}_{B}\stackrel{v}{\rightarrow} {\cal E}_A$, or ${\cal E}_{A}$ and ${\cal E}_B$ are outside each other's $v$-cone. It is very reasonable to concretize (D1) as:
\begin{itemize}
\item[(D1a)] In the arrangements ${\cal E}_{A}\stackrel{v}{\rightarrow} {\cal E}_B$ or ${\cal E}_{B}\stackrel{v}{\rightarrow} {\cal E}_A$, the observed statistics are compatible with a quantum state, further assumed to be the same in both cases\footnote{The fact that the state is the same does not play any role in what follows, but it is what one expects from quantum physics, in which the time-ordering of the measurements does not change the state. This whole requirement is another instance of the fine-tuning of signaling models: obviously, once the Pandora box is opened and a signal is allowed, there is no compelling reason \textit{a priori} to impose the observation of quantum statistics at all; but \textit{a posteriori}, we know that quantum statistics are observed and we have set out precisely to try and simulate them.}.
\item[(D1b)] If ${\cal E}_{A}$ and ${\cal E}_B$ are outside each other's $v$-cone, the observed statistics must be reproducible with LV. Notice that this arrangement never occurs if $v=\infty$.
\end{itemize}
Such a theory predicts a departure from quantum theory in the second arrangement and can therefore be tested in principle. Indeed, Alice and Bob can first arrange ${\cal E}_{A}\stackrel{v}{\rightarrow} {\cal E}_B$ and check that a Bell inequality is violated (with additional knowledge of the degree of freedom, they may even do a full tomography of the state). Then, Bob can bring his measurement to lie outside the $v$-cone of ${\cal E}_{A}$: the observed statistics should cease violating Bell inequalities. The design of the experiment may not be trivial, since we do not know which is the preferred frame, but there does not seem to be anything \textit{a priori} inconsistent in such a theory; in particular, it is trivial to find examples for which (D2) is satisfied. However, matters change when the theory is extended to \textit{three (or more) systems}.

\begin{figure}[ht]
\begin{center}
\includegraphics[scale=0.40]{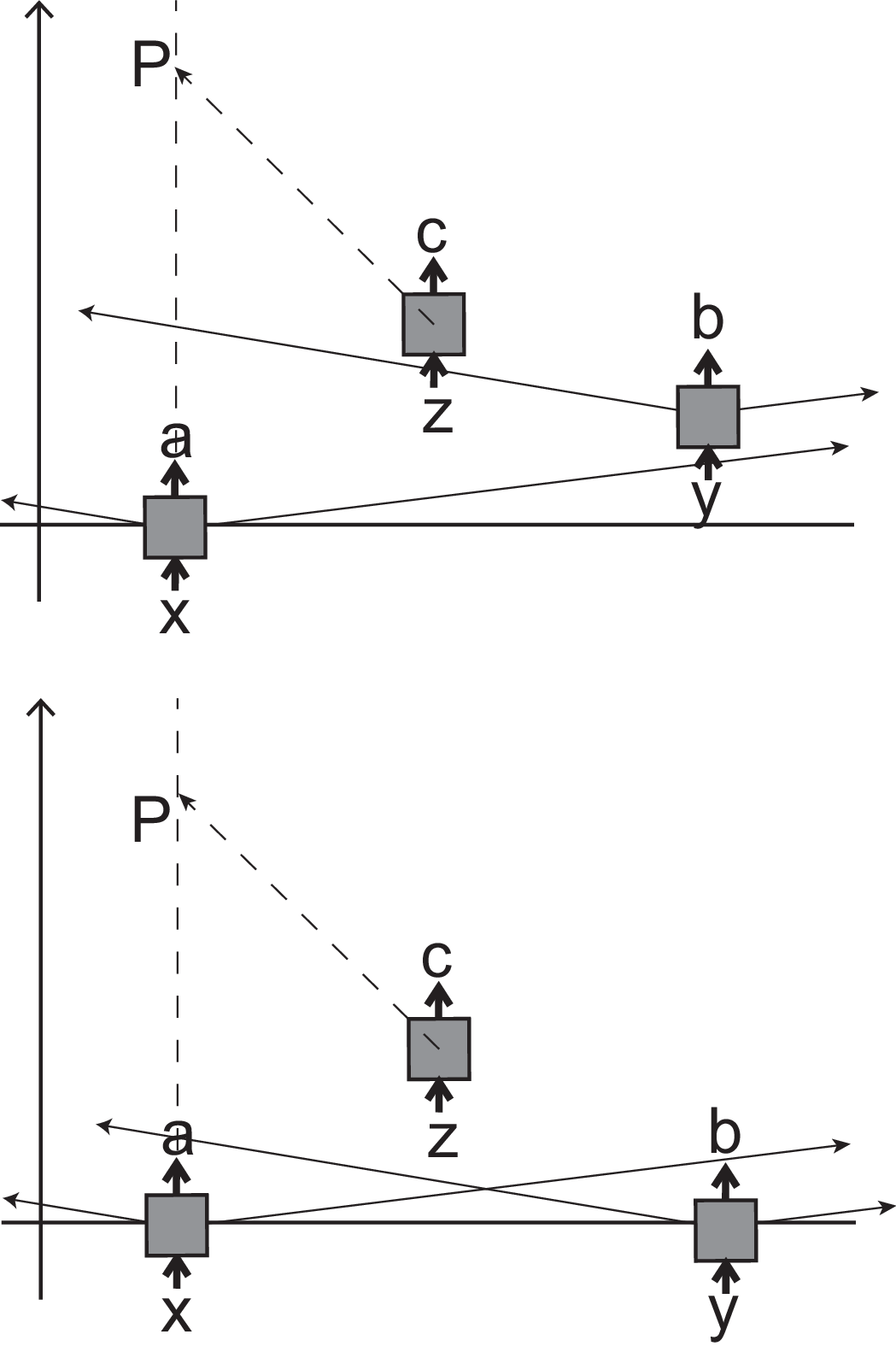}
\caption{The two possible arrangements described in the text (I top, II bottom). The thin full arrows represents the superluminal influences, the dashed arrow is a normal light signal. At point $P$, A can receive classical information about the setting and outcome of C, so correlations between A and C can be computed; while no light signal from B could have arrived to A. Therefore, the correlations A-C must be the same in both arrangements, otherwise faster-than-light communication would become possible from B to A (simply by B deciding to perform his measurement earlier or later).}
\label{figinfluences}
\end{center}
\end{figure}

Indeed, consider the following two arrangements of three measurements at different locations, with $C$ located between $A$ and $B$ (Fig.~\ref{figinfluences}):
\begin{itemize}
\item[(I)] ${\cal E}_{A}\stackrel{v}{\rightarrow} {\cal E}_B\stackrel{v}{\rightarrow} {\cal E}_C$.
\item[(II)] ${\cal E}_{A}\stackrel{v}{\rightarrow} {\cal E}_C$ and ${\cal E}_{B}\stackrel{v}{\rightarrow} {\cal E}_C$, but ${\cal E}_{A}$ and ${\cal E}_B$ are outside each other's $v$-cone; moreover, ${\cal E}_C$ is space-like separated from both ${\cal E}_{A}$ and ${\cal E}_B$ according to the usual light cones.
\end{itemize}
In configuration (I), one observes some quantum statistics ${\cal P}^I_{{\cal X},{\cal Y},{\cal Z}}$. In configuration (II), a departure from quantum statistics may happen if ${\cal P}^{I}_{{\cal X},{\cal Y}}$ violates Bell, because by construction ${\cal P}^{II}_{{\cal X},{\cal Y},{\cal Z}}$ must be such that
\ba
{\cal P}^{II}_{{\cal X},{\cal Y}}&& \textrm{is compatible with LV}\label{signallv}\,.
\ea
So far, we have exploited only (D1). Crucially, now (D2) imposes the \textit{two additional conditions}
\ba
{\cal P}^{II}_{{\cal X},{\cal Z}}\,=\, {\cal P}^{I}_{{\cal X},{\cal Z}} &\textrm{and} &
{\cal P}^{II}_{{\cal Y},{\cal Z}}\,=\,{\cal P}^{I}_{{\cal Y},{\cal Z}}\,,\label{signalm}
\ea for the simple reasoning explained in the caption of Fig.~\ref{figinfluences}. In turn, such conditions may impose constraints on the possible ${\cal P}^{II}_{{\cal X},{\cal Y}}$. Now one can hope to find a contradiction in the following way: start from some quantum statistics ${\cal P}^I_{{\cal X},{\cal Y},{\cal Z}}$ and compute ${\cal P}^{I}_{{\cal X},{\cal Z}}$ and ${\cal P}^{I}_{{\cal Y},{\cal Z}}$; if \textit{all} the statistics ${\cal P}_{{\cal X},{\cal Y},{\cal Z}}$ compatible with those marginals are such that ${\cal P}_{{\cal X},{\cal Y}}$ violates Bell, then no ${\cal P}^{II}$ can satisfy \eqref{signallv} and \eqref{signalm}. Therefore, one of the \textit{desiderata} must be dropped.

Such a three-partite example has not been found yet, but a \textit{four-partite example} exploiting similar contradictions has, thus proving Theorem \ref{thmbancal}. That example is rather complex: there is little added value in reproducing it here, compared to directing the reader to the published paper \cite{bancal2012}. I'd rather present here a partial proof \cite{braz05} that gives at least some intuition --- and, in the process, the reader can learn a nice quantum information result.

For this sake, I strengthen (D1) by requiring that any statistics of the theory, in particular ${\cal P}^{II}$, must be a quantum statistics. Also, I renounce full device-independence and assume that the systems are known, so tomography is possible. In this case, \eqref{signalm} is replaced by the stronger constraint
\ba
\rho^{II}_{AC}\,=\,\rho^{I}_{AC}&\textrm{and}& \rho^{II}_{BC}\,=\,\rho^{I}_{BC}\,.
\ea
The contradiction is then based on the following
\begin{lemma}
Consider a system $\compl^2\otimes\compl^2\otimes\compl^3$. For $0<\alpha<\frac{\pi}{2}$, there is only one quantum state such that
\ba
\rho_{AC}\,=\,\rho_{BC}&=& \demi\ket{\psi_1}\bra{\psi_1} \,+\,
\demi\ket{\psi_2}\bra{\psi_2}\label{rhos}\ea with
$\ket{\psi_1}=\sin\alpha\ket{00}+\cos\alpha\ket{12}$ and
$\ket{\psi_2}=\sin\alpha\ket{11}+\cos\alpha\ket{02}$: namely, the pure state $\ket{\Psi}=\cos\alpha\,\frac{\ket{01}+\ket{10}}{\sqrt{2}}\ket{2}
\,+\, \sin\alpha\,\frac{\ket{000}+\ket{111}}{\sqrt{2}}$. Moreover, this state is such that $\rho_{AB}$ violates the CHSH inequality for $\cos^2\alpha>\frac{1}{\sqrt{2}}$.
\end{lemma}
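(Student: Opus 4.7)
The claim splits into two parts: uniqueness of $\rho_{ABC}$ given the two marginals, and a CHSH computation for $\rho_{AB}$. I would treat them in turn.

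For uniqueness, the key observation is that $\ket{\psi_1}$ and $\ket{\psi_2}$ are orthogonal (their supports in the computational basis are disjoint), so $\rho_{AC}$ has rank exactly $2$ with both eigenvalues equal to $\demi$, and its support is $V=\mathrm{span}(\ket{\psi_1}_{AC},\ket{\psi_2}_{AC})$; the same is true of $\rho_{BC}$ with support $V'$. I would then invoke the standard support lemma: if a bipartite state $\sigma_{XY}$ has $\Tr_Y\sigma_{XY}=\rho_X$ with support $S$, then $\sigma_{XY}$ is supported in $S\otimes Y$, because $\Tr[(P_{S^\perp}\otimes\one_Y)\sigma_{XY}]=\Tr(P_{S^\perp}\rho_X)=0$ together with positivity forces $(P_{S^\perp}\otimes\one_Y)\sigma_{XY}=0$. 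Applied to both marginals, any $\rho_{ABC}$ with the prescribed marginals is supported in the intersection $(\compl^2_B\otimes V)\cap(\compl^2_A\otimes V')$. This intersection is a linear-algebra problem I would solve by expanding the natural bases $\{\ket{\beta}_B\otimes\ket{\psi_i}_{AC}\}$ and $\{\ket{\alpha}_A\otimes\ket{\psi_i}_{BC}\}$ in the computational basis of $\compl^2\otimes\compl^2\otimes\compl^3$ and matching coefficients on each of the twelve basis vectors. Provided $\sin\alpha\cos\alpha\neq 0$ (which is exactly the hypothesis $0<\alpha<\pi/2$), the resulting linear system collapses the eight parameters to a single free one, showing the intersection is one-dimensional and spanned precisely by $\ket{\Psi}$. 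Any density operator supported on a one-dimensional subspace is the corresponding normalized pure projector, so $\rho_{ABC}=\ket{\Psi}\bra{\Psi}$ is forced; existence is confirmed by the rewriting $\ket{\Psi}=\tfrac{1}{\sqrt{2}}(\ket{\psi_1}_{AC}\ket{0}_B+\ket{\psi_2}_{AC}\ket{1}_B)=\tfrac{1}{\sqrt{2}}(\ket{0}_A\ket{\psi_1}_{BC}+\ket{1}_A\ket{\psi_2}_{BC})$.

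For the CHSH part, I would compute $\rho_{AB}=\Tr_C\ket{\Psi}\bra{\Psi}$ directly. Since $\ket{0}_C,\ket{1}_C,\ket{2}_C$ are mutually orthogonal, cross-terms between different $C$-components vanish under the trace and one gets
\[
\rho_{AB}\,=\,\cos^2\alpha\,\ket{\Psi^+}\bra{\Psi^+}\,+\,\tfrac{\sin^2\alpha}{2}\big(\ket{00}\bra{00}+\ket{11}\bra{11}\big)
\]
with $\ket{\Psi^+}=(\ket{01}+\ket{10})/\sqrt{2}$. Each term has vanishing single-site Bloch vector and a diagonal correlation matrix, so Theorem~\ref{thm2} applies directly. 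A short calculation of $\moy{\sigma_i\otimes\sigma_j}$ gives $T=\mathrm{diag}(\cos^2\alpha,\cos^2\alpha,-\cos 2\alpha)$ and hence $T^tT=\mathrm{diag}(\cos^4\alpha,\cos^4\alpha,\cos^2 2\alpha)$. Setting $x=\cos^2\alpha$, the elementary inequality $x^2>(2x-1)^2$ holds for all $x>1/3$, which in particular covers the regime $x>1/\sqrt{2}$, so the two largest eigenvalues of $T^tT$ are both $\cos^4\alpha$. Theorem~\ref{thm2} then yields $\max\moy{\hat S}=2\sqrt{2}\cos^2\alpha$, which strictly exceeds $2$ precisely when $\cos^2\alpha>1/\sqrt{2}$.

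The only genuinely non-routine step is the support-containment argument used in the uniqueness proof: one must recognize that because $\dim B=\mathrm{rank}(\rho_{AC})$ and $\dim A=\mathrm{rank}(\rho_{BC})$, each marginal forces $\rho_{ABC}$ into a $4$-dimensional subspace of the $12$-dimensional global space, and the intersection of two generic $4$-dimensional subspaces of $\compl^{12}$ is empty, so the nontrivial one-dimensional overlap here is a tight consistency condition that picks out $\ket{\Psi}$. Once that reduction is in hand, the remainder is a mechanical coefficient-matching exercise followed by a direct application of the two-qubit CHSH formula.
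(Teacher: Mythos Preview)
Your proof is correct, and for the CHSH half it is essentially the paper's argument carried out in full (the paper simply says ``compute $\rho_{AB}$ and use \eqref{maxviolqubitgen}''; you do exactly that and obtain $\max\moy{\hat S}=2\sqrt{2}\cos^2\alpha$).

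For the uniqueness half, however, your route is genuinely different from the paper's. The paper argues by \emph{purification}: it appends an auxiliary system $X$, writes the most general purification of $\rho_{AC}$ as $\tfrac{1}{\sqrt 2}(\ket{\psi_1}_{AC}\ket{E_1}_{BX}+\ket{\psi_2}_{AC}\ket{E_2}_{BX})$, Schmidt-decomposes $\ket{E_k}_{BX}$ using that $B$ is a qubit, and then imposes the $\rho_{BC}$ constraint to force the auxiliary $X$ to factor off. You instead work intrinsically in $\compl^2\otimes\compl^2\otimes\compl^3$ via the support lemma: the two rank-$2$ marginals confine $\rho_{ABC}$ to the intersection of two four-dimensional subspaces, and a coefficient match shows that intersection is the line through $\ket{\Psi}$. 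Your approach is more elementary here (no ancilla, no Schmidt decomposition) and makes the rigidity transparent as a linear-algebra fact; the paper's purification argument is the more portable quantum-information technique and would adapt more naturally if the subsystem dimensions were relaxed. One cosmetic remark: the intersection of two subspaces is never ``empty'' but rather generically \emph{trivial}; and your coefficient-matching step does go through exactly as you describe, with $a_2=a_3=0$ forced by the $\ket{101}$ and $\ket{010}$ components and $a_1=a_4$ by the remaining ones.
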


\begin{proof}

Since $\ket{\psi_1}$ and $\ket{\psi_2}$ are orthogonal, any purification of $\rho_{AC}$ can be written
\ban
\ket{\Phi}&=&\frac{1}{\sqrt{2}}\big(\ket{\psi_1}_{AC}\ket{E_1}_{BX}
+ \ket{\psi_2}_{AC}\ket{E_2}_{BX}\big) \ean with $X$ an auxiliary
mode and $\braket{E_1}{E_2}=0$. Further, since $B$ is a qubit, the Schmidt
decomposition yields \ban
\ket{E_1}_{BX}&=&c_0\ket{0}_B\ket{x_{10}}_X+c_1\ket{1}_B\ket{x_{11}}_X\\
\ket{E_2}_{BX}&=&d_0\ket{0}_B\ket{x_{20}}_X+d_1\ket{1}_B\ket{x_{21}}_X\ean
with $\braket{x_{k0}}{x_{k1}}=0$. We can insert these expressions into $\ket{\Phi}$ and compute the expression for $\rho_{BC}$, then require it to be given by (\ref{rhos}). Specifically, the requirement that $\rho_{BC}$ is orthogonal to $\ket{01}_{BC}$ and $\ket{10}_{BC}$ forces $c_1=d_0=0$, that in turn implies $c_0=d_1=1$. Using this condition, one further finds that $\rho_{BC}$ can be recovered if and only if $\braket{x_{10}}{x_{21}}=1$: therefore, $\ket{\Phi}_{ABCX}=\ket{\Psi}_{ABC}\ket{x}_X$, i.e. $\ket{\Psi}_{ABC}$ is the only quantum state, pure or mixed, compatible with the marginals (\ref{rhos}).

Now one can compute $\rho_{AB}$ and use \eqref{maxviolqubitgen} to prove that it violates the CHSH inequality if $\cos^2\alpha>\frac{1}{\sqrt{2}}$.
\end{proof}

Theorem \ref{thmbancal} proves that, in order to reproduce observed\footnote{Strictly speaking, the statistics used for the proof have not been observed yet. But they come from a set of few von Neumann measurements on a four-qubit state, and I don't see any reason to doubt the accuracy of quantum predictions in such a case.} statistics with communication, one has either to postulate a communication that propagates at infinite speed (in which case the universe is instantaneoulsy connected and everything is possible), or to conclude that faster-than-light signaling is possible after all (in which case not only quantum entanglement, but the whole of physics should be revisited). Thence this result comes as close as possible to a full falsification of signaling models.

\section{Towards a device-independent definition of quantum physics}
\label{secprinciples}

The device-independent outlook is fruitful to falsify alternative models and to promote Bell tests as certification tools. It is also encouraging to revisit old foundational questions, notably the one I present in this last section: what defines quantum physics?

\subsection{Traditional approaches to a definition of quantum physics}

The vast majority of presentations of quantum physics define it \textit{through its mathematical structure}, something that is often referred to as ``assuming the Hilbert space". Explicitly, the minimal assumption for the kinematics is that every physical property $P$ is described by a subspace ${\cal E}_P$ of a Hilbert space $\cal H$, with the rule that perfectly distinguishable properties are associated to orthogonal subspaces. Gleason's theorem\footnote{I am referring here to the original Gleason's theorem, and am extending its conclusions to the case $d=2$ which is not covered by the proof. There exist Gleason-like theorems covering the case $d=2$, and with much simpler proofs: the price to pay is that the assumptions (typically, the whole algebra of POVMs) become even harder to justify.} then leads to Born's probability rule, from which in turn follow all those other rules that are presented as ``axioms" in some introductory textbooks; the unitary representation of symmetries is another theorem, Wigner's. For the dynamics, the independent assumption of reversibility is required.

This approach is excellent for practical purposes, clarifying the mathematical structure that is accepted by everyone as a working tool. One can also accept this mathematical structure \textit{a posteriori}, based on its predictive power. Nevertheless, it is legitimate to ask if the Hilbert space assumption can be replaced by something more appealing \textit{a priori}. Even a simple review of the approaches that have been proposed would take us too long. But, in a way or another, they all \textit{assume the possibility of characterizing the state}: in other words, they assume that one can identify a closed set of measurements, such that the state is defined completely by the statistics of those measurements. As a consequence, before applying the formalism to a concrete case, one needs to have a pretty good idea of the degree of freedom under study and of the measurement devices that are in principle available.

All this is perfectly legitimate and common practice in physics. However, from the vantage point of device-independent assessment, we may hope to do even better. The hierarchy of semi-definite criteria described in paragraph \ref{sssnpa} goes only half way: it does define the possible physical observations in terms of a device-independent criterion (the positivity of some matrices built only on observed statistics); but there is no justification for this criterion, other than the fact that it recovers the statistics achievable with the Hilbert space formalism. Waiting for someone to find a physical reason, independent of quantum physics, why those matrices should be positive, I review here the partial successes achieved in answering the question: \textit{can quantum physics be defined in terms of device-independent physical principles?}

\subsection{No-signaling statistics}

\subsubsection{No-signaling as a framework}

As we have argued above, in order to simulate quantum statistics, we would need to use communication; but quantum physics achieves the same result without communication. It is tempting therefore to postulate ``no-signaling through observation" as a physical principle that must be respected in nature. This \textit{no-signaling principle} comes quite close to defining quantum physics itself, by cutting out all the models based on communication (which are unconstrained as for the statistics they can distribute). Popescu and Rohrlich \cite{PR} went further and asked whether the no-signaling principle defines quantum physics tightly. They found a counter-example (see next paragraph), so we need to find a more refined principle, or maybe a set of such principles. But all the following discussion will have the set of no-signaling statistics as underlying framework.

The set of no-signaling statistics is a polytope, since it is embedded in a finite-dimensional space and is defined by the linear constraints \eqref{defns}. Its extremal points are the local deterministic points (the same as for the local polytope) and some non-deterministic points. For the purpose of this text, we do not need to spend more time in these mathematical charcaterizations: I shall just present the simplest example of extremal no-signaling point that is not achievable with quantum statistics. It is the very example given by Popescu and Rohrlich and is nowadays generally called\footnote{When it comes to matters of priority, it had been mentioned in the literature as early as 1985 by other authors, see \cite{ourreview} for the details.} \textit{PR-box}.

\subsubsection{The PR-box}

We focus on the CHSH scenario of two parties, two inputs and two outputs. In paragraph \ref{ssscase}, we have noticed that the correlation vector $w=(+1,+1,+1,-1)$ would reach the algebraic maximum $S=4$ of CHSH; shortly later, we have established the Tsirelson bound $S=2\sqrt{2}$. This means that the correlations $w$, which can be compactly written as
\ba
a\oplus b=xy &\textrm{for}& a,b,x,y\in\{0,1\}\,,\label{prcorr}
\ea cannot be distributed with quantum physics.

There are four deterministic points that achieve $w$:
\ban
{\cal D}_1&:&P(a,b|0,0)=P(a,b|0,1)=P(a,b|1,0)=\delta_{a=0,b=0},\\&& P(a,b|1,1)= \delta_{a=0,b=1}\,;\\
{\cal D}_2&:&P(a,b|0,0)=P(a,b|0,1)=P(a,b|1,0)=\delta_{a=0,b=0},\\&& P(a,b|1,1)= \delta_{a=1,b=0}\,;\\
{\cal D}_3&:&P(a,b|0,0)=P(a,b|0,1)=P(a,b|1,0)=\delta_{a=1,b=1},\\&& P(a,b|1,1)= \delta_{a=0,b=1}\,;\\
{\cal D}_4&:&P(a,b|0,0)=P(a,b|0,1)=P(a,b|1,0)=\delta_{a=1,b=1},\\&& P(a,b|1,1)= \delta_{a=1,b=0}\,.
\ean
But each of these deterministic points violates the no-signaling constraint. Consider for instance ${\cal D}_1$: Bob can directly read Alice's input by choosing $y=1$, since $b_{y=1}=x$. It is easy to check that only one convex combination of these points satisfies the no-signaling condition \eqref{defns}, namely the PR-box
\ba
{\cal P}_{PR}&:&P(a,b|0,0)=P(a,b|0,1)=P(a,b|1,0)=\demi\delta_{a=0,b=0}+\demi\delta_{a=1,b=1},\\&& P(a,b|1,1)= \demi\delta_{a=0,b=1}+\demi\delta_{a=1,b=0}\,.
\ea This uniqueness, together with the fact that $S=4$ is the maximum CHSH can reach, immediately implies that ${\cal P}_{PR}$ is an extremal point of the no-signaling polytope.

In the last decade or so, intriguing results have been obtained by considering the PR-box and its generalization as \textit{resources} for distributing correlations: the interested reader will find basic information and references in \cite{ourreview,fff}. In the following, I shall use the PR-box simply as the prototypical example of no-signaling statistics that cannot be achieved with quantum physics.

\subsection{Device-independent physical principles}

\subsubsection{Information causality}

Information causality (IC) may be the physical principle that defines quantum physics, but we have not been able to prove it yet; for sure, it is the attempt that comes closer to selecting the quantum set within the no-signaling polytope. Here, I limit myself to presenting the initial intuition, because Marcin Pawlowski (the one who had first the idea) and I have recently written a synthetic text \cite{ps13} which is, in my opinion, as clear as it gets\footnote{I am still able to use the copy-and-paste function of my computer, but I don't see any point in using it here.}.

\begin{figure}[ht!]
\begin{center}
\includegraphics[scale=0.50]{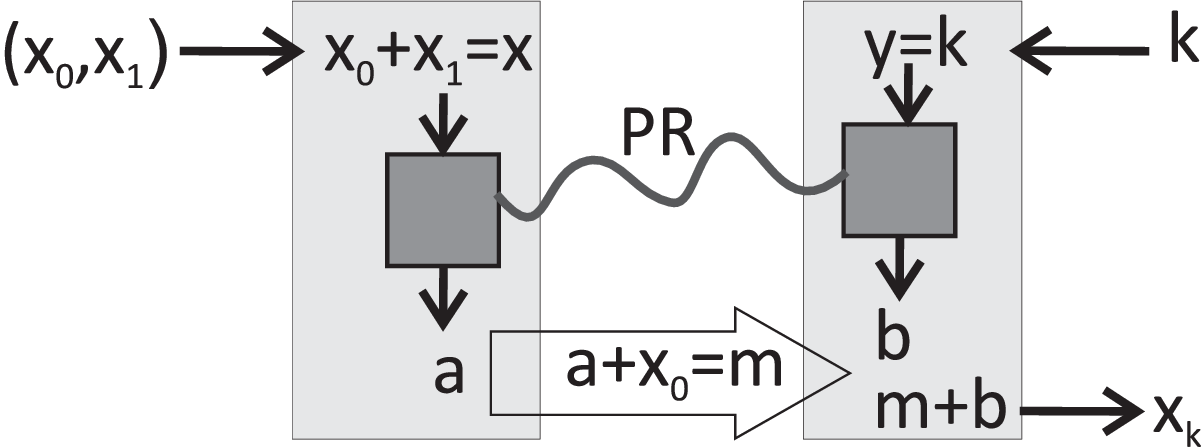}
\caption{The task that inspires information causality and how the PR-box achieves it.}
\label{figrac}
\end{center}
\end{figure}

It all starts by finding something that would ``go wrong" if PR-boxes would exist. The starting point to formulate IC is the power of the PR-box in the communication task\footnote{This taks is known in information science as the simplest example of ``random access code"; or, if one adds the requirement that Alice is forbidden to know Bob's choice even a posteriori, as ``oblivious transfer".} described in Fig.~\ref{figrac}. Alice's input consists of a pair of bits $(x_0,x_1)\in\{0,1\}^2$ drawn uniformly at random among the four possible values; Bob's input is a bit $y\in\{0,1\}$ unknown to Alice. The goal of the channel is to give Bob $x_y$ without giving him any knowledge of $x_{1-y}$.

Let us pause to examine this situation classically. If Alice cannot send any information to Bob, obviously Bob cannot retrieve either bit. If Alice can send two bits, Bob can trivially retrieve both $x_y$ and $x_{1-y}$. The interesting case is when Alice is \textit{restricted to send only one bit}: may they succeed in the task, possibly with the help of pre-established LV? One can prove that they can't: the best Alice can do with one bit of communication is the obvious strategy: she encodes by default $x_0$, sends it, and Bob outputs it. If Bob had received $y=0$, his output is correct; if he received $y=1$, his output is uncorrelated with the right answer $x_1$.

Remarkably though, Alice and Bob would succeed if, instead of pre-sharing LV, they would be allowed to pre-share a PR-box. Indeed, here goes the protocol: Alice inputs $x=x_0\oplus x_1$ in the PR-box; she gets the outcome $a$ and sends to Bob the single-bit message $m=a\oplus x_0$. Bob inputs $y$ in the PR-box, gets the outcome $b$ and produces the guess $\beta=m\oplus b=(a\oplus b)\oplus x_0$. By the rule \eqref{prcorr} of the PR-box, $a\oplus b=(x_0\oplus x_1)y$; so $\beta=x_0\oplus[(x_0\oplus x_1)y]=x_y$ as claimed.

Nothing has gone blatantly wrong: Alice sends one bit, Bob gets one bit, because the PR-box is a no-signaling resource. It is nevertheless puzzling that Bob can guess perfectly either of the two bits: it looks as if both bits had been transferred to his location, even if he is allowed to read only one. \textit{Information causality erects as a principle that this should not happen: positively, if Alice sends one bit, the amount of useful information at Bob's location cannot exceed one bit} \cite{icnature}. When formulated mathematically, it turns out that IC is respected if Alice and Bob would share entanglement. For the CHSH scenario, IC is violated by any no-signaling resource such that $S>2\sqrt{2}$: in other words, the Tsirelson bound is recovered without any reference to the algebra of Hilbert spaces.

\subsubsection{Macroscopic locality}

Like information causality, macroscopic locality \cite{nw09} is defined in terms of a restricted task. The scenario is that of a normal Bell experiment, with a source producing i.i.d. signals that would generate the ``microscopic" statistics ${\cal P}_{{\cal X},{\cal Y}}$. However, the boxes are not capable of measuring individual events and reconstruct those statistics: for each choice of their measurement settings, they are restricted to observe only ``macroscopic" averages.

Specifically, for any $x$, Alice can access only the currents $\mathbf{I}_x=(I_{a=0|x},I_{a=1|x},...)$ created in her $m_A$ detectors by sending $N$ signals. Similarly, Bob can access $\mathbf{J}_y=(J_{b=0|y},J_{b=1|y},...)$ defined in an identical way. This defines a scenario with the same number of settings as the microscopic one but a much larger alphabet of outcomes\footnote{For fixed finite $N$, $\mathbf{I}_x$ takes values in $\nat^{m_A}$ restricted by $\sum_{a}I_{a|x}=N$; and similarly for $\mathbf{J}_y$. But since we are going to consider the limit $N\rightarrow\infty$, the discrete and finite structure won't play any role.}. After repeating such an experiment many times, Alice and Bob can reconstruct the statistics of these macroscopic currents
\ba
{\cal{P}}^M_{{\cal X},{\cal Y}}&=&\left\{ P(\mathbf{I}_x,\mathbf{J}_y)\,,\;x\in{\cal X}\,,\,y\in{\cal Y}\right\}\,.
\ea
The \textit{principle of macroscopic locality} states that ${\cal{P}}^M_{{\cal X},{\cal Y}}$ should not violate any Bell inequality in the limit $N\rightarrow\infty$. The rationale is that \textit{coarse-graining should lead to classical physics}, an intuition to which many would unhesitatingly subscribe\footnote{A point of logic: in order to accept the principle, it is enough to consider coarse-graining at the measurement device as a \textit{possible} path for the emergence of the classical world. It does not need to be believed as \textit{the only} mechanism.}.

It is easy to show that the PR box violates macroscopic locality. The macroscopic statistics are given by
\ba
{\cal{P}}^M_{{\cal X},{\cal Y}}\,[\textrm{PR box}]&:&\left\{ \begin{array}{lcl}
(x,y)=(0,0)&:& \mathbf{I}_0=\mathbf{J}_0\\
(x,y)=(0,1)&:& \mathbf{I}_0=\mathbf{J}_1\\
(x,y)=(1,0)&:& \mathbf{I}_1=\mathbf{J}_0\\
(x,y)=(1,1)&:& \mathbf{I}_1=(I,N-I)\,,\,\mathbf{J}_1=(N-I,I)
\end{array}\right.\,.
\ea In order to show that these statistics violate some Bell inequality, we can append the following local post-processing (majority vote): for each run of the macroscopic experiment, Alice's box outputs $\alpha=0$ if $I_{a=0|x}-I_{a=1|x}\geq 0$ and $\alpha=1$ otherwise; Bob's box outputs $\beta$ with the same rule. Then, the statistics $P(\alpha,\beta|x,y)$ thus obtained define again the PR box. Loosely speaking, this proves that the PR box gets out unscathed from the coarse-graining process and therefore grossly violates macroscopic locality.

The extent to which the principle of macroscopic locality approximates the quantum set $\cal Q$ is known exactly:
\begin{theorem}
${\cal{P}}^M_{{\cal X},{\cal Y}}$ can be reproduced by LV if and only if ${\cal P}_{{\cal X},{\cal Y}}$ belongs to ${\cal Q}_1$, the set of statistics that pass the first test of the hierarchy of semi-definite criteria. In particular, since ${\cal Q}\subset{\cal Q}_1$, all quantum statistics respect macroscopic locality, but there are non-quantum statistics that respect it as well.
\end{theorem}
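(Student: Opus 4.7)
The plan is to reduce the macroscopic question, via the central limit theorem, to a statement about Gaussian distributions, and then to exhibit an explicit identification between the resulting positivity condition and the definition of $\mathcal{Q}_1$ given in paragraph \ref{sssnpa}.

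\textbf{Step 1 (Gaussian limit).} First I would fix a pair of settings $(x,y)$ and note that the source emits $N$ i.i.d.\ signals drawn from $P(a,b\mid x,y)$. The multidimensional central limit theorem applied to the count vectors yields: as $N\to\infty$, the centred and rescaled variables $\tilde{\mathbf{I}}_x = (\mathbf{I}_x - NP(\cdot\mid x))/\sqrt{N}$ and $\tilde{\mathbf{J}}_y = (\mathbf{J}_y - NP(\cdot\mid y))/\sqrt{N}$ converge jointly in distribution to a centred Gaussian whose covariance matrix is determined by ${\cal P}_{\cal X,\cal Y}$ via
\[
\Sigma^{AA}_{(a,a')|x} = P(a\mid x)\delta_{a,a'} - P(a\mid x)P(a'\mid x),
\]
\[
\Sigma^{BB}_{(b,b')|y} = P(b\mid y)\delta_{b,b'} - P(b\mid y)P(b'\mid y),
\]
\[
\Sigma^{AB}_{(a,x),(b,y)} = P(a,b\mid x,y) - P(a\mid x)P(b\mid y).
\]
Since Bell inequalities are linear functionals of the distribution, LV-violation transports to this Gaussian limit; this justifies replacing ${\cal P}^M$ with a Gaussian whose covariance matrix $\Gamma$, indexed by all $(a,x)$ and $(b,y)$, has the fixed blocks above and free blocks corresponding to same-party, different-setting correlators.

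\textbf{Step 2 (LV iff PSD completion).} Next I invoke the Corollary to Proposition \ref{prop21} (whose proof extends verbatim to continuous outcomes): ${\cal P}^M$ is LV iff there exists a joint distribution of all $\{\mathbf{I}_x\}_{x \in \mathcal{X}}$ and $\{\mathbf{J}_y\}_{y \in \mathcal{Y}}$ simultaneously, whose two-setting marginals are the observed ones. In the Gaussian regime this joint distribution is characterised entirely by its covariance matrix, which is automatically PSD; its off-diagonal same-party blocks for $x \neq x'$ and $y \neq y'$ are precisely the free entries of $\Gamma$. Conversely, any PSD completion of $\Gamma$ defines a joint Gaussian that Alice and Bob can employ directly as the shared variable $\lambda$, each reading off the component labelled by their own setting. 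Hence the Gaussian statistics admit an LV model iff $\Gamma$ admits a PSD completion.

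\textbf{Step 3 (PSD completion iff $\mathcal{Q}_1$).} Finally I identify this completion problem with the level-1 NPA certificate. Taking the $F$'s to be $\{\mathbb{1}\}\cup\{\Pi_a^x\}\cup\{\Pi_b^y\}$ (projective measurements, WLOG by Neumark dilation) produces a matrix $M$ whose first row is $(1,\,P(a\mid x),\,P(b\mid y))$, whose Alice-Bob block is $P(a,b\mid x,y)$, whose same-party same-setting block is diagonal with entries $P(a\mid x)$ because $\Pi_a^x\Pi_{a'}^x=\delta_{a,a'}\Pi_a^x$, and whose cross-setting same-party entries $\mathrm{Tr}(\rho\Pi_a^x\Pi_{a'}^{x'})$ are free. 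A Schur complement with respect to the $\mathbb{1}$-block shows that $M\succeq 0$ iff $M_{22}-\mathbf{p}\mathbf{p}^{\mathrm{T}}\succeq 0$, where $\mathbf{p}$ is the first column of $M$ without its top entry and $M_{22}$ is the remainder. A direct inspection of $M_{22}-\mathbf{p}\mathbf{p}^{\mathrm{T}}$ reveals that it coincides with $\Gamma$ of Step~2 under the identification $\Gamma^{AA}_{(a,x),(a',x')} \leftrightarrow \mathrm{Tr}(\rho\Pi_a^x\Pi_{a'}^{x'}) - P(a\mid x)P(a'\mid x')$, and analogously on Bob's side. The two PSD-completion problems are therefore identical, and the theorem follows.

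The main obstacle I expect is making Step~1 fully rigorous: one must ensure that the LV property of the un-rescaled finite-$N$ distributions passes cleanly to the Gaussian limit. The natural strategy is to fix an arbitrary Bell expression, compute its mean and variance under the underlying multinomial, and use tightness of the rescaled sequence to conclude that a Bell violation in the limit Gaussian implies a violation at sufficiently large $N$ (and conversely, that any sub-Gaussian $1/\sqrt{N}$ correction cannot rescue a point that is rejected at the Gaussian level). Steps~2 and~3 are then essentially linear algebra, the only book-keeping subtlety being to match precisely the free parameters on the two sides and to handle consistently POVMs versus projective measurements in the definition of $\mathcal{Q}_1$.
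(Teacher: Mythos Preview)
Your proposal is correct and follows essentially the same route as the paper: reduce to a Gaussian via the central limit theorem, invoke the joint-distribution criterion for LV, and identify the resulting covariance-matrix positivity condition with the level-1 NPA matrix. The only cosmetic difference is in Step~3: where you take the Schur complement of the $\mathcal{Q}_1$ matrix with respect to the $\mathbb{1}$-block, the paper achieves the same effect by centring the operators directly, i.e.\ building $M$ from $F_{a|x}=\Pi_a^x-\langle\Pi_a^x\rangle\one$ instead of from $\Pi_a^x$---these two moves are equivalent, since the Schur complement with respect to $\mathbb{1}$ is precisely the Gram matrix of the centred operators.
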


\begin{proof}
As shown in paragraph \ref{ssdet}, ${\cal{P}}^M_{{\cal X},{\cal Y}}$ can be reproduced by LV if and only if each of the $P(\mathbf{I}_x,\mathbf{J}_y)$ can be computed as marginal of a joint probability distribution ${\mathbf{P}}(\mathbf{I}_1,\mathbf{I}_2,...,\mathbf{I}_{M_A};\mathbf{J}_1,\mathbf{J}_2,...,\mathbf{J}_{M_B})$.

Now, notice that
\ba
I_{a|x}=\sum_{n=1}^N \delta_{a_x(n)=a}&,& J_{b|y}=\sum_{n=1}^N \delta_{b_y(n)=b}
\ea are sums of i.i.d. random variables. Each $\mathbf{I}_x$ being a vector of $m_A$ numbers $I_{a|x}$ and each $\mathbf{J}_y$ being a vector of $m_B$ numbers $I_{b|y}$, they are in turn sums of i.i.d. random vectors. If ${\mathbf{P}}$ exists, the \textit{central limit theorem} states that the fluctuations of its variables around their average must obey, in the limit $N\rightarrow\infty$, a multivariate Gaussian distribution with zero average and covariance matrix $\Gamma\geq 0$. Inversely, if such a distribution exists, it defines a valid ${\mathbf{P}}$.

The proof will be finished by showing that $\Gamma$ is essentially the matrix $M$ that defines ${\cal Q}_1$ for ${\cal P}_{{\cal X},{\cal Y}}$, as defined in paragraph \ref{sssnpa}. Indeed, let's define the fluctuations as
\ba
f_{a|x}=\frac{I_{a|x}-\moy{I_{a|x}}}{\sqrt{N}}&,& f_{b|y}=\frac{J_{b|y}-\moy{J_{b|y}}}{\sqrt{N}}\,:
\ea the entries of the covariance matrix are the $\Gamma_{ij}=\moy{\mathbf{f}_{i}\cdot\mathbf{f}_{j}}$, i.e. with a suitable labeling
\ba
\Gamma &=&\left(\begin{array}{c|c} \left\{\moy{\mathbf{f}_{x}\cdot\mathbf{f}_{x'}}\right\} & \left\{\moy{\mathbf{f}_{x}\cdot\mathbf{f}_{y'}}\right\}\\\hline
\left\{\moy{\mathbf{f}_{y}\cdot\mathbf{f}_{x'}}\right\} & \left\{\moy{\mathbf{f}_{y}\cdot\mathbf{f}_{y'}}\right\}
\end{array}\right)\,.
\ea The elements of the off-diagonal blocks are terms of the form
\ba
\moy{f_{a|x}f_{b|y}}&=&\frac{1}{N}(\moy{I_{a|x}J_{b|y}}-\moy{I_{a|x}}\moy{J_{b|y}})\nonumber\\& \stackrel{N\rightarrow\infty}{\longrightarrow}& P(a,b|x,y)-P(a|x)P(b|y)\,.
\ea The terms of the diagonal blocks have a similar form, but can be associated to observable probabilities only when $x=x'$ ($y=y'$). This matrix defines indeed the step ${\cal Q}_1$ of the hierarchy, up to redefining the measurement operators as $F_{a|x}=\Pi_{a}^{x}-\moy{\Pi_{a}^{x}}\one_{d_A}$ and $F_{b|y}=\Pi_{b}^{y}-\moy{\Pi_{b}^{y}}\one_{d_B}$. \end{proof}

It is important to stress the crucial role of the i.i.d. assumption, which is required in the proof above in order to invoke the central limit theorem:
\begin{itemize}
\item If i.i.d. is not requested, quantum states that violate a Bell inequality for the coarse-grained detection can be found (although, for large $N$, they may be impossible to realize in practice). So, the theorem does not say that coarse-graining \textit{alone} washes all Bell violation away.
\item Under the \textit{promise} that their source is producing i.i.d. signals, Alice and Bob can infer ${\cal P}_{{\cal X},{\cal Y}}$ from the covariance matrix of their observed macroscopic statistics. Thus, they may infer that their source could be used to violate some Bell inequality, if fine-grained detection would be available.
\end{itemize}

\subsubsection{A temporary balance}

Let me finish by summarizing where we stand in the quest for a complete device-independent definition of quantum physics. 

One path towards may pass through finding physical interpretations for the steps ${\cal Q}_n$ of the NPA hierarchy (so far, we know only the physical meaning of ${\cal Q}_1$). However, this hierarchy is just the only family of tests for which convergence has been proved: nothing guarantees that all its steps must have a clear physical meaning. In particular, information causality does not correspond to any step of that hierarchy and, insofar as we know, a suitable generalization of it may already define the quantum set exactly.

\section{Conclusion}

Set in a science museum, in a not too distant future:\\
-- ``Mum, what are those two boxes?"\\
-- ``They show the violation of Bell inequalities. You remember when you went with dad to buy pieces for the new quantum computer and he quarreled with the vendor? The fellow was trying to sell boxes of low quality, but your dad knows these things: nobody can cheat him."\\
-- ``The small shiny boxes?"\\
-- ``Yes, nowadays they are very small. This one here is the old kind, the one we had when I was a student. I played a bit with such boxes. They changed the way I look at nature."\\
-- ``Wow! How does it work?"\\
-- ``Nobody really knows."

\section*{Acknowledgments}

In the last decade, I have had the chance to meet virtually everyone working on these topics: I don't even try to make a list, because the only predictable result is that I would forget to include many. I limit myself to acknowledging the people who have contributed directly to the present text, which started as part of the lecture notes of a CQT graduate module. Berge Englert was greatly supportive in the usual bureaucratic struggles involved in setting up a new module. A special thought goes to the first batch of students who took the module and suffered some of the deficiencies of that first presentation: Aarthi Sundaram, Benjamin Phuah, Rakhita Chandrasekara, Dai Jibo, Do Thi Xuan Hung, Le Phuc Thinh, Teh Run Yan, Rus Whang, Sambit Bikas Pal and Tanvirul Islam. Precious feedback was given by Nicolas Gisin, Jeysthur Ang, Lana Sheridan, Yang Tzyh Haur, Mark Wilde, Jonathan Olson, Corsin Pfister and Jed Kaniewski. 

This work is supported by the National Research Foundation and Ministry of Education, Singapore.

\begin{appendix}

\section{Reading EPR again}
\label{aepr}

In most texts devoted to Bell inequalities, including this one, the original EPR paper \cite{epr} is quoted only at the beginning and never mentioned again. In this appendix, I revisit the \textit{physical process} proposed by EPR in the light of our present understanding.

The EPR state is a bipartite state of two particles on a line, each characterized by the Hilbert space $L^2(\real)$. It is immediate to check that the operators $x_1-x_2$ and $p_1+p_2$ commute; so one can define the state that satisfies both\footnote{One can as well study the case where $x_1+x_2$ and $p_1-p_2$ are used to define the state.}
\ba
x_1-x_2=d&\textrm{and} & p_1+p_2=u\label{defepr}
\ea with $d,u\in\real$. Explicitly, this state is such that:
\begin{itemize}
\item If position is measured, particle 1 can be found \textit{anywhere} and particle 2 will be found at $x_2=x_1-d$. If $d=0$, the particles are found at the same location, but this value does not play any role in the argument.
\item If momentum is measured, particle 1 can be found with \textit{any value} and particle 2 will be found to have $p_2=u-p_1$. If $u=0$, the particles have opposite momentum, but again this does not play any role.
\end{itemize} 

The \textit{EPR reasoning} is the following: if I measure the position of particle 1 and find $x_1=x$, I know for sure what the result of a measurement of position of particle 2 would be, namely $x_2=x-d$. So I can just as well learn something else by measuring momentum of particle 2: upon finding $p_2=p$, I know that a measurement of momentum on particle 1 would have given $p_1=u-p$. Notice that this does not contradict the uncertainty relations\footnote{The predictions for the EPR state, like any state in quantum theory, obey the uncertainty relations: for instance, the distribution of $p_1=u-p_2$ is uniformly spread on $\real$, even post-selecting on the runs in which one finds a given value $x_1=x$. Indeed, $\Delta x\Delta p \geq \frac{\hbar}{2}$ means that one cannot prepare a source, such that the statistics of both position and momentum are sharply defined. This does not imply logically that both position and momentum cannot be sharply defined \textit{in each single run}. Admittedly, if position and momentum were well-defined in each run, it would be hard to understand why nature conspires to hide this from us over many runs; nevertheless, the existence of LV models for measurements on single degrees of freedom cannot be denied, however ``unnatural" one may find them.}.

Our understanding of LV statistics is a powerful tool to analyze this reasoning: in fact, in a sense, we have already analyzed it. Indeed, we can rephrase it with the singlet state of two qubits: if I measure $\sigma_z$ on Alice's qubit and find $+1$, I know that a measurement of $\sigma_z$ on Bob's qubit would have hold $-1$. So I can just as well measure $\sigma_x$ on Bob's qubit, etc. We have seen in paragraph \ref{powerlv} that the statistics of these measurements \textit{can} be reproduced with LV; and so can the statistics of a measurement on a single particle. Therefore, my assessment is that EPR drew the right conclusion, based on their limited evidence: the existence of pre-determined values for single particles cannot be excluded, and the EPR reasoning even enforces this as the most reasonable explanation of some other predictions of quantum theory! Only the violation of Bell inequalities can shatter this vision.

As it turns out, EPR were not on the simplest track to formulate the problem in the right way:
\begin{itemize}
\item The mental image of the experiment is complex: particle 1(2) is not ``the particle that will be measured in the localized region called Alice's (Bob's) lab". There must be another label to identify the subsystems, for instance the type (particle 1 is a proton, particle 2 an electron), or the state of an internal degree of freedom (particle 1 is spin up, particle 2 is spin down). Here we can appreciate the contribution of Bohm, who made it possible to envisage observations in space-like separated labs by describing entanglement in internal degrees of freedom rather than in position and momentum.
\item Even if EPR had been aware of the inequalities and had tried to violate them, they were handling one of the most difficult states for the task. Indeed, the EPR state has a positive Wigner function: so, as long as both particles undergo arbitrarily many measurements of the type $\cos\theta x+\sin\theta p$, there exist a LV model that describes all the statistics. In order to violate Bell inequalities, the theorist may rewrite the EPR state in the basis of eigenstates of the harmonic oscillator, thus going back to the discrete-variable formalism in which we proved Gisin's theorem.
\end{itemize}

Ultimately, I think that we must admit the evidence: EPR had a great intuition and pinpointed a crucial feature of quantum physics, but it took the work of Bell to put this intuition in its suitable conceptual framework.

\section{The tortuous path to device-independence}
\label{sspathdi}

I have presented the notion of device-independent assessment as a \textit{natural} consequence of the meaning of Bell inequalities --- and such, I am convinced, it is. However, as it often happens, we human beings need some time to straighten our thoughts. Here I am going to describe how we reached there.

\subsection{Prehistory}

No scientific idea is born out of nothing: there are hints, precursors, that someone may want to call ``lost opportunities" but really just prove that ground must be broken before a seed can become a tree.

The fact that $S=2\sqrt{2}$ identifies the singlet up to local isometries was proved as early as 1992 \cite{pr92,bmr92}, as the answer to the question ``which quantum states violate CHSH maximally". Nobody thought of turning it around and using CHSH to certify the presence of those states. Curiously, when someone had that idea, they did not use Bell inequalities: Mayers and Yao proposed their own scheme in 1998 and invented the term ``self-testing" for the occasion \cite{my04}. The Mayers-Yao papers contained all it gets to ignite the revolution, including the motivation by quantum cryptography; but failed to. I can only try and guess why: quantum cryptography, it was still in its very early days. Not many could follow Mayers' very complicated proof of ``unconditional security" that assumed qubits and well-defined commutation relations. It is safe to guess that even fewer people would embark in studying an equivalently complicated proof of self-testing apparatuses!

The Vienna-Gdansk collaboration were among those who, like me, were still clinging to the hope that Bell was not a piece for a museum. In 2004 they published a paper relating Bell inequalities to communication complexity games \cite{vienna}. With hindsight, the approach is a bit artificial: the games were defined in terms of Bell inequalities, nobody would have invented them otherwise. Still, I can't refrain form quoting the last sentence of the abstract: ``Thus, violation of Bell's inequalities has a significance beyond that of a non-optimal-witness of non-separability".

As for myself, the first contact with device-independence dates from a visit that Nicolas Gisin and I were paying to Sandu Popescu in Bristol, probably in 2002. Nicolas, who was starting to go commercial with quantum cryptography, was concerned about certification. We were having a moderately calm discussion about this topic, Sandu being notoriously not excited by those practical developments and participating more out of friendship. The conclusion was simple: the ultimate certification must come from Bell inequalities. There and then, this conclusion, obvious for anyone who has understood Bell, did not seem particularly deep to us. We had simply in mind a two-step procedure, in which one would first test the quantumness of the device using Bell, then run a usual protocol: not an idea to write a single paper about. Why did we not think further? Nicolas and I just don't know, Sandu has probably forgotten the incident.

\subsection{Making history}

The turning point seems to be the year 2004. Quantum cryptography had grown to a quite mature field, with several experimental groups improving their technologies and theorists harnessing the techniques of security proofs. Also, some more abstractly-minded theorists had started playing with PR-boxes and their generalizations. In this context, Barrett, Hardy and Kent had a remarkable insight: maybe one can prove security based only on no-signaling. This would mean that ``quantum" cryptography would actually remain secure in all post-quantum theories that do not allow signaling by mere observation. They found the first example of such protocol \cite{bhk}.

Nicolas Gisin, who was my boss at that time, took their result very seriously and started working on it with Toni Ac\'{\i}n and Lluis Masanes from Barcelona. Among other things, they noticed something that the whole quantum information community had got wrong for years. By common knowledge, the BB84 and the Ekert protocols for quantum cryptography were considered equivalent, one being just an entanglement-based version of the other. But this is true only under the assumption that one is using two-qubit states and mutually unbiased measurements. In reality, the Ekert protocol can provide device-independent security, because it is based on CHSH; while the security of BB84 and its entanglement-based version BBM is fully compromised in a device-independent scenario, because (as we mentioned in the text) its perfect statistics can be reproduced with LV with a sufficiently large alphabet\footnote{It took some time for this obvious statement to be digested by those who had built their careers on proving that BB84 is ``unconditionally secure".}.

I had not been part of the discussions in reason of a momentary estrangement from Nicolas --- he had mentioned this research to me, but manifestly I had other things in mind. A few days after I came back in focus, Nicolas shared with me the first draft of what I realized would become a milestone paper \cite{agm}. I was enraged with myself, it was too late to do anything to add my name there: fortunately, I managed to give a positive turn to my rage and, in a few weeks, had produced most of the generalizations that appear in the full-length paper \cite{pra06}.

At the same time, Toni shared with me what he would consider as the next goal: to prove device-independent security against a quantum adversary. Indeed, security against no-signaling adversaries is conceptually appealing, but does not seem to be an urgent concern; moreover, it gives quite bad bounds. On the contrary, security against a ``normal" quantum adversary would be highly relevant for blind certification: nothing else than the idea Nicolas and I had discussed in Bristol, but reloaded in a fully new context. This time, I did not miss the chance, joined the effort from the start and ended up producing the core of the proof together with Serge Massar and Stefano Pironio \cite{acin07}. This is the first paper in which the wording ``device-independent" is actually used: if my memory does not betray me, the term must be attributed to Toni.

\subsection{Developments}

Our collaboration has had the good taste of leaving some assumptions in the security proofs: quite a number of prominent researchers worked for some years to close that gap. In the mean-time, I had had the idea of source certification and was introduced to the works of Mayers and Yao; eventually, Matthew McKague would crack the self-testing code by developing a simple version of the proofs, as we have seen. The idea of device-independent randomness certification, \textit{a priori} much simpler and more straightforward than cryptography, was brought forward only later: the merit must be shared between Roger Colbeck \cite{colbeck} and my previous co-authors Toni, Serge and Stefano \cite{randomnature}.

The device-independent program has also boosted the experimental quest for closing the detection loophole, something that was previously considered anecdotic at best (indeed, nobody had ever taken seriously a conspiracy of the detectors, but everyone should doubt the behavior of a black box). There are hints in the literature and in the corridors of conferences that the loophole-free Bell test is upcoming.

All these developments can be gathered from \cite{ourreview} or by browsing the arXiv.

\end{appendix}

\end{document}